\newcommand{\Cb}{\mathbb{C}}
\newcommand{\Rb}{\mathbb{R}}
\newcommand{\Wb}{\mathbb{W}}
\newcommand{\ii}{\mathrm{i}}
\newcommand{\ee}{\mathrm{e}}
\newcommand{\dd}{\,\mathrm{d}}
\renewcommand{\phi}{\varphi}
\renewcommand{\theta}{\vartheta}
\newcommand{\p}{\partial}
\newcommand{\ie}{\emph{i.e.}}
\newcommand{\cf}{\emph{cf.}}
\newcommand{\etal}{\emph{et al.}}
\newcommand{\Real}{\mathbb{R}}
\newcommand{\Com}{\mathbb{C}}
\newcommand{\Nat}{\mathbb{N}}
\newcommand{\Int}{\mathbb{Z}}
\newcommand{\supp}{\mathop{\mathrm{supp}}\nolimits}
\newcommand{\Dom}{\mathsf{D}}
\newcommand{\diag}{\mathop{\mathrm{diag}}\nolimits}
\newcommand{\curl}{\mathop{\mathrm{curl}}\nolimits}
\newcommand{\ph}{\mathop{\mathrm{ph}}\nolimits}
\newcommand{\eps}{\varepsilon}
\newcommand{\sii}{L^2}
\newtheorem{Theorem}{Theorem}
\newtheorem{Lemma}{Lemma}
\newtheorem{Proposition}{Proposition}
\newtheorem{Corollary}{Corollary}
\newtheorem{Assumption}{Assumption}
\theoremstyle{definition}
\newtheorem{Remark}{Remark}
\definecolor{DarkGreen}{rgb}{0,0.5,0.1} 
\newcommand\soutD{\bgroup\markoverwith
{\textcolor{DarkGreen}{\rule[.5ex]{2pt}{1pt}}}\ULon}
\newcommand\soutP{\bgroup\markoverwith
{\textcolor{blue}{\rule[.5ex]{2pt}{1pt}}}\ULon}
\newcommand{\Hm}[1]{\leavevmode{\marginpar{\tiny%
$\hbox to 0mm{\hspace*{-0.5mm}$\leftarrow$\hss}%
\vcenter{\vrule depth 0.1mm height 0.1mm width \the\marginparwidth}%
\hbox to
0mm{\hss$\rightarrow$\hspace*{-0.5mm}}$\\\relax\raggedright #1}}}
\begin{document}
%
%
\title{\textbf{\LARGE
Virtual bound states of the Pauli operator 
with an Aharonov--Bohm potential
}}
\author{Marie Fialov\'a\,$^a$ \ and \ David Krej\v{c}i\v{r}{\'\i}k\,$^b$}
\date{\small 
\vspace{-5ex}
\begin{quote}
\emph{
\begin{itemize}
\item[$a)$] 
Institute of Science and Technology Austria,
Am Campus 1, 3400 Klosterneuburg, Austria;
Marie.Fialova@ist.ac.at
\item[$b)$] 
Department of Mathematics, Faculty of Nuclear Sciences and 
Physical Engineering, Czech Technical University in Prague, 
Trojanova 13, 12000 Prague 2, Czechia;
David.Krejcirik@fjfi.cvut.cz.%
\end{itemize}
}
\end{quote}
28 January 2025
}
\maketitle
\begin{abstract}
A maximal realisation of the two-dimensional Pauli operator,
subject to Aharonov--Bohm magnetic field, is investigated. 
Contrary to the case of the Pauli operator 
with regular magnetic potentials, 
it is shown that both components of the Pauli operator are critical.
Asymptotics of the weakly coupled eigenvalues, 
generated by electric (not necessarily self-adjoint) perturbations,
are derived.
\end{abstract}
%

\section{Introduction}
%
In non-relativistic quantum mechanics, 
a free spinless particle in the plane is described 
by the self-adjoint realisation of the Laplacian $-\Delta$ in $\sii(\Real^2)$.
It is well known that the two-dimensional Laplacian is \emph{critical},
in the sense that its spectrum is unstable under small perturbations.
More specifically, $-\Delta + v$ possesses a negative eigenvalue 
whenever $v \in C_0^\infty(\Real^2)$ is attractive 
(\ie, non-trivial and non-positive).  
In physical terms, $-\Delta$ admits a \emph{virtual bound state} at zero energy,
meaning that, while the spectrum of $-\Delta$ is purely absolutely continuous,
the singularity of the Green function in the spectral parameter 
leads to the genuine eigenvalue under 
the arbitrarily small attractive electric perturbations~$v$.
More specifically, the \emph{weakly coupled} asymptotics 
\begin{equation}\label{exponential}
  \inf\sigma(-\Delta + \eps v) 
  \sim - 
  \mbox{$
  \exp\left(\left[
  \frac{\eps}{4\pi} \int_{\Real^2} v\right]^{-1}
  \right) 
  $}
  \qquad \mbox{as} \qquad 
  \eps \to 0^+
\end{equation}
holds true.
As the amount of literature on the subject is vast,
we mention only~\cite{Si76} for the pioneering work,
\cite{BGS77,KS80} for the earlier papers
and \cite{HS10,CM21,CD24,CDG24} for the most recent contributions.
  
Switching on the magnetic field, the situation changes dramatically
because of the \emph{diamagnetic} effect. 
Mathematically, the magnetic Laplacian $-\Delta_a := (-i\nabla-a)^2$ 
in $\sii(\Real^2)$ becomes \emph{subcritical} 
for any smooth vector potential $a :\Real^2\to\Real^2$
such that the magnetic field $b := \curl a$ is non-trivial. 
Here the subcriticality means the existence of a Hardy inequality
$-\Delta_a \geq \rho$ with positive $\rho:\Real^2 \to \Real$,
which is equivalent to the fact that 
the singularity of the Green function disappears. 
Then $\inf\sigma(-\Delta_a + \eps v) \geq 0$ 
for all sufficiently small~$\eps$,
so there are no negative weakly coupled eigenvalues. 
In fact, this change of game equally applies to the singular 
Aharonov--Bohm potential 
\begin{equation}\label{potential}
  a_\alpha(x) := \alpha \, \frac{(x^2,-x^1)}{|x|^2}
  \qquad \mbox{with} \qquad 
  \alpha \in \Real
\end{equation}
whenever $\alpha \not\in \Int$.
This is particularly spectacular 
because $b_\alpha := \curl a_\alpha$ vanishes almost everywhere in the plane.
In fact, $b_\alpha = -2\pi\alpha \delta$ 
in the sense of distributions, where~$\delta$ is the zero-centred Dirac function.
Again, we mention only~\cite{Laptev-Weidl_1999} for the pioneering work
and \cite{Pankrashkin-Richard_2011,
CK,CFK,Fanelli-Zhang-Zheng_2023,FK24,Fer22,CF23,Fermi_2024} 
for the most recent contributions.

For particles with spin, however, a more realistic (yet still non-relativistic)
description is through the Pauli operator 
\begin{equation}\label{Hamiltonian}
  H_a :=
  \begin{pmatrix}
    -\Delta_a-b & 0 \\
    0 & -\Delta_a + b
  \end{pmatrix} 
  \qquad \mbox{in} \qquad 
  \sii(\Real^2,\Com^2)
  \,,
\end{equation}
subject to matrix-valued potential perturbations 
$V:\Real^2 \to \Com^{2 \times 2}$.
It turns out that these systems exhibit the \emph{paramagnetic} effect.
Indeed, it is known~\cite{Wei99}
(though perhaps less than in the magnetic-free spinless case above)
that~$H_a$ with any regular potential~$a$
does admit a virtual bound state at zero energy. 
Contrary to the superfast exponential decay~\eqref{exponential}, however,
the weak coupling is stronger now:
\begin{equation}\label{polynomial}
  \inf\sigma(H_a + \eps v I_{\Com^2}) 
  \sim - C \, \eps^{1/|\Phi_b|} 
  \qquad \mbox{as} \qquad 
  \eps \to 0^+
  \,,
\end{equation}
where~$C$ is a constant depending on~$v$ and~$b$,
$
  \Phi_b := \frac{1}{2\pi} \int_{\Real^2} b
$
is the total flux of~$b$ and it is assumed that $|\Phi_b| \in (0,1)$.
The eigenvalue  asymptotics~\eqref{polynomial} 
are due to~\cite{Frank-Morozov-Vugalter_2011}
and~\cite{Kovarik_2022,Baur} in radial and general cases, respectively.
(See also \cite{Frank-Kovarik} for related Lieb--Thirring inequalities.)
Of course, the existence of the one virtual bound state 
(and thus correspondingly unique weakly coupled eigenvalue)
is due to the fact that one (and only one) of the operators 
$-\Delta_a \mp b$ appearing in~\eqref{Hamiltonian} is critical
(the choice~$\mp$ depends on the sign of~$\Phi_b$,
namely $-\Delta_a - b$ is critical if $\Phi_b > 0$).

The objective of this paper is to analyse the criticality properties
and weakly coupled eigenvalues of the Pauli operator~\eqref{Hamiltonian}
in the highly singular situation of 
the Aharonov--Bohm potential~\eqref{potential},
formally acting by 
\begin{equation}\label{AB-Hamiltonian}
  H_{a_\alpha} :=
  \begin{pmatrix}
    -\Delta_{a_\alpha}+2\pi\alpha \delta & 0 \\
    0 & -\Delta_{a_\alpha} - 2\pi\alpha \delta
  \end{pmatrix} 
  \qquad \mbox{in} \qquad 
  \sii(\Real^2,\Com^2)
  \,.
\end{equation}
Note that $\Phi_{b_\alpha} = -\alpha$ for the choice~\eqref{potential}.
The operator~$H_{a_\alpha}$ is rigorously introduced by considering 
a self-adjoint realisation of the Aharonov–Bohm Laplacian 
with delta-type interactions 
$H_\alpha^\pm := -\Delta_{a_\alpha} \pm 2\pi\alpha \delta$
due to \v{S}\v{t}ov\'i\v{c}ek \etal\
\cite{DS98,Geyler-Stovicek_2004,Geyler-Stovicek_2004_two_fluxes}. 

Our motivation is two-fold. 
First, contrary to the case of regular potentials 
considered in~\cite{Frank-Morozov-Vugalter_2011,Kovarik_2022,Baur},
the resolvent kernel of the unperturbed Hamiltonian~$H_{a_\alpha}$
is known explicitly. 
Consequently, the standard Birman--Schwinger analysis 
for the study of weakly coupled eigenvalues is available,
so it is possible to avoid the radial hypothesis on~$v$ 
as well as the necessity of advanced resolvent expansions 
due to~\cite{Frank-Morozov-Vugalter_2011} 
and~\cite{Kovarik_2022,Baur}, respectively.
Second, inspired by a recent interest in quantum Hamiltonians
with complex electromagnetic fields,
our setting allows for considering complex-valued~$v$
(and in fact more general matrix-valued perturbations~$V$,
though we do not pursue this research in this paper). 
In summary, comparing with the precedent 
papers~\cite{Frank-Morozov-Vugalter_2011,Kovarik_2022,Baur},
our choice of the magnetic potential is special,
but the electric perturbations are allowed to be more general.

In agreement with the case of regular magnetic potentials,
the main result of our analysis shows~$H_{a_\alpha}$ is critical
whenever $\alpha \not\in \Int$ (by gauge invariance,
one may restrict to $\alpha \in (0,1)$).  
However, there are always \emph{two} virtual bound states now.
Indeed, a variant of our main result can be stated as follows.
\begin{Theorem}\label{Thm.main}
Let $\alpha \in (0,1)$
and $v \in C_0^\infty(\Real^2)$ be non-trivial and non-positive.
Then
\begin{equation}\label{ours}
\begin{aligned}
  \inf\sigma(H_\alpha^+ + \eps v) 
  &\sim - C^+ \, \eps^{1/(1-\alpha)} 
  \,,
  \\
  \inf\sigma(H_\alpha^- + \eps v) 
  &\sim - C^- \, \eps^{1/\alpha} 
  \,,
\end{aligned}  
  \qquad \mbox{as} \qquad 
  \eps \to 0^+
  \,,
\end{equation}
where~$C^\pm$ are positive constants depending on~$v$ and~$\alpha$.
\end{Theorem}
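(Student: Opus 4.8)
The plan is to combine the (self-adjoint) Birman--Schwinger principle with the explicit resolvent kernel of $H_\alpha^\pm$ recalled above. Since $v \le 0$, set $w := |v|^{1/2} = (-v)^{1/2} \in C_0^\infty(\Real^2)$ and, for $\lambda > 0$, introduce the Birman--Schwinger operator
\begin{equation*}
  K_\pm(\lambda) := w\,(H_\alpha^\pm+\lambda)^{-1}\,w \qquad \text{in} \qquad \sii(\Real^2).
\end{equation*}
Because $H_\alpha^\pm$ is non-negative with essential spectrum $[0,\infty)$ and $w$ is bounded with compact support, $K_\pm(\lambda)$ is non-negative, self-adjoint and compact (indeed Hilbert--Schmidt, since $G_\lambda^\pm(\cdot,y)$ is locally square-integrable, uniformly for $y \in \supp w$). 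By the Birman--Schwinger principle, and since $v$ is relatively compact so that $\sigma_{\mathrm{ess}}(H_\alpha^\pm+\eps v)=[0,\infty)$ as well, for $\eps>0$ the operator $H_\alpha^\pm+\eps v$ has $-\lambda$ as its lowest eigenvalue if and only if $\mu_\pm(\lambda) := \max\sigma(K_\pm(\lambda))$ equals $1/\eps$. The map $\lambda \mapsto \mu_\pm(\lambda)$ is continuous and strictly decreasing on $(0,\infty)$, with $\mu_\pm(+\infty)=0$; the divergence $\mu_\pm(0^+)=+\infty$, i.e.\ the criticality of $H_\alpha^\pm$, will come out of the expansion in the next step. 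Consequently, for each $\eps>0$ there is a unique $\lambda_\pm(\eps)$ solving $\eps\,\mu_\pm(\lambda_\pm(\eps))=1$, it tends to $0$ as $\eps\to0^+$, and $\inf\sigma(H_\alpha^\pm+\eps v)=-\lambda_\pm(\eps)$. It therefore suffices to determine the rate at which $\mu_\pm(\lambda)$ blows up as $\lambda\to0^+$.

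The analytic core is a singular expansion of the sandwiched resolvent. Starting from the closed-form expression for $G_\lambda^\pm$, I would insert the small-argument asymptotics of the modified Bessel functions $I_\nu$, $K_\nu$ appearing in its partial-wave decomposition. Among the angular-momentum channels, exactly one --- the channel carrying the non-trivial boundary condition at the flux point that distinguishes $H_\alpha^\pm$ from the free Aharonov--Bohm Laplacian --- produces a contribution that diverges as $\lambda\to0^+$; all other channels stay bounded after being sandwiched by the compactly supported $w$, and their sum can be controlled in Hilbert--Schmidt norm uniformly in $\lambda\in(0,1]$. The outcome I expect is
\begin{equation*}
  K_\pm(\lambda) = c_\pm\,\lambda^{-\gamma_\pm}\,\langle w f_\pm,\,\cdot\,\rangle\, w f_\pm \;+\; o\!\left(\lambda^{-\gamma_\pm}\right) \qquad \text{as} \qquad \lambda\to0^+,
\end{equation*}
in operator norm, where $\gamma_+ := 1-\alpha$, $\gamma_- := \alpha$, the constants $c_\pm>0$ are explicit, and $f_\pm$ is the (non-$\sii$) zero-energy resonance of $H_\alpha^\pm$ in the distinguished channel, whose modulus is of order $|x|^{-\gamma_\pm}$ and which is locally square-integrable (as $\gamma_\pm\in(0,1)$), so that $wf_\pm\in\sii(\Real^2)$ and $\|wf_\pm\|^2=\int_{\Real^2}|v|\,|f_\pm|^2>0$.

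Granting this expansion, the conclusion follows from rank-one perturbation theory: the top eigenvalue of a non-negative operator equal to $c_\pm\lambda^{-\gamma_\pm}$ times a rank-one term plus a remainder of smaller norm satisfies $\mu_\pm(\lambda)=c_\pm\lambda^{-\gamma_\pm}\|wf_\pm\|^2\,(1+o(1))$ as $\lambda\to0^+$ (in particular $\mu_\pm(0^+)=+\infty$). Substituting into $\eps\,\mu_\pm(\lambda_\pm(\eps))=1$ and solving for $\lambda_\pm(\eps)$ yields $\lambda_\pm(\eps)=\big(c_\pm\|wf_\pm\|^2\,\eps\big)^{1/\gamma_\pm}(1+o(1))$, hence
\begin{equation*}
  \inf\sigma(H_\alpha^\pm+\eps v) = -\lambda_\pm(\eps) \sim -\Big(c_\pm\!\int_{\Real^2}|v|\,|f_\pm|^2\Big)^{1/\gamma_\pm}\eps^{1/\gamma_\pm} \qquad \text{as} \qquad \eps\to0^+,
\end{equation*}
which is \eqref{ours} with $C^\pm=\big(c_\pm\int_{\Real^2}|v|\,|f_\pm|^2\big)^{1/\gamma_\pm}$ and the stated exponents $1/(1-\alpha)$ and $1/\alpha$.

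The main obstacle is the second step: extracting the precise leading singular term of $G_\lambda^\pm$ as $\lambda\to0^+$, identifying it as rank one with exponent $\gamma_\pm$, and bounding the remaining partial-wave series uniformly. In particular one must check that the \emph{repulsive} $\delta$ in $H_\alpha^+$ still forces a genuine power-law --- rather than no --- singularity, which is precisely the mechanism behind the second virtual bound state. This is where the availability of the closed-form kernel is decisive: it reduces the delicate resolvent expansions used in \cite{Frank-Morozov-Vugalter_2011,Kovarik_2022,Baur} to explicit estimates on Bessel functions, removes any radiality hypothesis on $v$, and is also what would permit the same scheme to run for complex-valued $v$ via the non-self-adjoint Birman--Schwinger principle.
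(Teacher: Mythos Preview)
Your proposal is correct and follows essentially the same route as the paper: both use the Birman--Schwinger principle and split the sandwiched resolvent into a rank-one singular piece (with exponents $\gamma_+=1-\alpha$, $\gamma_-=\alpha$, the projection vectors being precisely the virtual bound states~$\phi^\pm$ of~\eqref{eq:virtual_states}) plus a remainder that is bounded in Hilbert--Schmidt norm uniformly as $\lambda\to0^+$. The only presentational difference is that the paper, aiming at the more general non-self-adjoint Theorem~\ref{thm:asymptotics}, uses the factorisation $(1+\epsilon Q_z)(1+\epsilon(1+\epsilon Q_z)^{-1}L_z)$ and reduces to an implicit scalar equation rather than your top-eigenvalue inversion, and it carries out in full the uniform remainder bound (Lemma~\ref{Le:Q_is_HS}) that you correctly flag as the main obstacle.
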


The asymptotics~\eqref{ours} 
remain essentially the same for complex-valued~$v$,
under suitable hypotheses imposed on integrals involving~$v$ 
(the extension to non-diagonal matrix-valued perturbations~$V$
of~$H_{a_\alpha}$ is left open in this paper). 
At the same time, we substantially relax 
the regularity and sign restrictions on real-valued potentials~$v$ below.
See Theorem~\ref{thm:asymptotics} for our main general result.

The existence of two weakly coupled eigenvalues~\eqref{ours},
contrary to the merely one in the case of regular potentials considered 
in~\cite{Frank-Morozov-Vugalter_2011,Kovarik_2022,Baur},
might seem controversial at a first glance,
arguing that the Aharonov--Bohm potential can be approximated 
by the regular potentials.
However, this approximation on the operator level is known  
to be delicate~\cite{Mor96,BV93,BP03,Per05,Per06} 
(see also~\cite{Tam03} for Dirac operators).  
As a matter of fact, there exist several realisations 
of the Pauli operator~\eqref{AB-Hamiltonian}
with the Aharonov--Bohm potential~\eqref{AB-Hamiltonian}
(see~\cite{BCF23} for a recent complete study),
each has its pro and con from the point of view of physical properties,
however, no common agreement on the right choice 
seem to exist in the community.
Our paper provides a negative answer to 
a question of Persson's \cite[Sec.~4]{Per05}
about the possibility of the approximation by regular potentials
for our self-adjoint realisation.
On the positive side, comparing the constant~$C^-$
from the first of our asymptotics~\eqref{ours}
with the constant~$C$ of~\eqref{polynomial} 
due to \cite{Frank-Morozov-Vugalter_2011,Kovarik_2022,Baur},
these weakly coupled eigenvalues quantitatively match.  

The paper is organised as follows.
The singular Pauli operator~$H_{a_\alpha}$ 
formally written in~\eqref{AB-Hamiltonian}
is rigorously introduced in Section~\ref{Sec.operator}
via the method of self-adjoint extensions of symmetric operators.
Its Green function is determined in Section~\ref{Sec.Green}.
In Section~\ref{Sec.BS}, we start to develop 
the Birman--Schwinger analysis, which enables one to reduce
the study of eigenvalues of the differential operator~$H_{a_\alpha}$
subject to perturbations  to the study of an integral operator.
This is completed in Section~\ref{Sec.weak} 
by reducing everything to a matrix eigenvalue problem
and further to an implicit equation,
obtaining in this way the asymptotics of the weakly coupled 
eigenvalues.

\section{The Pauli operator via the extension theory}\label{Sec.operator}
%
The goal of this section is to rigorously introduce 
the singular Pauli operator~\eqref{AB-Hamiltonian}
and state its basic criticality properties. 
From now on, we abbreviate $H_\alpha := H_{a_\alpha}$.

Recall that the magnetic field associated with
the Aharonov--Bohm potential~\eqref{potential} is the distribution
$
  b_\alpha := \curl a_\alpha 
  = \partial_{x^1} a_\alpha^2 - \partial_{x^2} a_\alpha^1
  = -2\pi \alpha \delta
$.
It is therefore natural to introduce the operator~\eqref{AB-Hamiltonian} 
via the methods of extension theory of symmetric operators.
We follow the approach of
\cite{Geyler-Stovicek_2004_two_fluxes,Geyler-Stovicek_2004},
which is based on the factorisations
\begin{equation*}
  H^+ = T_- T_+ 
  \qquad \mbox{and} \qquad
  H^- = T_+ T_- 
  \,,
\end{equation*}
considered as operator identities in 
$
  \mathcal{D}(\Real^2 \setminus \{0\}) :=
  C_0^\infty(\Real^2 \setminus \{0\})
$,
where
\begin{equation*}
  T_\pm \coloneqq -i (\partial_{x^{1}} \pm i\partial_{x^{2}}) 
  - (a_\alpha^{1} \pm i a_\alpha^{2}) 
  =
  e^{\pm i \theta} 
  \left(
  -i\partial_r \pm i \frac{\alpha}{r} \pm \frac{\partial_\theta}{r}
  \right)
  \,.
\end{equation*}
Here the second equality follows by the usage of polar coordinates 
$(r, \theta)\in [0, \infty)\times (-\pi, \pi)$
defined by $(x^{1},x^2) =: (r \cos \theta, r \sin \theta)$.
We use the same symbols~$T_\pm$ for the extensions 
(denoted by $\tilde{T}_\pm$ in~\cite{Geyler-Stovicek_2004}) 
to the space of distributions $\mathcal{D}'(\Real^2 \setminus \{0\})$.

The \emph{minimal} realisations
$H^\pm_\mathrm{min} \coloneqq T_\pm^* \bar{T}_\pm$
are associated with the closure of the quadratic forms
\begin{equation*}
  h^\pm_\mathrm{min}[\psi]
  \coloneqq \|T_\pm\psi\|^2
  \,, \qquad 
  \Dom(h^\pm_\mathrm{min}) 
  \coloneqq  C_0^\infty(\Real^2 \setminus \{0\})
  \,.
\end{equation*}
It turns out (\cf~\cite[Lem.~5.2]{Geyler-Stovicek_2004}) 
that $H^+_\mathrm{min} = H^-_\mathrm{min} = -\Delta_{a_\alpha}$,
where~$-\Delta_{a_\alpha}$ is the usual magnetic Laplacian 
with the Aharonov--Bohm potential
(\ie~the Friedrichs extension of this operator
initially defined on $C_0^\infty(\Real^2 \setminus \{0\})$,
see~\cite{K7}).  
It is well known \cite{Laptev-Weidl_1999}
that~~$-\Delta_{a_\alpha}$ is subcritical 
(\ie~there is a Hardy-type inequality) whenever $\alpha \not\in \Int$.

The object of our interest are the \emph{maximal} realisations
$H^\pm_\mathrm{max} \coloneqq \bar{T}_\mp T_\mp^*$
associated with the closure of the quadratic forms
(\cf~\cite[Lem.~5.4]{Geyler-Stovicek_2004})
\begin{equation*}
  h^\pm_\mathrm{max}[\psi]
  \coloneqq \|T_\pm\psi\|^2
  \,, \qquad 
  \Dom(h^\pm_\mathrm{max}) 
  \coloneqq \{\psi \in \sii(\Real^2) : \ T_\pm\psi \in \sii(\Real^2)\}
  \,.
\end{equation*}
We then define 
\begin{equation}\label{direct}
  H_\alpha := H^+_\mathrm{max} \oplus H^-_\mathrm{max}
  \qquad \mbox{in} \qquad
  \sii(\Real^2,\Com^2) \cong \sii(\Real^2) \oplus \sii(\Real^2)
  \,.
\end{equation}

Contrary to the case of regular magnetic fields~\cite{Kovarik_2022}
(when the Pauli operator is essentially self-adjoint),
it turns out that \emph{both} $H^\pm_\mathrm{max}$ are critical
(\ie~they satisfy no Hardy-type inequality).
To see it, let us restrict from now on, without loss of generality
(by a gauge invariance, see \cite[Sec.~6]{Geyler-Stovicek_2004} 
and references therein or \cite[Prop.~3.1]{Per05}), 
to 
\begin{equation*}
  \alpha \in (0,1) 
  \,.
\end{equation*}
Then the virtual bound states~$\phi^\pm$ of~$H^\pm_\mathrm{max}$
are given by 
(note that~$\phi^-$ is related to~$\Omega_0^-$ of
\cite{Frank-Morozov-Vugalter_2011})
\begin{equation}\label{eq:virtual_states}
  \phi^-(x) \coloneqq  
    r^{-\alpha} 
  \qquad \mbox{and} \qquad
  \phi^+(x) \coloneqq 
    r^{-(1-\alpha)} e^{-i \theta} \,.
\end{equation}
More specifically, it is easy to check that 
$\phi^\pm \in \sii_\mathrm{loc}(\Real^2)$
and $T_\pm \phi^\pm = 0$ in the sense of distributions.
Of course, $\phi^\pm \not\in \sii(\Real^2)$,
however, the boundedness of~$\phi^\pm$ off the origin
enables one to apply the usual approximation procedure.

\begin{Lemma}\label{Lem.convergence}
There exists a sequence 
$\{\phi_n^\pm\}_{n \in \Nat} \subset \Dom(h^\pm_\mathrm{max})$
converging to $\phi^\pm$ pointwise and satisfying 
\begin{equation*}
  \lim_{n\to\infty} h^\pm_\mathrm{max}[\phi_n^\pm] = 0
  \,.
\end{equation*}
\end{Lemma}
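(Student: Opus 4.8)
The plan is to mollify the virtual state $\phi^\pm$ both near the origin (where it is singular, but only mildly, of the form $r^{-\alpha}$ or $r^{-(1-\alpha)}$ with exponent in $(-1,0)$, hence in $\sii_{\mathrm{loc}}$) and near infinity (where it is bounded but not square-integrable), by inserting suitable radial cut-offs. Concretely, I would fix a smooth function $\chi:[0,\infty)\to[0,1]$ with $\chi\equiv 0$ on $[0,1]$ and $\chi\equiv 1$ on $[2,\infty)$, and set $\chi_n(r) := \chi(nr)$ for the inner cut-off and $\eta_n(r):=\chi(2-r/n)$ (or any analogous choice) for the outer cut-off, so that $\eta_n\equiv 1$ on $r\le n$ and $\eta_n\equiv 0$ on $r\ge 2n$. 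Then define
\begin{equation*}
  \phi_n^\pm := \chi_n\,\eta_n\,\phi^\pm \,.
\end{equation*}
Since $\chi_n\eta_n$ is smooth, compactly supported in $\Real^2\setminus\{0\}$, and $\phi^\pm$ is smooth away from the origin, each $\phi_n^\pm\in C_0^\infty(\Real^2\setminus\{0\})\subset\Dom(h^\pm_\mathrm{max})$, and clearly $\phi_n^\pm\to\phi^\pm$ pointwise by construction. It remains to control $h^\pm_\mathrm{max}[\phi_n^\pm]=\|T_\pm\phi_n^\pm\|^2$.

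The key computation uses the product (Leibniz) rule for the first-order operators $T_\pm$: since $\chi_n\eta_n$ is a scalar function and $T_\pm$ differs from $e^{\pm i\theta}(-i\partial_r\pm\cdots)$ only by a zeroth-order multiplication term, one has
\begin{equation*}
  T_\pm(\chi_n\eta_n\,\phi^\pm)
  = (\chi_n\eta_n)\,T_\pm\phi^\pm
  + \big({-i}\,e^{\pm i\theta}\,\partial_r(\chi_n\eta_n)\big)\,\phi^\pm \,.
\end{equation*}
The first term vanishes identically because $T_\pm\phi^\pm=0$. Hence $\|T_\pm\phi_n^\pm\|^2 = \int_{\Real^2} |\partial_r(\chi_n\eta_n)|^2 |\phi^\pm|^2$, and since the supports of $\partial_r\chi_n$ and $\partial_r\eta_n$ are disjoint for large $n$, this splits into an inner and an outer contribution. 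For the inner part, $|\partial_r\chi_n|\le C n \mathbf{1}_{[1/n,2/n]}(r)$ and $|\phi^\pm|^2\le C r^{-2\beta}$ with $\beta\in\{\alpha,1-\alpha\}\subset(0,1)$, so that contribution is bounded by $C n^2 \int_{1/n}^{2/n} r^{-2\beta}\,r\,\dd r = C\,n^{2\beta}\to 0$ since $\beta<1$ (the exponent $2-2\beta>0$). For the outer part, $|\partial_r\eta_n|\le C n^{-1}\mathbf{1}_{[n,2n]}(r)$ and $|\phi^\pm|^2\le C$ (boundedness off the origin), so it is bounded by $C\,n^{-2}\int_n^{2n} r\,\dd r = C\to$ — wait, this does not decay, which signals the one genuine subtlety.

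\textbf{The main obstacle} is precisely the outer cut-off: a single linear cut-off over the dyadic annulus $[n,2n]$ is \emph{not} enough to kill the outer energy, because $\phi^\pm$ is bounded but its ``mass'' $\int_{n}^{2n} r\,\dd r$ grows like $n^2$, exactly cancelling the $n^{-2}$ gain. The standard remedy — the logarithmic cut-off trick familiar from two-dimensional criticality arguments — is to replace the linear outer cut-off by one that interpolates over a much longer scale, e.g. $\eta_n$ equal to $1$ on $r\le n$, to $0$ on $r\ge n^2$, and logarithmic in between, $\eta_n(r) = \big(2\log n - \log r\big)/\log n$ for $n\le r\le n^2$, so that $|\partial_r\eta_n|\le C/(r\log n)$. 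Then the outer contribution is bounded by $\frac{C}{(\log n)^2}\int_n^{n^2}\frac{1}{r^2}\,r\,\dd r = \frac{C}{(\log n)^2}\log n = \frac{C}{\log n}\to 0$. With this choice all three requirements — membership in $\Dom(h^\pm_\mathrm{max})$, pointwise convergence, and vanishing of the form — hold, and the proof is complete. I would present the inner estimate in detail (it is the place where the condition $\alpha\in(0,1)$ is used, via $2\beta<2$) and merely recall the logarithmic trick for the outer part, citing the standard reference.
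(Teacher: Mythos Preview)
Your inner cut-off estimate is wrong. You compute the inner contribution as $C n^2\int_{1/n}^{2/n} r^{1-2\beta}\,\dd r \sim C n^{2\beta}$ and then assert this tends to zero, but $\beta\in(0,1)$ gives $2\beta>0$, so $n^{2\beta}\to\infty$ as $n\to\infty$. The observation $2-2\beta>0$ only ensures the integral is finite; it does not make the product decay. In fact a dyadic inner cut-off at any scale $\delta\to0$ contributes $\sim\delta^{-2\beta}\to\infty$, so this piece cannot be repaired by rescaling. The remedy is to drop the inner cut-off entirely: since $r^{-\beta}$ with $\beta<1$ is square-integrable near the origin in two dimensions, one has $\phi^\pm\in\sii_{\mathrm{loc}}(\Real^2)$; with only the outer cut-off, $\eta_n\phi^\pm$ is compactly supported and $T_\pm(\eta_n\phi^\pm)$ is supported in the outer annulus, so $\eta_n\phi^\pm\in\Dom(h^\pm_{\mathrm{max}})$ already. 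The maximal form domain does not require vanishing near the origin.

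For the outer cut-off your logarithmic trick does work, but it is unnecessary. You invoked only the crude bound $|\phi^\pm|^2\le C$ at infinity, overlooking that $|\phi^\pm|^2=r^{-2\beta}$ actually \emph{decays}. The paper exploits this decay with a plain dyadic cut-off $\xi_n(r)=\xi(r/n)$ (equal to~$1$ on $[0,n]$, to~$0$ on $[2n,\infty)$) and obtains directly
\[
  h^\pm_{\mathrm{max}}[\xi_n\phi^\pm]
  = 2\pi\int_n^{2n} r^{-2\beta}\,|\xi_n'(r)|^2\,r\,\dd r
  \le C\,n^{-2\beta}\to 0\,.
\]
Thus the paper's proof is a single outer dyadic truncation, with neither an inner cut-off nor a logarithmic refinement.
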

\begin{proof}
Let us consider a radial function $\xi \in C^\infty(\Real^2)$
satisfying $0 \leq \xi \leq 1$,
$\xi(r) = 1$ if $0\leq r\leq 1$ 
and $\xi(r) = 0$ if $r\geq 2$.
Here, with an abuse of notation, we write $\xi(r) = \xi(x)$ when $|x|=r$.
Set $\xi_n(r) := \xi(r/n)$ for every $n > 0$.
Since,
$T^+(\phi^+ \xi_n) = r^{\alpha-1} (-\ii \p_r\xi_n)$
with $\phi^{+}$ from \eqref{eq:virtual_states},  
we have
\begin{align*}
  h_\mathrm{max}^+ [\phi^+ \xi_n]
 		= 2\pi \int_{n}^{2n} r^{2\alpha-2}|\xi'_n(r)|^2r\dd r
 		= \frac{2\pi}{n^2} 
 		 \int_{n}^{2n} r^{2\alpha-1}|\xi'(r/n)|^2\dd r 
 		 \leq 
  \pi \, \|\xi'\|_\infty^2 \frac{4^{\alpha}-1}{\alpha} \, n^{2\alpha-2}	
 		 \,.
\end{align*}
Hence 
$h_\mathrm{max}^+ [\phi^+ \xi_n] \to 0$ as $n \to \infty$
whenever $\alpha \in (0,1)$.
Similarly, we have
$T^-(\phi^-\xi_n) = \ee^{-\ii \theta}r^{-\alpha}(-\ii \p_r \xi_n)$
and thus
\begin{align*}
  h_\mathrm{max}^- [\phi^- \xi_n]
 		= 2\pi \int_{n}^{2n} r^{-2\alpha}|\xi'_n(r)|^2r\dd r
 		= \frac{2\pi}{n^2} 
 		 \int_{n}^{2n} r^{-2\alpha+1}|\xi'(r/n)|^2\dd r 
 		 \leq 
  \pi \, \|\xi'\|_\infty^2 \frac{4^{1-\alpha}-1}{1-\alpha} 
  \, n^{-2\alpha}	
 		 \,.
\end{align*}
Hence again
$h_\mathrm{max}^- [\phi^- \xi_n] \to 0$ as $n \to \infty$
whenever $\alpha \in (0,1)$.
\end{proof}

With this auxiliary lemma, we now establish the desired claim.

\begin{Proposition}
The operators $H^\pm_\mathrm{max}$ are critical.
\end{Proposition}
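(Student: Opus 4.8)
The plan is to show that the forms $h^\pm_\mathrm{max}$ admit no positive Hardy weight, equivalently that zero is the bottom of the spectrum and is a \emph{threshold} resonance rather than being strictly away from the spectrum. The clean way is to use the null-sequence characterisation of criticality: a nonnegative form is \emph{subcritical} precisely when it is bounded below by a strictly positive weight, and this fails exactly when there exists a normalised (in some local sense) \emph{null sequence}, i.e.\ a sequence $\psi_n$ with $h[\psi_n]\to 0$ that does not converge to~$0$. Lemma~\ref{Lem.convergence} has already produced the candidate null sequences $\phi_n^\pm := \phi^\pm \xi_n$: they lie in $\Dom(h^\pm_\mathrm{max})$, converge pointwise to the nontrivial functions $\phi^\pm$ of~\eqref{eq:virtual_states}, and satisfy $h^\pm_\mathrm{max}[\phi_n^\pm]\to 0$. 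So the bulk of the argument is really a ``packaging'' step invoking criticality theory.

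First I would recall the definition adopted in the paper, namely that $H^\pm_\mathrm{max}$ (equivalently $h^\pm_\mathrm{max}$) is \emph{critical} if it satisfies no Hardy-type inequality $h^\pm_\mathrm{max}[\psi]\ge \int_{\Real^2} \rho\,|\psi|^2$ with a nonnegative, nontrivial (or positive a.e.)\ weight~$\rho$. Then I would argue by contradiction: suppose such a $\rho \geq 0$, $\rho \not\equiv 0$, exists. Apply the inequality to $\psi = \phi_n^\pm$ from Lemma~\ref{Lem.convergence}. The left-hand side tends to~$0$. For the right-hand side, since $\phi_n^\pm \to \phi^\pm$ pointwise and $0\le \xi_n \le 1$ with $\xi_n\nearrow 1$, and since $\rho|\phi^\pm|^2 \geq 0$ is measurable, Fatou's lemma (or monotone convergence, because $|\phi_n^\pm|^2 = \xi_n^2 |\phi^\pm|^2$ increases to $|\phi^\pm|^2$) gives
\begin{equation*}
  \liminf_{n\to\infty} \int_{\Real^2} \rho\,|\phi_n^\pm|^2
  \;\ge\; \int_{\Real^2} \rho\,|\phi^\pm|^2
  \;>\;0
  \,,
\end{equation*}
where positivity holds because $\rho$ is nontrivial and $\phi^\pm = r^{-\alpha}$ resp.\ $r^{-(1-\alpha)}e^{-i\theta}$ is strictly positive in modulus on all of $\Real^2\setminus\{0\}$ (a single point being a null set). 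This contradicts the inequality, since the left side $\to 0$ while the right side stays bounded below by a fixed positive number. Hence no such~$\rho$ exists and both $H^\pm_\mathrm{max}$ are critical.

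I do not expect a serious obstacle here; the only point requiring a little care is to make sure the notion of criticality used matches what Lemma~\ref{Lem.convergence} delivers. If one instead uses the more abstract ``null-sequence'' definition of criticality (e.g.\ in the Fitzsimmons/Pinchover--Tintarev style, or the generalised-ground-state formulation), then Lemma~\ref{Lem.convergence} already \emph{is} the required null sequence and one only needs to note that $\phi^\pm\not\equiv 0$ so the sequence does not converge to zero — making the proof essentially a one-line citation. I would present the Hardy-inequality/Fatou version above as it is self-contained and transparently rules out subcriticality, and then remark that $\phi^\pm$ play the role of the (generalised) ground state, which also explains why they are the natural ``virtual bound states'' entering the Birman--Schwinger analysis of the subsequent sections.
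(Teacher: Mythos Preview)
Your argument is correct and is essentially the same as the paper's: both proceed by contradiction, feed the sequence $\phi_n^\pm$ from Lemma~\ref{Lem.convergence} into the putative Hardy inequality, and use that $|\phi^\pm|>0$ a.e.\ to force $\rho=0$. The only cosmetic difference is that the paper localises to compact sets $K\subset\Real^2\setminus\{0\}$ (where $\phi_n^\pm=\phi^\pm$ for large $n$), whereas you pass to the limit globally via Fatou; note that your parenthetical appeal to monotone convergence is not quite justified since $\xi$ is not assumed monotone, but Fatou already suffices.
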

\begin{proof}
By contradiction, let us assume that there exists a non-trivial
non-negative function $\rho^\pm \in L^1_\mathrm{loc} (\Real^2)$ such that  
$H^\pm_\mathrm{max} \geq \rho^\pm$ 
in the sense of forms in $\sii(\Real^2)$.
Then, for any compact set $K \subset \Real^2 \setminus\{0\}$,
Lemma~\ref{Lem.convergence} implies
$\int_K \rho^\pm |\phi^\pm|^2 = 0$.
Since~$|\phi^\pm|$ are positive on arbitrary~$K$,
we conclude with the contradiction that $\rho^\pm=0$
almost everywhere in~$\Real^2$.
\end{proof}

Let us now characterise the domain of the Pauli operator~$H_\alpha$
via boundary conditions at the singularity $r=0$.
The action of the operators $H^{\pm}_{\mathrm{max}}$ 
and $-\Delta_{a_\alpha}$ coincide 
on $C_0^{\infty}(\Rb^2 \setminus \{0\})$.
Hence they are two different
self-adjoint extensions of the symmetric operator 
\begin{align}\label{eq:X_tilde}
	\tilde{X} 
		:=  -\Delta_{a_\alpha}
		= -\p_r^2 - r^{-1}\p_r +r^{-2}(-\ii \p_{\theta} + \alpha)^2 \,,
		\qquad
		\Dom(\tilde{X}) \coloneqq C_0^{\infty}(\Rb^2 \setminus \{0\}) \,.
\end{align}
Due to~\cite{Geyler-Stovicek_2004_two_fluxes},
one has the characterisation
\begin{subequations}\label{eq:domPauli}
	\begin{align}
		\label{eq:domH+}
		\Dom(H^+_{\mathrm{max}})
	 		&= 	\{ f \in \Dom(\tilde{X}^{\ast}) : \
	 			 		\Phi_2^{-1} (f) = \Phi_1^0(f) = 0
	 		\} \,, 
		\\
		\label{eq:domH-}
	 	\Dom(H^-_{\mathrm{max}})
	 		&= 	\{ f \in \Dom(\tilde{X}^{\ast}) : \ 
	 					\Phi_1^{-1} (f) = \Phi_2^0(f) = 0
	 		\} \,,
	\end{align}
\end{subequations}
where
\begin{align*}
	\Phi_1^{-1}(f) 
		&\coloneqq \lim_{r \rightarrow 0} r^{1-\alpha} \frac{1}{2\pi} \int_0^{2\pi} f(r, \theta) \ee^{\ii \theta} \dd \theta \,,	\\
	\Phi_2^{-1}(f) 
		&\coloneqq \lim_{r \rightarrow 0} r^{-1+\alpha} 
		\left(\frac{1}{2\pi} \int_0^{2\pi} f(r, \theta) \ee^{\ii \theta} \dd \theta - r^{-1+\alpha} \Phi_1^{-1}(f) \right)\,,	\\
	\Phi_1^{0}(f) 
		&\coloneqq \lim_{r \rightarrow 0} r^{\alpha} \frac{1}{2\pi} \int_0^{2\pi} f(r, \theta) \dd \theta \,,	\\
	\Phi_2^{0}(f) 
		&\coloneqq \lim_{r \rightarrow 0} r^{-\alpha} \left( \frac{1}{2\pi} \int_0^{2\pi} f(r, \theta) \dd \theta -r^{-\alpha} \Phi_1^0(f) \right) 
	 \,.
\end{align*}

It is not difficult to find the spectrum of the Pauli operator.
\begin{Proposition}\label{Prop.spectrum}
$
  \sigma(H_\alpha) = \sigma_\mathrm{ess}(H_\alpha) = [0,+\infty)
$.	 
\end{Proposition}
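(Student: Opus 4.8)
The plan is to establish the two equalities separately, relying on the factorisation structure and the known spectrum of the Aharonov--Bohm Laplacian. First I would observe that $H_\alpha = H^+_\mathrm{max} \oplus H^-_\mathrm{max}$ is a direct sum, so it suffices to prove $\sigma(H^\pm_\mathrm{max}) = \sigma_\mathrm{ess}(H^\pm_\mathrm{max}) = [0,+\infty)$ for each summand. Nonnegativity is immediate: both forms $h^\pm_\mathrm{max}[\psi] = \|T_\pm\psi\|^2 \geq 0$, so $\sigma(H^\pm_\mathrm{max}) \subset [0,+\infty)$. Thus the whole content is to show that every $\lambda \in [0,+\infty)$ belongs to $\sigma_\mathrm{ess}(H^\pm_\mathrm{max})$, since $\sigma_\mathrm{ess} \subset \sigma$ and both are then forced to equal $[0,+\infty)$.

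To capture the essential spectrum I would use the fact that $H^\pm_\mathrm{max}$ differs from the Friedrichs realisation $-\Delta_{a_\alpha} = H^\pm_\mathrm{min}$ only through a boundary condition at the single point $r=0$, i.e. the two operators are self-adjoint extensions of the symmetric operator $\tilde X$ from \eqref{eq:X_tilde} which has deficiency indices $(2,2)$ (as implicit in the domain characterisation \eqref{eq:domPauli}). Since the resolvent difference of two self-adjoint extensions of a symmetric operator with finite deficiency indices is finite-rank, Weyl's theorem on the invariance of the essential spectrum gives $\sigma_\mathrm{ess}(H^\pm_\mathrm{max}) = \sigma_\mathrm{ess}(-\Delta_{a_\alpha})$. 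It is classical \cite{Laptev-Weidl_1999} that $\sigma(-\Delta_{a_\alpha}) = \sigma_\mathrm{ess}(-\Delta_{a_\alpha}) = [0,+\infty)$ (the Aharonov--Bohm Laplacian is unitarily reducible to a direct sum of Bessel-type half-line operators, each with purely absolutely continuous spectrum $[0,+\infty)$). Combining these facts yields $[0,+\infty) = \sigma_\mathrm{ess}(H^\pm_\mathrm{max}) \subset \sigma(H^\pm_\mathrm{max}) \subset [0,+\infty)$, hence equality throughout, and taking the union over the two signs completes the proof.

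As an alternative, self-contained route avoiding citation of the spectrum of $-\Delta_{a_\alpha}$, I could construct singular Weyl sequences directly: for each $\lambda \geq 0$, pick approximate generalised eigenfunctions of $H^\pm_\mathrm{max}$ supported far from the origin (where the boundary condition at $r=0$ is invisible and the operator acts like the ordinary magnetic Laplacian), using the radial-angular separation $H^\pm_\mathrm{max} = \bigoplus_{m\in\Int} h_m$ with $h_m = -\p_r^2 - r^{-1}\p_r + r^{-2}(m+\alpha)^2$ on a suitable half-line; truncated, translated Bessel functions $J_{|m+\alpha|}(\sqrt\lambda\, r)$ then give a Weyl sequence showing $\lambda \in \sigma_\mathrm{ess}$. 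The main obstacle in either approach is bookkeeping rather than conceptual: one must be careful that the finite-rank resolvent-difference argument genuinely applies (equivalently, that $\tilde X$ has finite deficiency indices and all the operators in play are self-adjoint extensions of it), and in the Weyl-sequence variant one must verify the cut-offs do not disturb the defining boundary conditions of $H^\pm_\mathrm{max}$ — both are routine given the structure already set up in Section~\ref{Sec.operator}.
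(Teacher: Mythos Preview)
Your proposal is correct, but the paper takes a different route. Rather than invoking the finite-rank resolvent difference between $H^\pm_\mathrm{max}$ and the Friedrichs extension $-\Delta_{a_\alpha}$ and then citing the known spectrum of the latter, the paper constructs an explicit Weyl sequence directly: for $k\in\Real^2$ it sets $\psi_n(x)=\phi_n(x)\ee^{\ii k\cdot x}$ with $\phi_n(x)=n^{-1}\phi(x^1/n-n,\,x^2/n)$ for a fixed $\phi\in C_0^\infty((0,\infty)\times\Real)$, so that $\supp\psi_n$ is translated to infinity; one then checks $\|(\tilde X-|k|^2)\psi_n\|\to 0$ using only that $\|a_\alpha\|_{L^\infty(\supp\psi_n)}\to 0$ and that the derivatives of $\phi_n$ scale away. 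Your extension-theoretic argument is arguably slicker and exploits machinery already in place in Section~\ref{Sec.operator} (and indeed the Krein formula is developed in Section~\ref{Sec.Green} anyway), at the cost of an external citation for $\sigma(-\Delta_{a_\alpha})$; the paper's plane-wave construction is entirely self-contained and bypasses the angular decomposition that your alternative Bessel-function approach would require.
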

\begin{proof}
Since~$H^{\pm}_{\mathrm{max}}$ are non-negative,
we immediately have $\sigma(H_\alpha) \subset [0,+\infty)$.
To prove the opposite inclusion, we construct a Weyl sequence.
Given $\phi\in C_0^{\infty}((0,\infty) \times \Real)$ 
with $\|\varphi\|=1$,
for every positive~$n$ we define 
$$
  \phi_n(x) := \frac{1}{n} \,
  \phi\left( \frac{x^1}{n}-n,\frac{x^2}{n} \right)
  .
$$
Note that the normalisation factor is chosen in such a way
that $\|\phi_n\|=1$ for every~$n$.
Moreover, the scaling ensures that the derivatives of~$\varphi_n$ 
vanish as $n \to \infty$, namely 
\begin{equation}\label{scaling}
  \|\nabla\varphi_n\| = n^{-1} \, \|\nabla\varphi\|
  \qquad\mbox{and}\qquad
  \|\Delta\varphi_n\| = n^{-2} \, \|\Delta\varphi\|
  \,.
\end{equation}
Finally, the shift guarantees that the support of~$\varphi_n$
never intersects the origin where the operator is singular,
in fact the support is ``localised at infinity'' in the sense
that $\supp\varphi_n = (n^2,0) + n \supp\varphi $.  
Now we define 
$\psi_n(x) := \phi_n(x) \ee^{\ii k \cdot x}$,
where $k \in \Real^2$. 
Note that 
$\psi_n\subset C_0^{\infty}(\Rb^2\setminus \{0\})
\subset \Dom (H^{\pm}_{\mathrm{max}})$
and $\|\psi_n\|=1$ for every~$n$, 
while both $H^{\pm}_{\mathrm{max}}$ act on $C_0^{\infty}(\Rb^2\setminus \{0\})$ 
 	as $\tilde{X}$ introduced in~\eqref{eq:X_tilde}.
Using that~$a_\alpha$ is divergence-free outside the origin,
one has 	
 	\begin{align*}
 		\|(\tilde{X} -k^2) \psi_n\| 
 			\leq 
 			\|\Delta \phi_n\| 
 			+ 2 |k| \|\nabla \phi_n\| + \||a_\alpha|^2 \phi_n\|
 			+ 2 \| a_\alpha\cdot \nabla \phi_n\|
 	\xrightarrow[n \to \infty]{} 0 \,.
	\end{align*} 
Here, in addition to~\eqref{scaling}, we have used that 
$\|a_\alpha\|_{L^\infty(\supp\varphi_n)} \to 0$ as $n \to \infty$.
This argument shows that 
$
  \sigma(H^{\pm}_{\mathrm{max}}) 
  = \sigma_\mathrm{ess}(H^{\pm}_{\mathrm{max}}) = [0,+\infty)
$.	 	
	The same spectral result for~$H_\alpha$
	now follows by the fact that the spectrum of $H_\alpha$ 
	is the union of the spectra of $H^+_{\mathrm{max}}$
 	and $H^-_{\mathrm{max}}$ due to~\eqref{direct}.
\end{proof}

In parallel with the Friedrichs extension $-\Delta_{a_\alpha}$ of~$\tilde{X}$,
the authors of \cite{Sto89,Geyler-Stovicek_2004_two_fluxes}
consider the unitarily equivalent operator 
\begin{equation}\label{eq:Unitary}
	H  \coloneqq U_{\alpha}(-\Delta_{a_\alpha})U_{\alpha}^{-1}
  \,, \qquad \mbox{where} \qquad
 	U_{\alpha}\phi(r, \theta) 
 	\coloneqq \ee^{\ii \alpha \theta} \phi(r, \theta)
 	\,.
\end{equation}
The operator~$H$ acts as the Laplacian in~$\Real^2 \setminus \{0\}$
and functions~$\psi$ in its domain satisfy the following 
boundary conditions at the origin and on the cut $\theta = \pi$: 
\begin{subequations}
\label{eq:Friedrichs}
\begin{align} 
	\psi(0) 		&= 0 \,, 	\\
 	\psi(r, \pi) 	&= \ee^{2\pi \ii \alpha} \psi(r, -\pi) \,,	\\
 	\p_r\psi(r, \pi) 	&= \ee^{2\pi \ii \alpha} \p_r \psi(r, -\pi)	\,.
\end{align}
\end{subequations}
%

\section{The Green function}\label{Sec.Green}
%
The goal of this section is two-fold. 
First, we apply the Krein's formula 
to the Green function~$G_z$ of the operator~$H$ from~\eqref{eq:Unitary}
to find the Green function of the Pauli operator~$H_{a_\alpha}$.
Second, we study the singularities of the Green function.

\subsection{Krein's formula}
The Green function~$G_z$ is presented in \cite[Eq.~(7)]{Sto89}.
Denoting by $z\in \Cb \setminus [0, \infty)$ the spectral parameter 
 and choosing the branch of the square root so that
 $\Re \sqrt{-z}>0$ and recalling that we assume $\alpha \in (0,1)$
 (without loss of generality), 
 it reads
\begin{subequations} \label{eq:plain_Green_fun}
	\begin{align} 
		G_z(r, \theta; r_0, \theta_0)
			&=	\hat{C}(\theta - \theta_0) \,
				K_0\big(\sqrt{-z} \, |x-x_0|\big) 
				\label{eq:plain_Green_fun_a}
				\\
			& \quad -\frac{\sin(\pi \alpha)}{\pi} 
				\int_{-\infty}^{\infty} \frac{1}{2\pi} \,
				K_0\big(\sqrt{-z} \, R(s)\big) \,
				\frac{\ee^{-\alpha s + \ii \alpha (\theta - \theta_0)}}{1 + \ee^{-s + \ii (\theta - \theta_0)}} \dd s \,.
				\label{eq:plain_Green_fun_b}
	\end{align}
\end{subequations}
Here $K_0$ is the zero-th modified Bessel function of the second kind, 
\begin{align} \label{eq:R}
	|x-x_0|^2 = r^2 + r_0^2 - 2r r_0 \cos (\theta- \theta_0) \,, 
 	\quad R(s)^2 := r^2 + r_0^2 +2rr_0\cosh (s) \,,
\end{align}
and
\begin{equation}\label{tilda}
  \hat{C}(\theta - \theta_0) 
  := \frac{1}{2\pi}		 
				\begin{cases}
					1 
					& \mbox{if}\quad 
					\theta - \theta_0 \in (-\pi, \pi) \,,	
					\\
					\ee^{-2\pi \ii \alpha} 	
					& \mbox{if}\quad	
					\theta - \theta_0 \in (-2\pi, -\pi) \,,
					\\
					\ee^{2\pi \ii \alpha}
					& \mbox{if}\quad	
					\theta - \theta_0 \in (\pi, 2\pi) \,.
				\end{cases}
\end{equation}
\begin{Remark}\label{Rem.three}
Despite the three-fold description, 
$G_z$~is continuous at $\theta-\theta_0=\pm \pi$.
Indeed, the continuity of the first line~\eqref{eq:plain_Green_fun_a}
follows from the equality \cite[\S\,6.791-1]{GR07} 
 	\begin{align*}
 		\pi K_0(a+b) = \int_{-\infty}^{\infty} K_{\ii \tau}(a)K_{\ii \tau}(b) \dd \tau
 	\end{align*}
 	for $|\ph(a)|+|\ph(b)|\leq \pi$.
To see the continuity of the second line~\eqref{eq:plain_Green_fun_b}, 
we use the identity 	
 	\begin{align} \label{eq:continuity_in_theta}
 		\int_{-\infty}^{\infty} K_{\ii \tau}(a) K_{-\ii \tau}(b) 
 			\frac{\ee^{\phi \tau}}{\sin(\pi(\alpha +\ii \tau))} \dd \tau
 		=
 		\int_{-\infty}^{\infty} K_0(\sqrt{a^2+b^2 +2ab \cosh (u)})
 			\frac{\ee^{-\alpha(u-\ii \phi)}}{1+\ee^{-u+\ii \phi}} \dd u
 		\,,
 	\end{align}
 	for $\ph (a), \ph(b)<\pi$, $\alpha \in (0, 1)$ and $|\phi| <\pi$. 	
 	Its validity can be checked from formula
 	\cite[\S\,6.792-2]{GR07} 
	\begin{align} \label{eq:continuity_2}
	 	\int_{-\infty}^{\infty} \ee^{\ii u\tau} K_{\ii \tau}(a) K_{\ii \tau}(b) \dd \tau
	 		= 
	 		\pi K_0(\sqrt{a^2+b^2 +2ab \cosh (u)}) \,.
	\end{align}
	Using the fact that for $|\phi|<\pi$ the integral
	$\oint_{\gamma}
 			\ee^{\ii u\tau} 
 			\frac{\ee^{-\alpha(u-\ii \phi)}}{1+\ee^{-u+\ii \phi}} 
 			\dd u $
 	vanishes along the rectangle
 	$\gamma := (-R, R)\cup (R, R-\ii \phi) \cup (R-\ii \phi, -R - \ii \phi)
 	\cup (-R-\ii \phi, -R)$	for any $R>0$
	 by the residue theorem, we further compute
	\begin{align*}
	 	\int_{-\infty}^{\infty} \ee^{\ii u \tau} \frac{\ee^{-\alpha(u-\ii \phi)}}{1+\ee^{-u+\ii \phi}} \dd u
	 		= 
	 		\int_{-\infty}^{\infty} \ee^{\ii \tau (s+\ii \phi)} \frac{\ee^{-\alpha s}}{1+\ee^{-s}} \dd s
	 		= \ee^{-\tau \phi} B(\alpha-\ii \tau, 1-(\alpha-\ii \tau))
	 	 \,.
	\end{align*}
	For the last equality we have used~\cite[\S\,3.313-2]{GR07}.
	On the right-hand side the Beta function~$B$ can be further evaluated as
	$B(\alpha-\ii \tau, 1-(\alpha-\ii \tau))= \frac{\pi}{\sin(\pi(\alpha-\ii \tau))}$.
	Now we apply this result to the left-hand side of the  
	$\frac{\ee^{-\alpha(u-\ii \phi)}}{1+\ee^{-u+\ii \phi}}$
	multiple of \eqref{eq:continuity_2} integrated over 
	$u\in (-\infty, \infty)$ and arrive at 
	\eqref{eq:continuity_in_theta}.
\end{Remark}

To obtain the Green functions $G_z^{\pm}$ of the extensions
$H^{\pm}_{\mathrm{max}}$, we mimic the steps 
in~\cite[Sec.~IV]{Geyler-Stovicek_2004_two_fluxes} 
and use the Krein's formula (\cf~\cite[Eq.~(2.6)]{DG85}).
Recalling the unitary transformation $U_\alpha$ 
from~\eqref{eq:Unitary},  the Krein's formula yields
\begin{equation}\label{Krein}
 	 \ee^{\ii \alpha \theta} \,
 	 {G}_z^{\pm} (x,x_0) \,
 	 \ee^{-\ii \alpha \theta_0}
 	  		={G}_z(x,x_0)
 		+\sum_{j,k=1,2} (M_z^{\pm})^{j,k} 
 						f_z^j(x) \overline{f_{\bar{z}}^k(x_0)} 
 		\,.
\end{equation}
The coefficient matrices $M_z^{\pm}$ are determined below.  
 The functions $f_z^{1},f_z^{2}$ form a basis of the 
deficiency subspaces $\text{ker } (X^{\ast}-z)$, 
where $X$ is the Laplacian on test functions on $\Rb^2\setminus \{0\}$.
In particular, we set (\cf~\cite{DS98})
	\begin{align*}
		\{f_z^1(r, \theta), f_z^2(r, \theta)\} 
			\coloneqq 
			\{ K_{1-\alpha}(\sqrt{-z}r) \ee^{\ii (\alpha-1) \theta} ,
				K_{\alpha}(\sqrt{-z}r) \ee^{\ii \alpha\theta} 
				\}\,,
	\end{align*}
with $K_{\nu}$ denoting the $\nu$-th modified Bessel function of the second kind.

As~$G_z$ is the integral kernel of the resolvent of~$H$, 
the range of the corresponding integral operator is~$\Dom(H)$ 
determined by the boundary conditions~\eqref{eq:Friedrichs}. 
The sum in~\eqref{Krein} 
is the integral kernel of a finite-rank operator. 
Since upon integration over 
$x_0\in \Rb^2$
the Green function $G_z^{\pm}(x, x_0)$ has to map square integrable functions to the domain 
$\Dom (H^{\pm}_{\mathrm{max}})$, we need to choose
the matrices $M_z^{\pm}$ in such a manner that 
$ G_z^{\pm}(x,x_0)$ 
satisfy (as functions of $x\in \Rb^2$) the respective boundary conditions of $H^{\pm}_{\mathrm{max}}$ 
given in~\eqref{eq:domPauli}.
To check these conditions, 
we investigate the behaviour of ${G}_z(x, x_0)$  
and that of $f_z^1,f_z^2$ for $|x|=r\rightarrow 0$.
As for the former, one has \cite[Eq.~(29)]{Geyler-Stovicek_2004_two_fluxes}
\begin{multline*}
	{G}_z(r, \theta, r_0, \theta_0) 
		=\frac{\sin(\pi \alpha)}{2\pi^2} \frac{\Gamma(\alpha)}{1-\alpha}
		\left (
			\frac{\sqrt{-z}r}{2}
		\right )^{1-\alpha}
		\overline{f^1_{\bar{z}}(r_0, \theta_0)} \ee^{-\ii(1-\alpha) \theta} 
	\\
		+
		\frac{\sin(\pi \alpha)}{2\pi^2} \frac{\Gamma(1-\alpha)}{\alpha}
		\left (
			\frac{\sqrt{-z}r}{2}
		\right )^{\alpha}
		\overline{f^2_{\bar{z}}(r_0, \theta_0)} \ee^{\ii\alpha \theta} 
		+
		\mathcal{O}(r)	 
\end{multline*}
as $r\rightarrow 0$. 
The asymptotics of $f_z^1,f_z^2$ follow 
from the behaviour of the Bessel functions 
\begin{align}\label{eq:Bessel_asy_small}
 	K_{\nu}(w) 
 		= 
 		\frac{\Gamma(\nu)}{2} 
 			\left( \frac{w}{2} \right) ^{-\nu} 
 			(1+\mathcal{O}(w^2))
 		-\frac{\Gamma(1-\nu)}{2\nu} 
 			\left( \frac{w}{2}\right) ^{\nu} 
 			(1+\mathcal{O}(w^2)) 
 			\,,
\end{align}
as $|w|\rightarrow 0$, 
where $\Re(\nu) > 0$ and $\ph(w)\neq \pm \pi$.
In particular, we arrive at
\begin{align*}
\lefteqn{
 	{G}^{\pm}_z(r, \theta, r_0, \theta_0) \, \ee^{-\ii \alpha \theta_0}
 	}
 	\\
 		\qquad &=
 			\sum_{j=1,2} (M^{\pm}_z)^{j,1} 
 				\overline{f_{\bar{z}}^j(r_0, \theta_0) }
 				\bigg[
 					\frac{\Gamma(1-\alpha)}{2} 
 					\left( \frac{r}{2}\right)^{\alpha-1}
 					(\sqrt{-{z}})^{\alpha-1} 
 					-
 					\frac{\Gamma(\alpha)}{2(1-\alpha)}
 					\left( \frac{r}{2}\right)^{1-\alpha}
 					(\sqrt{-{z}})^{1-\alpha}  
 				\bigg] 
 					\ee^{-\ii \theta} 	
 				\\
 			&\quad
 				+ 
 				 \sum_{j=1,2} (M^{\pm}_z)^{j,2} 
 				\overline{f_{\bar{z}}^j(r, \theta)} 
 				\bigg[
 					\frac{\Gamma(\alpha)}{2} 
 					\left( \frac{r}{2}\right)^{-\alpha} 
 					(\sqrt{-z})^{-\alpha}
 					- 
 					\frac{\Gamma(1-\alpha)}{2\alpha} 
 					\left( \frac{r}{2}\right)^{\alpha} 
 					(\sqrt{-{z}})^{\alpha} 
 				\bigg] 	\\
 			&\quad
 				+ 
 				\frac{\sin(\pi \alpha)}{2\pi^2} 
 				\frac{\Gamma(\alpha)}{1-\alpha} 
 				\left(\frac{r}{2}\right)^{1-\alpha} 
 				(\sqrt{-z})^{1-\alpha}
 				\ee^{-\ii \theta} 
 				\overline{f_{\bar{z}}^1(r_0, \theta_0)}	\\
			&\quad 			 
 			 	+ 
 			 	\frac{\sin(\pi \alpha)}{2\pi^2} 
 			 	\frac{\Gamma(1-\alpha)}{\alpha} 
 				\left(\frac{r}{2}\right)^{\alpha} 
 				(\sqrt{-{z}})^{\alpha}
 				\overline{f_{\bar{z}}^2(r_0, \theta_0)}
 				+
 				\mathcal{O}(r) 
\end{align*}
as $r \rightarrow 0$.

Now we are able to check when
${G}_z^{\pm}(r, \theta, r_0,\theta_0)$
satisfy the boundary conditions~\eqref{eq:domPauli}.
We first consider $\Dom(H^{+}_{\mathrm{max}})$.
The need of the functional $\Phi_1^0(\cdot)$ to vanish means that the
coefficients in front of 
$r^{-\alpha}$ have to vanish.
This leads to 
 $(M^{+}_z)^{j,2} = 0$ for all $j=1,2$ (since $f_z^{1,2}$ are linearly independent).
From vanishing of the functional $\Phi_{2}^{-1}(\cdot)$ 
we conclude that the coefficient in front of the term 
$ r^{1-\alpha} \ee^{-\ii \theta} $ is zero.
Therefore
\begin{align*}
	(M_z^+)^{2,1} = 0
	\quad
	\mathrm{ and }
	\quad
	(M_z^+)^{1,1} 
		= \frac{\sin (\pi \alpha)}{\pi^2} \,.
\end{align*}
Secondly, we deal with the conditions in $\Dom (H^-_{\mathrm{max}})$ and obtain
\begin{align*}
 	(M_z^-)^{j,1}= 0 \,, \quad j = 1,2 \,,
\end{align*}
by vanishing coefficients 
in front of $ r^{\alpha-1} \ee^{-\ii \theta}$ coming 
from the requirement $\Phi_1^{-1}(\cdot) = 0$,
and, finally, 
\begin{align*}
 	(M_z^-)^{1,2}= 0 
 	\quad
	\mathrm{ and }
	\quad
	(M_z^-)^{2,2} 
 		= \frac{\sin (\pi \alpha)}{\pi^2}
\end{align*}
as $\Phi_2^{0}(\cdot)=0$ implies
vanishing coefficients 
in front of $r^{\alpha} $.

We summarise our investigation in the following proposition.
\begin{Proposition} \label{prop:Green_fun_Pauli}
Let $\alpha \in (0,1)$.
For every $z \in \Com\setminus [0,+\infty)$,
the resolvent of~$H_\alpha$ satisfies 
$$
		(H_\alpha-z)^{-1}
		= 
		\begin{pmatrix}
				(H_\mathrm{max}^+ -z)^{-1}	& 0 	\\
				0 		& (H_\mathrm{max}^- -z)^{-1}
		\end{pmatrix}
		,
$$
where the integral kernels~$G_z^\pm$ of the operators 
$(H_\mathrm{max}^\pm -z)^{-1}$ are given by 
	\begin{align*}
		{G}_z^+(r, \theta, r_0, \theta_0)
			&= 
				\ee^{-\ii \alpha \theta} \,
				{G}_z (r, \theta, r_0, \theta_0) \,
				\ee^{\ii \alpha \theta_0} 
				+
			\frac{\sin (\pi \alpha)}{\pi^2}  K_{1-\alpha} (\sqrt{-z}r) 
			\overline{K_{1-\alpha} (\sqrt{-\bar{z}}r_0)}
			\ee^{-\ii (\theta - \theta_0)} 
					\\
		{G}_z^-(r, \theta, r_0, \theta_0)
			&= 
				 \ee^{-\ii \alpha \theta} \,
				{G}_z (r, \theta, r_0, \theta_0) \,
				\ee^{\ii \alpha \theta_0}
			  + 
			\frac{\sin (\pi \alpha)}{\pi^2}  K_{\alpha} (\sqrt{-z}r) 
			\overline{K_{\alpha} (\sqrt{-\bar{z}}r_0)}
				\,,
	\end{align*}
with~$G_z$ being given in~\eqref{eq:plain_Green_fun}.
\end{Proposition}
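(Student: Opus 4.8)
The plan is to assemble the statement from the material developed above, in three stages. First, the block-diagonal form of $(H_\alpha-z)^{-1}$ is immediate: by definition~\eqref{direct} the operator $H_\alpha$ is the orthogonal sum $H^+_{\mathrm{max}}\oplus H^-_{\mathrm{max}}$ on $\sii(\Real^2)\oplus\sii(\Real^2)$, so for $z\in\Com\setminus[0,+\infty)$, which lies outside $\sigma(H_\alpha)$ by Proposition~\ref{Prop.spectrum}, its resolvent is the orthogonal sum of the resolvents $(H^\pm_{\mathrm{max}}-z)^{-1}$. It therefore remains to identify the integral kernel of $(H^\pm_{\mathrm{max}}-z)^{-1}$ with the function $G_z^\pm$ displayed in the statement.

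Second, I would use Krein's formula~\eqref{Krein}, which expresses the gauge-conjugated kernel $\ee^{\ii\alpha\theta}G_z^\pm(x,x_0)\ee^{-\ii\alpha\theta_0}$ of $(H^\pm_{\mathrm{max}}-z)^{-1}$ as the resolvent kernel $G_z$ of $H$ plus the rank-at-most-two correction $\sum_{j,k}(M_z^\pm)^{j,k}f_z^j(x)\overline{f_{\bar z}^k(x_0)}$, with $f_z^1,f_z^2$ forming a basis of the relevant deficiency subspace. Because Krein's formula parametrises the resolvents of \emph{all} self-adjoint extensions, it suffices to fix the matrices $M_z^\pm$ by the single requirement that the operator with kernel $G_z^\pm$ map $\sii(\Real^2)$ into $\Dom(H^\pm_{\mathrm{max}})$, \ie\ that $G_z^\pm(\cdot,x_0)$ satisfy the boundary conditions~\eqref{eq:domPauli} for almost every $x_0$. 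Using the $r\to 0$ expansions of $G_z$ and of the Bessel functions $K_{1-\alpha},K_\alpha$ recorded above, I would read off the coefficients of the four critical contributions $r^{\alpha-1}\ee^{-\ii\theta}$, $r^{1-\alpha}\ee^{-\ii\theta}$, $r^{-\alpha}$ and $r^{\alpha}$ as functions of $x_0$; since $\overline{f_{\bar z}^1}$ and $\overline{f_{\bar z}^2}$ are linearly independent, each pair of vanishing conditions ($\Phi_2^{-1}=\Phi_1^0=0$ for $H^+_{\mathrm{max}}$, and $\Phi_1^{-1}=\Phi_2^0=0$ for $H^-_{\mathrm{max}}$) splits into two scalar equations, and solving the resulting linear system gives $(M_z^+)^{1,1}=\sin(\pi\alpha)/\pi^2$ with all other entries zero and, symmetrically, $(M_z^-)^{2,2}=\sin(\pi\alpha)/\pi^2$ with all other entries zero.

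Third, I would substitute these matrices back into~\eqref{Krein} and undo the conjugation by $U_\alpha$, \ie\ multiply by $\ee^{-\ii\alpha\theta}$ and $\ee^{\ii\alpha\theta_0}$. In each case only one term of the double sum survives: for $H^+_{\mathrm{max}}$ it is $\frac{\sin(\pi\alpha)}{\pi^2}K_{1-\alpha}(\sqrt{-z}\,r)\ee^{\ii(\alpha-1)\theta}\,\overline{K_{1-\alpha}(\sqrt{-\bar z}\,r_0)\ee^{\ii(\alpha-1)\theta_0}}$, and the prefactors collapse the angular exponentials to $\ee^{-\ii(\theta-\theta_0)}$; for $H^-_{\mathrm{max}}$ the surviving term is $\frac{\sin(\pi\alpha)}{\pi^2}K_\alpha(\sqrt{-z}\,r)\overline{K_\alpha(\sqrt{-\bar z}\,r_0)}$, the angular factors cancelling outright, while in both cases the part $\ee^{-\ii\alpha\theta}G_z(x,x_0)\ee^{\ii\alpha\theta_0}$ is left untouched. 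This reproduces the two displayed formulas. I expect the only genuinely delicate point to be the bookkeeping of the second stage — matching each boundary functional $\Phi_j^k$ with the coefficient of the correct power of $r$, together with its angular prefactor and its power of $\sqrt{-z}$, while keeping the branch convention $\Re\sqrt{-z}>0$ consistent so that $\overline{\sqrt{-z}}=\sqrt{-\bar z}$ — everything else being routine substitution and cancellation of exponentials.
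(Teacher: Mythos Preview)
Your proposal is correct and follows essentially the same route as the paper: the block-diagonal form comes from the direct-sum definition~\eqref{direct}, and the kernels $G_z^\pm$ are obtained via Krein's formula~\eqref{Krein} by matching the $r\to 0$ expansion of the right-hand side against the boundary functionals~\eqref{eq:domPauli}, which yields exactly $(M_z^+)^{1,1}=(M_z^-)^{2,2}=\sin(\pi\alpha)/\pi^2$ with all other entries zero. The paper carries out precisely this computation in the paragraphs preceding the proposition, so your outline matches it step for step, including the final undoing of the $U_\alpha$-conjugation and the collapse of the angular exponentials.
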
	

\subsection{The singularities}
It is well known that the criticality of an operator
is related to the singularity of its Green function.
Moreover, the weak-coupling asymptotics are determined
by the nature of the singularity. 
To apply the Birman--Schwinger analysis below,
we need to have precise information about the singularities 
of the Green functions ${G}_z^\pm$ as $z \to 0$.

Note that the functions $z \mapsto G_z^{\pm}$ 
are analytic outside of the cut $[0,+\infty)$. 
Moreover, the limits $G_{k+\ii \eps}^{\pm}$ as $\eps \to 0^\pm$
are well defined for every positive~$k$.
Therefore the singularities indeed occur only as~$|z|$ approaches zero.
 
First of all, we establish a technical identity. 
\begin{Lemma}\label{Lem.residual} 
Let $\alpha \in (0,1)$.
For every $\varphi \in (-2\pi,-\pi) \cup (-\pi,\pi) \cup (\pi,2\pi)$,
one has 
\begin{equation}\label{eq:singularity}
 2\pi \hat{C}(\phi)
				-\frac{\sin(\pi \alpha)}{\pi}
				\ee^{\ii \alpha \phi}
		\int_{-\infty}^{\infty} 
				\frac{\ee^{-\alpha s}}{1 + \ee^{-s + \ii \phi}} \dd s 
		= 0 \,,
\end{equation}
where~$\hat{C}(\phi)$ is given in~\eqref{tilda}.  
\end{Lemma}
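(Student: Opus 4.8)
The plan is to evaluate the integral
$I(\phi) := \int_{-\infty}^{\infty} \frac{\ee^{-\alpha s}}{1 + \ee^{-s + \ii \phi}} \dd s$
explicitly for $\phi$ in each of the three subintervals, and then check that
$2\pi \hat{C}(\phi) = \frac{\sin(\pi\alpha)}{\pi}\ee^{\ii\alpha\phi} I(\phi)$
reduces to an identity. The starting point is the Beta-integral computation already used implicitly in Remark~\ref{Rem.three}: after the substitution $\ee^{-s} = t$ (so $\dd s = -\dd t/t$), one gets, for $|\phi| < \pi$,
\begin{equation*}
  I(\phi) = \int_0^\infty \frac{t^{\alpha-1}}{1 + \ee^{\ii\phi} t}\dd t
          = \ee^{-\ii\alpha\phi}\, B(\alpha, 1-\alpha)
          = \ee^{-\ii\alpha\phi}\,\frac{\pi}{\sin(\pi\alpha)}\,,
\end{equation*}
using $\int_0^\infty \frac{t^{\alpha-1}}{1+c t}\dd t = c^{-\alpha}\pi/\sin(\pi\alpha)$ with $c = \ee^{\ii\phi}$ and the branch $c^{-\alpha} = \ee^{-\ii\alpha\phi}$ valid for $|\phi|<\pi$ (this is \cite[\S\,3.313-2]{GR07}, as invoked in Remark~\ref{Rem.three}). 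Substituting back, for $\phi \in (-\pi,\pi)$ the left-hand side of \eqref{eq:singularity} becomes $2\pi\cdot\frac{1}{2\pi} - \frac{\sin(\pi\alpha)}{\pi}\ee^{\ii\alpha\phi}\cdot\ee^{-\ii\alpha\phi}\frac{\pi}{\sin(\pi\alpha)} = 1 - 1 = 0$, which settles the middle interval.

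For the two outer intervals the subtlety is the branch of $c^{-\alpha}$: when $\phi \in (\pi,2\pi)$ we write $\phi = \phi' + 2\pi$ with $\phi' \in (-\pi,0)$, so $\ee^{\ii\phi} = \ee^{\ii\phi'}$ and the \emph{integral} $I(\phi) = I(\phi')$ is literally unchanged (it depends only on $\ee^{\ii\phi}$), equal to $\ee^{-\ii\alpha\phi'}\pi/\sin(\pi\alpha) = \ee^{-\ii\alpha\phi}\ee^{2\pi\ii\alpha}\pi/\sin(\pi\alpha)$. Hence $\frac{\sin(\pi\alpha)}{\pi}\ee^{\ii\alpha\phi}I(\phi) = \ee^{2\pi\ii\alpha}$, and since $2\pi\hat{C}(\phi) = \ee^{2\pi\ii\alpha}$ on $(\pi,2\pi)$ by \eqref{tilda}, the combination \eqref{eq:singularity} vanishes. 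The case $\phi \in (-2\pi,-\pi)$ is symmetric: writing $\phi = \phi' - 2\pi$ with $\phi' \in (0,\pi)$ gives $I(\phi) = \ee^{-\ii\alpha\phi}\ee^{-2\pi\ii\alpha}\pi/\sin(\pi\alpha)$, so $\frac{\sin(\pi\alpha)}{\pi}\ee^{\ii\alpha\phi}I(\phi) = \ee^{-2\pi\ii\alpha} = 2\pi\hat{C}(\phi)$, and again \eqref{eq:singularity} holds.

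The only genuine point requiring care — the main (minor) obstacle — is bookkeeping of the argument/branch of the power function: $I(\phi)$ is a perfectly convergent integral (absolutely convergent at $s \to +\infty$ because $\ee^{-\alpha s}$ decays, and at $s \to -\infty$ because $\ee^{-\alpha s}/\ee^{-s} = \ee^{(1-\alpha)s}$ decays, using $\alpha \in (0,1)$; the denominator $1 + \ee^{-s+\ii\phi}$ has no zero on the real $s$-axis precisely because $\phi \notin \pi + 2\pi\Int$), and it manifestly depends on $\phi$ only through $\ee^{\ii\phi}$, i.e.\ it is $2\pi$-periodic in $\phi$. By contrast $\ee^{\ii\alpha\phi}$ is \emph{not} periodic, and neither is $\hat C(\phi)$; the three-branch structure of $\hat C$ in \eqref{tilda} is exactly designed so that $\ee^{\ii\alpha\phi}\hat C(\phi)$ becomes $2\pi$-periodic, matching $I(\phi)$. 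Making this explicit — reducing any $\phi$ to its representative $\phi' \in (-\pi,\pi)$, tracking the factor $\ee^{\pm 2\pi\ii\alpha}$ that appears both in $I$ (through $\ee^{\ii\alpha\phi} = \ee^{\ii\alpha\phi'}\ee^{\pm 2\pi\ii\alpha}$) and in $\hat C$ — is the whole content of the proof, after which \eqref{eq:singularity} is immediate from the single Beta-function evaluation above. No residue argument beyond the one already recorded in Remark~\ref{Rem.three} is needed.
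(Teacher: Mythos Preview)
Your proof is correct and takes a genuinely different route from the paper's. The paper evaluates the integral by a residue argument: it regularises with a factor $\ee^{\ii\eps s}$, closes the contour by a semicircle in the upper half-plane, sums the residues at the simple poles $s_k = \ii(\phi+(2k+1)\pi)$ (the starting index $k_{\mathrm{min}}$ depending on which of the three intervals contains~$\phi$), shows the arc contribution vanishes as $R\to\infty$, and finally sends $\eps\to 0$ by dominated convergence. You instead make the substitution $t=\ee^{-s}$ to reduce $I(\phi)$ to the standard Beta integral $\int_0^\infty t^{\alpha-1}(1+ct)^{-1}\dd t = c^{-\alpha}\pi/\sin(\pi\alpha)$ with $c=\ee^{\ii\phi}$, valid for $|\phi|<\pi$, and then observe that $I(\phi)$ is manifestly $2\pi$-periodic in~$\phi$ while $\ee^{\ii\alpha\phi}$ is not --- so the outer intervals follow at once by translating~$\phi$ by $\pm 2\pi$ and reading off the factor $\ee^{\pm 2\pi\ii\alpha}$, which matches the three-branch definition of~$\hat C$ in~\eqref{tilda}. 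Your argument is shorter and avoids both the regularisation and the contour estimate; it also makes transparent \emph{why} $\hat C$ must have exactly that piecewise form. The paper's residue method, on the other hand, is self-contained (it does not cite the Beta integral from \cite{GR07}) and generalises more directly to related integrals such as~\eqref{eq:continuity_in_theta} in Remark~\ref{Rem.three}, where a closed-form reduction is less immediate.
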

\begin{proof}
Note that the integral converges 
under the restrictions on~$\phi$ and~$\alpha$.
To compute its value we use the residue theorem. To that aim
we consider the regularised integrand of \eqref{eq:singularity}
by multiplying it by $\ee^{-\ii \eps s}$ for some $\eps>0$.
Then we integrate along the oriented contour~$C_R$
consisting of the real interval $[-R, R]$
and the arc~$\Gamma_R$ of radius~$R$ 
centred at the origin placed in the top half of the complex plane. 
In summary, we are considering
\begin{align*}
  J_\eps :=
	\oint_{C_R}   \ee^{\ii \eps s}	\frac{\ee^{-\alpha s}}{1 + \ee^{-s + \ii \phi}} \dd s \,.
\end{align*}
This integral can be evaluated by summing its residua 
$\{\exp[-(\eps+\ii \alpha)(\phi +(2k+1)\pi)] \}_{k=k_\mathrm{min}}^{\infty}$ 
corresponding to the simple poles $s_k := \ii (\phi + (2k+1)\pi) $ in the upper half of the complex plane as follows:
\begin{equation*}
	J_\eps
	=
	\ee^{-(\epsilon + \ii \alpha)(\phi+\pi)} 	
 	\sum_{k=k_\mathrm{min}}^{\infty} 
 		\ee^{-(\epsilon + \ii \alpha)2\pi k } 	
 		 \qquad \text{with} \qquad
 		 k_\mathrm{min}
 		 :=\begin{cases}
 		 	0 & \text{ if } \phi \in (-\pi, \pi) \,, \\
 		 	1 & \text{ if } \phi \in (-2\pi, -\pi) \,, \\
 		 	-1 & \text{ if } \phi \in (\pi, 2\pi) \,.
 		 \end{cases} 
\end{equation*}
Here $k_\mathrm{min}$ is given by the condition that $\Im(s_k)>0$ for all $k\geq k_\mathrm{min}$. Notice also that the sum is well defined as for all such $k$ 
the absolute value of the summands is less than one.
The integral over the arc of the loop vanishes in the limit $R\rightarrow \infty$ since
\begin{align*}
 	\left| 
 		\int_{\Gamma_R}  \ee^{\ii \eps s}	\frac{\ee^{-\alpha s }}{1 + \ee^{-s + \ii \phi}} \dd s
	\right|
		&\leq \int_{\Gamma_R} \left|   \ee^{\ii \eps s}	\frac{\ee^{-\alpha s }}{1 + \ee^{-s + \ii \phi}} \right| \dd s 	
		\\
		&=  \int_0^{\pi/2} \left|  \frac{\ee^{-\alpha R \cos \beta - \eps R \sin \beta }}{1 + \ee^{-R(\cos \beta + \ii \sin \beta) + \ii \phi}} \right| R \dd \beta 
		+  \int_{\pi/2}^{\pi} \left|  \frac{\ee^{(1-\alpha) R \cos \beta - \eps R \sin \beta }}{1 + \ee^{R(\cos \beta + \ii \sin \beta) - \ii \phi}} \right| R \dd \beta \,.
\end{align*}
Using the dominated convergence we can thus compute our original integral 
\begin{align*}
	\ee^{\ii \alpha \phi}
	\int_{-\infty}^{\infty} \frac{\ee^{-\alpha s }}{1 + \ee^{-s + \ii \phi}} \dd s	
		&= 	 \ee^{\ii \alpha \phi}\lim_{\eps \rightarrow 0} \int_{-\infty}^{\infty}   \ee^{\ii \eps s}	\frac{\ee^{-\alpha s }}{1 + \ee^{-s + \ii \phi}} \dd s
		\\
		&=	\ee^{\ii \alpha \phi}\lim_{\epsilon\rightarrow 0} \lim_{R \rightarrow \infty}
			\int_{C_R} \ee^{\ii \eps s}	\frac{\ee^{-\alpha s }}{1 + \ee^{-s + \ii \phi }} \dd s
		=
		\frac{\pi}{\sin (\pi \alpha)}
		\left.
 		\begin{cases}
 			1	\\
 			\ee^{-2\pi \ii \alpha} 	\\
 			\ee^{2\pi \ii \alpha}
		\end{cases}
		\right\}
		 \,.
\end{align*}
This is equivalent to~\eqref{eq:singularity}.
\end{proof}

To study the singularities of~$G_z^\pm$, 
we start with the part $G_z$ given by~\eqref{eq:plain_Green_fun}.
First, recall the asymptotics of the zero'th Bessel function 
(\cf~\cite[\S\,9.6.12--13]{AS64})
$K_0(w) = -\log (w/2) + \mathcal{O}(1)$ 
as $|w|\rightarrow 0$.
Second, since the function~$R$ introduced in~\eqref{eq:R}
is real-valued, we have
$
  \log(\sqrt{-z} R(s)) 
		= \log(\sqrt{-z}) + \log(R(s))
$.
Consequently, we see that the logarithmic singularity of~$G_z$ 
cancels out due to Lemma~\ref{Lem.residual},
yielding the behaviour
\begin{align*}
	G_z(r, \theta; r_0, \theta_0)
		&= d(x,x_0) + \mathcal{O}\big( (\sqrt{-z})^2 \big)
		\qquad \text{as} \qquad |z| \rightarrow 0 \,,
\end{align*}
with some function $d$ dependent on the spatial coordinates but independent of $z$.

Now we focus on the second terms in the formulae for $G_z^{\pm}$ 
given in  Proposition~\ref{prop:Green_fun_Pauli}.
From the asymptotics~\eqref{eq:Bessel_asy_small}, we compute
\begin{subequations}
\begin{align}
\begin{split} 	\label{eq:KK_1}
 	& K_{1-\alpha}(\sqrt{-z}r) \overline{K_{1-\alpha}(\sqrt{-\bar{z}}r_0)} 	\\
 		&\quad
 		= 
 		 (\sqrt{-z})^{2(\alpha-1)} \left(\frac{\Gamma(1-\alpha)}{2} \right)^2 
 		 	\left(\frac{r r_0}{4}\right)^{\alpha-1} 
 		 	- \frac{\Gamma(\alpha)\Gamma(1-\alpha)}{4(1-\alpha)} 
 		 		\left( \left(\frac{r_0}{r}\right)^{1-\alpha}
 		 			+\left(\frac{r}{r_0}\right)^{1-\alpha}\right) 	\\ 		
 		 &\qquad
 		 	+ \left( \frac{\Gamma(\alpha)}{2(1-\alpha)} \right)^2 (\sqrt{-z})^{2(1-\alpha)} 
 		 		\left( \left(\frac{r r_0}{4}\right)^{1-\alpha}
 		 			\right)  		 	
 		 	+ \mathcal{O}(\sqrt{-z}^{2 \min\{\alpha, 1-\alpha\} }) 
 		 	\,,  
 \end{split}
 	\\
\begin{split} 	\label{eq:KK_2} 	
 	& K_{\alpha}(\sqrt{-z}r) \overline{K_{\alpha}(\sqrt{-\bar{z}}r_0)}	
		\\ 		
 		&\quad
 		= 
 		 (\sqrt{-z})^{-2\alpha} \left(\frac{\Gamma(\alpha)}{2} \right)^2 
 		 	\left(\frac{r r_0}{4}\right)^{-\alpha} 
 		 	- \frac{\Gamma(\alpha) \Gamma(1-\alpha)}{4\alpha} 
 		 		\left( \left( \frac{r_0}{r} \right)^{\alpha}
 		 				+ \left(\frac{r}{r_0} \right)^{\alpha} \right)	\\		 	
 		 	&\qquad
 		 		+ \left( \frac{\Gamma(1-\alpha)}{2\alpha}\right)^2 \sqrt{-z}^{2\alpha} 
 		 				\left(\frac{r r_0}{4}\right)^{\alpha}
 		 	+ \mathcal{O}(\sqrt{-z}^{2\min\{\alpha, 1-\alpha\} }) \,,
\end{split}
\end{align}
\end{subequations}
as $|z| \rightarrow 0$. 

In summary, we have established the following asymptotics.
\begin{Proposition} 
Let $\alpha \in (0,1)$. For every $z \in \Com\setminus[0,+\infty)$,
one has
\begin{align*}
 	G_{z}^{+}(r, \theta, r_0, \theta_0)
 		&= 
 		C_{\alpha} \Gamma(1-\alpha)^2 (-z rr_0/4)^{\alpha-1} 
			 \ee^{-\ii (\theta - \theta_0)}	
 		 + \mathcal{O}(1) \,,
 		 \\
 	G_{z}^{-}(r, \theta, r_0, \theta_0)
 		&=
 		C_{\alpha} \Gamma(\alpha)^2 (-z rr_0/4)^{-\alpha} 
 		 + \mathcal{O}(1) \,,
\end{align*}  
as $|z| \rightarrow 0$,
where  
\begin{align*}
	C_{\alpha} 
		:= 
		\frac{\sin (\pi \alpha)}{4\pi^2} \,.
\end{align*}
\end{Proposition}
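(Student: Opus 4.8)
The plan is to assemble the stated asymptotics by combining the decomposition of $G_z^\pm$ from Proposition~\ref{prop:Green_fun_Pauli} with the two ingredients just established: the $z$-independent limit $d(x,x_0)+\mathcal{O}((\sqrt{-z})^2)$ of the ``plain'' part $G_z$, and the explicit small-$|z|$ expansions \eqref{eq:KK_1}--\eqref{eq:KK_2} of the Bessel products $K_{1-\alpha}(\sqrt{-z}r)\overline{K_{1-\alpha}(\sqrt{-\bar z}r_0)}$ and $K_\alpha(\sqrt{-z}r)\overline{K_\alpha(\sqrt{-\bar z}r_0)}$. Since the unimodular conjugation $\ee^{-\ii\alpha\theta}(\,\cdot\,)\ee^{\ii\alpha\theta_0}$ applied to $G_z$ does not affect the order of the singularity, the leading behaviour of $G_z^\pm$ as $|z|\to 0$ is dictated entirely by the most singular term in the corresponding Bessel product.

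First I would treat $G_z^+$. By Proposition~\ref{prop:Green_fun_Pauli},
\[
  G_z^+ = \ee^{-\ii\alpha\theta}G_z\,\ee^{\ii\alpha\theta_0}
  + \frac{\sin(\pi\alpha)}{\pi^2}\,
  K_{1-\alpha}(\sqrt{-z}r)\,\overline{K_{1-\alpha}(\sqrt{-\bar z}r_0)}\;\ee^{-\ii(\theta-\theta_0)} .
\]
The first summand is $\mathcal{O}(1)$ as $|z|\to 0$ by the cancellation established above. In the second summand, the expansion \eqref{eq:KK_1} has leading term $(\sqrt{-z})^{2(\alpha-1)}(\Gamma(1-\alpha)/2)^2 (rr_0/4)^{\alpha-1}$; since $\alpha\in(0,1)$ this diverges like $|z|^{\alpha-1}$, while the remaining terms in \eqref{eq:KK_1} are either $\mathcal{O}(1)$ or vanish. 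Multiplying by $\sin(\pi\alpha)/\pi^2$ and $\ee^{-\ii(\theta-\theta_0)}$, collecting the factor $(\sqrt{-z})^{2(\alpha-1)}=(-z)^{\alpha-1}$, and writing $(-z)^{\alpha-1}(rr_0/4)^{\alpha-1}=(-z\,rr_0/4)^{\alpha-1}$ gives the coefficient $\frac{\sin(\pi\alpha)}{\pi^2}\cdot\frac14\,\Gamma(1-\alpha)^2 = C_\alpha\,\Gamma(1-\alpha)^2$, which is exactly the claimed formula. The treatment of $G_z^-$ is identical: the extra term is $\frac{\sin(\pi\alpha)}{\pi^2}K_\alpha(\sqrt{-z}r)\overline{K_\alpha(\sqrt{-\bar z}r_0)}$, whose leading term in \eqref{eq:KK_2} is $(\sqrt{-z})^{-2\alpha}(\Gamma(\alpha)/2)^2(rr_0/4)^{-\alpha}$, and the same bookkeeping produces $C_\alpha\,\Gamma(\alpha)^2(-z\,rr_0/4)^{-\alpha}+\mathcal{O}(1)$.

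The only point requiring a little care — and the one I would flag as the ``main obstacle,'' though it is modest — is verifying that all error terms genuinely contribute only $\mathcal{O}(1)$ uniformly enough for the stated expansion. Concretely: in \eqref{eq:KK_1} the subleading terms are the $z$-independent piece $-\frac{\Gamma(\alpha)\Gamma(1-\alpha)}{4(1-\alpha)}((r_0/r)^{1-\alpha}+(r/r_0)^{1-\alpha})$ (bounded away from $r=0,r_0=0$, which is the regime relevant for the Birman--Schwinger analysis with compactly supported $v$), the term of order $(\sqrt{-z})^{2(1-\alpha)}\to 0$, and the remainder $\mathcal{O}((\sqrt{-z})^{2\min\{\alpha,1-\alpha\}})\to 0$; all are $\mathcal{O}(1)$. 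Likewise the conjugated $G_z$ contributes $d(x,x_0)+\mathcal{O}((\sqrt{-z})^2)=\mathcal{O}(1)$. Hence the net remainder after extracting the single most singular term is $\mathcal{O}(1)$, and similarly for $G_z^-$. Collecting these observations completes the proof.
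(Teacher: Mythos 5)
Your proposal is correct and follows the same route as the paper: the paper's (implicit) proof is precisely the combination of Proposition~\ref{prop:Green_fun_Pauli}, the cancellation of the logarithmic singularity of $G_z$ via Lemma~\ref{Lem.residual} giving $G_z=d(x,x_0)+\mathcal{O}\big((\sqrt{-z})^2\big)$, and the expansions \eqref{eq:KK_1}--\eqref{eq:KK_2}, with the same bookkeeping of the constant $\frac{\sin(\pi\alpha)}{\pi^2}\cdot\frac14=C_\alpha$. Your only superfluous caution is the remark about $r,r_0$ being bounded away from zero: the $\mathcal{O}(1)$ here is pointwise in the spatial variables as $|z|\to0$, so the $z$-independent subleading terms are automatically absorbed regardless of their spatial behaviour.
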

%

\section{The Birman--Schwinger analysis}\label{Sec.BS}
%
Let $V:\Real^2 \to \Com^{2 \times 2}$
be a matrix-valued function. 
For almost every $x \in \Real^2$, we consider 
the matrix polar decomposition
\begin{equation}\label{eq:AB}
  V(x) = B(x)A(x) 
  \qquad \mbox{with} \qquad
  \begin{aligned}
  A (x) &\coloneqq \sqrt[4]{V(x)^*V(x)}\,, \\
  B(x) &\coloneqq \mathbb{U}(x) \, \sqrt[4]{V(x)^*V(x)} \,,
  \end{aligned}
\end{equation}
where~$\mathbb{U}(x)$ is a unitary matrix.
By $|V(x)|$ we denote the operator norm of the matrix~$V(x)$ 
when considered as an operator on~$\Com^2$.
Our standing hypothesis about~$V$ is as follows.
\begin{Assumption} \label{as:potential}
Let $\alpha \in (0,1)$.
Suppose
	\begin{align} \label{eq:V_assumptions}
		\int_{\Rb^2} |V(x)| (|x|^{2\nu} + |x|^{-2\nu}) \dd x < \infty
		\qquad 
		\text{ and }
		\qquad |V| \in L^{1+\delta}(\Rb^2)
	\end{align}
	for some positive~$\delta$
	 and  
	 $\nu:= \max \{1-\alpha, \alpha\} $.	
\end{Assumption}

	Note that conditions~\eqref{eq:V_assumptions} particularly 
	imply that $|V| \in L^1(\Rb^2)$. 

Let~$\epsilon$ be a (small) positive number.
By the Birman--Schwinger principle~\cite{HK22}
(justified under Assumption~\ref{as:potential}
in Remark~\ref{Rem.BS} below),
$z \in \Com\setminus [0,+\infty)$ is an eigenvalue 
of $H_\alpha + \epsilon V$ if, and only if, 
$-1$ is an eigenvalue of the integral operator
\begin{equation}\label{BS}
  R_{z, \epsilon}  \coloneqq
				\epsilon A(H_\alpha -z)^{-1} B \,.
\end{equation}
Here $A,B$ are considered as the maximal operators of multiplication
by the matrix-valued functions denoted by the same symbols.  

To apply this principle to the analysis of the weakly coupled eigenvalues, 
we decompose
\begin{equation}\label{BS.decomposed}
  R_{z, \epsilon} = \epsilon (L_z + Q_z)
  \,,
\end{equation}
where~$L_z$ and~$Q_z$ are the singular and regular parts
of the Birman--Schwinger operator, respectively. 
More specifically, adopting the convention that the kernel
of an integral operator~$T$ is distinguished calligraphically by~$\mathcal{T}$, 
we define
\begin{subequations} \label{eq:Green_fun_split}
	\begin{align}
		\label{eq:L_z}
	 	\mathcal{L}_{z} (r, \theta, r_0, \theta_0)
	 		\coloneqq & \
	 		C_{\alpha}
	 		A(r, \theta)
	 		\begin{pmatrix}
	 	 		 \Gamma(1-\alpha)^2
	 	 		 	(-z |rr_0|/4)^{\alpha-1} \ee^{-\ii(\theta-\theta_0)}
	 		 & 	0
	 		 \\
	 		 0 	&
	 		 	  \Gamma(\alpha)^2 
	 		 	  	(-z |rr_0|/4)^{-\alpha}
	 		\end{pmatrix} 
	 		B(r_0, \theta_0)	
	 		\\
		\mathcal{Q}_{z} (r, \theta, r_0, \theta_0)
			\coloneqq & \
						A(r, \theta) G_z^{reg}(r, \theta, r_0, \theta_0) B(r_0, \theta_0)
			\\
			=& \ A(r, \theta) 
			\begin{pmatrix}
			  G_{z}^{reg,+}(r, \theta, r_0,\theta_0) & 0) \\
			  0 & G_{z}^{reg, -}(r, \theta, r_0, \theta_0)
			\end{pmatrix}
			 B(r_0, \theta_0) \,,
	\end{align}
	with 
	\begin{multline}\label{eq:Green_fun_split_d}
		G_{z}^{reg,\pm} (r, \theta, r_0, \theta_0)
			\coloneqq 
			\bigg [
				\ee^{-\ii \alpha \theta} \,
				{G}_{z}(r, \theta, r_0, \theta_0) \,
				\ee^{\ii \alpha \theta_0} 
				 	\\					
				+C_{\alpha}  
					\left(4 K_{\nu^{\pm}} (\sqrt{-z} r) 
						K_{\nu^{\pm}} (\sqrt{-z} r_0)
						- \Gamma(\nu^{\pm})^2
							\left( 
								\frac{-z rr_0}{4}
							\right)^{-\nu^{\pm}}
					\right) 
					\ee^{\ii (\mp\nu^{\pm}-\alpha) (\theta - \theta_0)}
			\bigg ]  \,.
	\end{multline}
\end{subequations}
Here $\nu^+ := 1-\alpha$ and $\nu^- := \alpha$.

%
To analyse the regular part~$Q_z$, 
we need the following fact about  
the zero'th modified Bessel function~$K_0$.
\begin{Lemma} \label{le:K_0_rewriting}	
	There exist functions  $f,g$ analytic
	on $\Cb\setminus (-\infty, 0]$ and continuous at $0$ such that 		
	$$ 
	  K_0(w)= \log(w)f(w)+g(w) \,. 
	$$
	The functions $f$ and $g$ can be chosen bounded in absolute value 
	on $\Re w > 0$
	by $C_1 \ee^{-C_2 \Re w}$ with some constants $C_{1}, C_2>0$.
	Moreover, $f(0)=1$, $f(w)-f(0) = \mathcal{O}(w^2)$
	and $g(w) = \mathcal{O}(1)$ as $|w|\rightarrow 0$.
\end{Lemma}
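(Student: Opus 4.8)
The plan is to derive the representation of $K_0$ directly from its standard integral or series expansions, because the modified Bessel function of the second kind of order zero is classically known to have a logarithmic term in its expansion around $w=0$. Concretely, I would recall the Frobenius-type formula
\begin{equation*}
  K_0(w) = -\left(\log(w/2) + \gamma\right) I_0(w)
  + \sum_{k=1}^{\infty} \frac{(w/2)^{2k}}{(k!)^2} H_k \,,
\end{equation*}
where $I_0(w) = \sum_{k\ge 0} (w/2)^{2k}/(k!)^2$ is the modified Bessel function of the first kind, $\gamma$ is the Euler--Mascheroni constant, and $H_k = \sum_{j=1}^k 1/j$. Setting $f(w) := -I_0(w)$ and $g(w) := -(\log(1/2)+\gamma) I_0(w) + \sum_{k\ge 1}(w/2)^{2k} H_k /(k!)^2$ immediately gives $K_0(w) = \log(w) f(w) + g(w)$, with both $f$ and $g$ entire (the series converge on all of $\Cb$), hence in particular analytic on $\Cb\setminus(-\infty,0]$ and continuous at $0$. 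Since $I_0(0)=1$ and $I_0(w) = 1 + \mathcal{O}(w^2)$, we get $f(0) = -1$; to match the normalisation $f(0)=1$ claimed in the statement one simply absorbs the sign, i.e. takes $f(w) = I_0(w)$ and correspondingly $g(w) := -(2\log(w/2)+2\gamma)I_0(w) + \log(w)\,\text{(correction)}$ — more cleanly, I would just write $K_0(w) = \log(w)\,I_0(w)\cdot(-1) + g(w)$ and note $-1$ is harmless, or rescale so the statement reads verbatim; either way $f(0)$ is a nonzero constant and $f(w)-f(0) = \mathcal{O}(w^2)$ because only even powers appear. The estimate $g(w) = \mathcal{O}(1)$ as $|w|\to 0$ is clear since $g$ is continuous at $0$.

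For the exponential decay bound on $\Re w > 0$, I would use the second, large-argument side of the story: the classical uniform asymptotic $K_0(w) \sim \sqrt{\pi/(2w)}\, e^{-w}$ together with the integral representation $K_0(w) = \int_0^\infty e^{-w\cosh t}\dd t$, valid for $\Re w > 0$, which gives the clean bound $|K_0(w)| \le e^{-\Re w}\int_0^\infty e^{-\Re w(\cosh t - 1)}\dd t$. The subtlety is that the \emph{decomposition} $f,g$ is pinned down by the behaviour near $0$, whereas the decay is a statement about large $|w|$; the two regimes must be reconciled. The trick I would use: fix a cutoff radius, say $|w|=1$. For $|w|\le 1$ the functions $f,g$ are continuous on a compact set hence bounded, and the bound $C_1 e^{-C_2\Re w}$ holds trivially there by enlarging $C_1$. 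For $|w|\ge 1$ with $\Re w>0$, write $f(w) = (K_0(w)-g(w))/\log w$ — but this is awkward near the imaginary axis where $\log w$ can vanish in real part. Better: observe that $f = -I_0$ exactly (from the series), and $I_0(w) = \frac{1}{\pi}\int_0^\pi e^{w\cos\vartheta}\dd\vartheta$ grows like $e^{|w|}$, which does \emph{not} decay. So $f$ as defined by the series cannot satisfy the decay bound globally.

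This tension means the statement should be read as: the decomposition is \emph{not} unique, and we are free to choose $f,g$ — in particular we may modify $f$ and $g$ away from the origin without affecting either the identity or the local behaviour, since the identity $K_0 = \log(w) f + g$ only constrains the pair jointly. The main obstacle, and the real content of the proof, is therefore to exhibit a \emph{specific} good choice: take a smooth radial cutoff $\chi$ equal to $1$ near $0$ and supported in $|w|\le 1$, set $f(w) := \chi(w) I_0(w)$ (times $\pm1$ as needed), which agrees with the series expansion near $0$ so all the local properties ($f(0)$, $f(w)-f(0)=\mathcal{O}(w^2)$) survive, is compactly supported hence trivially bounded by $C_1 e^{-C_2\Re w}$, but is no longer analytic. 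To keep analyticity one instead replaces the cutoff by an analytic function decaying in the right half-plane — here I would borrow exactly the device the paper uses elsewhere and write $g(w) := K_0(w) - \log(w) f(w)$ with $f$ chosen analytic, decaying like $e^{-C_2\Re w}$, and with the prescribed $2$-jet at $0$; the existence of such an $f$ follows by multiplying $I_0$ truncated at order $2$ by a suitable analytic damping factor such as a ratio of $K_0$'s, and then $g$ inherits the decay from $K_0$ and from $\log(w)f(w)$. I expect the bookkeeping to make $g$ genuinely analytic on the cut plane, continuous at $0$, and $\mathcal{O}(1)$ there to be the step that needs the most care, since $\log(w)$ is singular at $0$ and one must check the singular terms cancel against those in $K_0$; this cancellation is precisely what $f(0)=1$ (up to sign) encodes.
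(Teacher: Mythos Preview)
Your diagnosis is correct and matches the paper's strategy: start from the series
$K_0(w) = -(\log(w/2)+\gamma)I_0(w) + \sum_{k\geq 1} H_k (w/2)^{2k}/(k!)^2$, so that the naive choice $f = -I_0$ does everything required near $w=0$ but fails the decay bound because $I_0$ grows like $\ee^{\Re w}$ in the right half-plane. You also correctly identify the remedy in principle --- multiply by an analytic factor equal to $1$ at the origin and decaying for $\Re w>0$ --- but you never actually produce one. The smooth cutoff is, as you note yourself, not analytic, and the phrase ``a ratio of $K_0$'s'' is neither specified nor obviously well-defined and nonvanishing on the whole cut plane. That is the genuine gap: the proposal stops exactly at the point where a concrete choice is needed.

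The paper's resolution is a single line: take $f(w) := -I_0(w)\,\ee^{-2w}$ and set $g(w) := K_0(w) - \log(w)f(w)$. The factor $\ee^{-2w}$ is entire and equals~$1$ at the origin, so the local behaviour of $-I_0$ is inherited; and since $|I_0(w)| \leq C\,\ee^{\Re w}/\sqrt{|w|}$ for large $|w|$ with $\Re w>0$, the product satisfies $|f(w)|\leq C\,\ee^{-\Re w}/\sqrt{|w|}$, after which the decay of~$g$ follows from that of $K_0$ and of $\log(w)f(w)$.

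Two side remarks. First, the sign discrepancy you wrestle with is simply a slip in the statement: with the paper's own~$f$ one gets $f(0)=-1$, which is harmless for the application. Second, your argument that $f(w)-f(0)=\mathcal{O}(w^2)$ ``because only even powers appear'' no longer holds once you multiply by a damping factor with a nonzero linear term such as $\ee^{-2w}$; indeed the paper's~$f$ satisfies only $f(w)-f(0)=\mathcal{O}(w)$. This does not matter, since the sole use of the lemma (bounding $F_1$ in the next Hilbert--Schmidt estimate) requires merely $\mathcal{O}(w^\beta)$ for some $\beta>0$.
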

\begin{proof}	
	Recalling the series \cite[\S\,9.6.12.--13]{AS64}
	\begin{align}\label{eq:K_0_series}
		K_0(w) = -(\log (w/2) + \gamma) I_0 
				+ \sum_{k\geq 1} (1+1/2+1/3+\cdots+1/k)\frac{(w/2)^{2k}}{(k!)^2}
		\quad
		\text{ and }	
		\quad
		I_0(w) = \sum_{k\geq 0} \frac{(w/2)^{2k}}{(k!)^2} \,,
	\end{align}	
	we define 
	$f(w)\coloneqq -I_0 \ee^{-2w}$ and $g(w) = K_0(w)-\log(w/2)f(w)$.
	With this choice $f$ is entire and $g$ is analytic outside of the 
	cut $(-\infty, 0]$.
	Note that $g$ is continuous at $0$
	and all the claimed properties follow from~\eqref{eq:K_0_series},
	 definitions of $f$ and $g$
	 and the asymptotics of $I_0, K_0$ for large arguments 
	 \cite[\S\,9.7.1--2]{AS64}.
\end{proof}

First of all, we argue that the Birman--Schwinger operator
associated with 
$H = U_\alpha (-\Delta_{a_\alpha}) U_\alpha^{-1}$
(recall~\eqref{eq:Unitary})
is regular, in agreement with its subcriticality. 

\begin{Lemma}\label{le:Q_HS1}
	Suppose Assumption~\ref{as:potential}.
	There exists a positive constant~$C$ such that,
	for all $z\in \Com\setminus [0, +\infty)$,
$$
  \| A (H-z)^{-1}  B \|_\mathrm{HS} \leq C
  \,.
$$	
\end{Lemma}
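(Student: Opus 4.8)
The plan is to bound the Hilbert--Schmidt norm by controlling the integral kernel $\mathcal{K}_z(x,x_0) := A(x)\, G_z(x,x_0)\, B(x_0)$ of the operator $A(H-z)^{-1}B$ pointwise, using the explicit formula~\eqref{eq:plain_Green_fun} for $G_z$ together with Lemma~\ref{le:K_0_rewriting}. Since $|A(x)| = |B(x)| = |V(x)|^{1/2}$ (both being $\sqrt[4]{V^*V}$ up to a unitary), we have $\|\mathcal{K}_z(x,x_0)\| \le |V(x)|^{1/2}\, |G_z(x,x_0)|\, |V(x_0)|^{1/2}$, so it suffices to prove
\begin{equation*}
  \iint_{\Real^2 \times \Real^2} |V(x)|\, |G_z(x,x_0)|^2\, |V(x_0)| \dd x \dd x_0 \le C
\end{equation*}
uniformly in $z \in \Com\setminus[0,+\infty)$. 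The key point, as emphasised in the paragraph preceding Lemma~\ref{le:K_0_rewriting}, is that $H$ is subcritical, so $G_z$ has \emph{no} singularity as $z \to 0$ (unlike $G_z^\pm$); the only singularity of $K_0(w)$ at $w=0$ is logarithmic and stays logarithmic because of the cancellation encoded in Lemma~\ref{Lem.residual}.

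First I would estimate the first line~\eqref{eq:plain_Green_fun_a}: writing $K_0(\sqrt{-z}|x-x_0|) = \log(\sqrt{-z}|x-x_0|) f(\sqrt{-z}|x-x_0|) + g(\sqrt{-z}|x-x_0|)$ via Lemma~\ref{le:K_0_rewriting}, and using that $|f|, |g|$ are bounded by $C_1 e^{-C_2 \Re\sqrt{-z}|x-x_0|}$ on $\Re w > 0$, one gets a bound of the form $|G_z^{(a)}(x,x_0)| \le C\,(1 + |\log|x-x_0|| + |\log\sqrt{-z}|\, e^{-c\sqrt{-z}|x-x_0|})$. The term $|\log\sqrt{-z}|$ is the dangerous one as $|z|\to 0$, but it appears only multiplied by the decaying exponential, and $\sup_{t>0} |\log\sqrt{-z}| e^{-c\sqrt{-z}\,t}$ — here thinking of $t = |x-x_0|$ — is \emph{not} uniformly bounded; the correct observation is that in~\eqref{eq:plain_Green_fun} the $z$-dependent logarithm from line~\eqref{eq:plain_Green_fun_a} is exactly cancelled by the one from line~\eqref{eq:plain_Green_fun_b}, by Lemma~\ref{Lem.residual}. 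So the cleaner route is to treat $G_z$ as a whole: combine~\eqref{eq:plain_Green_fun_a} and~\eqref{eq:plain_Green_fun_b}, use $\log(\sqrt{-z}R(s)) = \log\sqrt{-z} + \log R(s)$ and $\log(\sqrt{-z}|x-x_0|) = \log\sqrt{-z} + \log|x-x_0|$, apply Lemma~\ref{Lem.residual} to kill the $\log\sqrt{-z}$ contributions, and retain the exponential decay factors $e^{-c\sqrt{-z}\,(\cdot)}$ (which are harmless, bounded by $1$). This yields $|G_z(x,x_0)| \le C\big(1 + |\log|x-x_0|| + \int_{-\infty}^\infty (1 + |\log R(s)|) \frac{e^{-\alpha s}}{|1+e^{-s+i(\theta-\theta_0)}|}\dd s\big)$; since $R(s)^2 = r^2 + r_0^2 + 2rr_0\cosh s \ge \max\{r^2, r_0^2, rr_0 e^{|s|}\}$, the $s$-integral converges and is bounded by $C(1 + |\log r| + |\log r_0|)$, giving the clean pointwise bound $|G_z(x,x_0)| \le C\,(1 + |\log|x-x_0|| + |\log r| + |\log r_0|)$, uniformly in $z$.

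With this bound in hand, the Hilbert--Schmidt integral splits into manageable pieces. The contribution of the constant $1$ is controlled by $\|\,|V|\,\|_{L^1}^2 < \infty$. The contribution of $|\log|x-x_0||^2$: split into $|x-x_0| < 1$ and $|x-x_0| \ge 1$; on the near region use the standard fact that $\log$ is locally $L^p$ for all $p<\infty$ together with Young's inequality and $|V| \in L^{1+\delta}$ (Assumption~\ref{as:potential} — this is precisely why the $L^{1+\delta}$ condition is imposed), and on the far region $|\log|x-x_0||^2 \le C(|x|^{2\nu} + |x_0|^{2\nu} + 1)$ for large arguments, absorbed by the weighted-$L^1$ hypothesis $\int |V|(|x|^{2\nu} + |x|^{-2\nu}) < \infty$. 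The contributions of $|\log r|^2$ and $|\log r_0|^2$ factorise as $\big(\int |V(x)|\,|\log|x||^2\dd x\big)\big(\int |V(x_0)|\dd x_0\big)$; near the origin $|\log r|^2 \le C\,|x|^{-2\nu'}$ for any $\nu' > 0$ hence $\le C\,|x|^{-2\nu}$, and near infinity $|\log r|^2 \le C(|x|^{2\nu}+1)$, both absorbed by~\eqref{eq:V_assumptions}. The main obstacle is organising the near-diagonal logarithmic singularity uniformly in $z$ — i.e., being careful that the exponential factors $e^{-c\sqrt{-z}(\cdot)}$ never cost anything and that the cancellation of $\log\sqrt{-z}$ via Lemma~\ref{Lem.residual} is applied correctly under the $s$-integral — and verifying that $L^{1+\delta}$ plus the two-sided polynomial weights exactly cover both the local $\log^2$ singularity and its polynomial growth at infinity.
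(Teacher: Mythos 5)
Your treatment of the small-$|z|$ regime is essentially the paper's: subtract the $z$-dependent logarithm $\log(\sqrt{-z})f(0)$ from $K_0$, invoke Lemma~\ref{Lem.residual} to cancel it between the two lines of~\eqref{eq:plain_Green_fun}, and reduce to weighted integrals of $|V|$ covered by Assumption~\ref{as:potential}. Your Young's-inequality treatment of the near-diagonal $\log^2$ singularity (using $|V|\in L^{1+\delta}$ and the local $L^q$-integrability of the logarithm) is a legitimate, slightly more elementary substitute for the paper's Hardy--Littlewood--Sobolev step, and your handling of the $|\log r|$, $|\log r_0|$ and far-field terms via the two-sided polynomial weights matches the paper.

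The genuine gap is the large-$|z|$ regime. Lemma~\ref{Lem.residual} cancels only the piece $\log(\sqrt{-z})\,f(0)$ times the bracket that vanishes; what survives from line~\eqref{eq:plain_Green_fun_a} is $\log(\sqrt{-z})\bigl(f(\sqrt{-z}\,|x-x_0|)-f(0)\bigr)$, and the analogous expression under the $s$-integral of line~\eqref{eq:plain_Green_fun_b}. Since $f(w)-f(0)=\mathcal{O}(w^2)$ only as $w\to 0$, while $f(w)\to 0$ so that $f(w)-f(0)$ tends to the nonzero constant $-f(0)$ as $\Re w\to\infty$, the best uniform bound is $|f(w)-f(0)|\le C|w|^{\beta}$, which yields $|\log\sqrt{-z}|\,|\sqrt{-z}|^{\beta}\,|x-x_0|^{\beta}$. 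The prefactor $|\log\sqrt{-z}|\,|\sqrt{-z}|^{\beta}$ is bounded for $|z|\le 1$ but diverges as $|z|\to\infty$, so your claimed $z$-uniform pointwise bound $|G_z|\le C(1+|\log|x-x_0||+|\log r|+|\log r_0|)$ is not established for $|z|>1$ by this route; note also that $\Re\sqrt{-z}$ can be arbitrarily small even for large $|z|$ (when $z$ approaches the positive half-axis), so the exponential factors $\ee^{-c\Re\sqrt{-z}\,(\cdot)}$ cannot absorb the logarithm either. The paper closes this by a separate argument for $|z|>1$: there one abandons the $\log(w)f(w)+g(w)$ decomposition and uses instead the logarithm-free bound $|K_0(\sqrt{-z}\,t)|\le C(|\sqrt{-z}\,t|^{-\beta}+1)\le C(t^{-\beta}+1)$, valid on $\Re\sqrt{-z}>0$, and the resulting negative-power and constant integrals are exactly the ones already controlled in the first part. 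You need to add this (or an equivalent) second regime to obtain uniformity over all of $\Com\setminus[0,+\infty)$.
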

\begin{proof}
Recall that the integral kernel of $A (H-z)^{-1} B$
reads $A(x)G_z(x, x_0)B(x_0)$, where the Green function~$G_z$
is given in~\eqref{eq:plain_Green_fun}.
	We separate the proof of finiteness of the Hilbert--Schmidt norm
	\begin{align*}
	\| A (H-z)^{-1} B \|_\mathrm{HS} =
		\int_{\Rb^2 \times \Rb^2} |V(x)| \, |G_z (x, x_0)|^2 \, |V(x_0)| \,
		\dd x \dd x_0
	\end{align*}	 
	into two parts:
	\begin{enumerate}
	 \item showing that it is bounded uniformly in $|z|\in (0, 1]$,
	 \item showing that is is bounded uniformly for any $|z|>1$.
\end{enumerate}
\textbf{Ad 1.} We start with the bound for small $|z|$, 
as some of the estimates will be useful in proving the second point too.
		We add and subtract the logarithmic singularity $\log(\sqrt{-z}) f(0)$ 
		in the Green function of the Friedrichs extension $G_{z}$ given by
		\eqref{eq:plain_Green_fun} and using 
		Lemma~\ref{Lem.residual},
		we divide it in two summands
\begin{align*}
	G_{z}(x, x_0)
		&= \hat{C} F_1(|x-x_0|) +F_2(x, x_0) 		
\end{align*}
with 
\begin{align*}
	F_1(t) &\coloneqq 
			[K_0(\sqrt{-z} t)-\log(\sqrt{-z})f(0)]	\,, \quad t\geq 0 \,,  \\
	F_2(x, x_0) &\coloneqq 
		\frac{\sin(\pi \alpha)}{\pi} 
			\int_{-\infty}^{\infty} \frac{1}{2\pi} 
			F_1(R(s))
			\frac{\ee^{-\alpha s + \ii \alpha (\theta - \theta_0)}}{1 + \ee^{-s + \ii (\theta - \theta_0)}} \dd s  \,.
\end{align*}
Here~$f$ is as in Lemma~\ref{le:K_0_rewriting} and it follows that
 	\begin{align} \label{eq:F_1_bound}
 	\begin{split}
 		|F_1(t)|
 		&=
 			|\log(\sqrt{-z}) (f(\sqrt{-z} t) - f(0)) + g(\sqrt{-z} t) 
 				+\log t f(\sqrt{-z} t)|	\\					
 		&\leq 
 			C_1(|\log(\sqrt{-z})| |\sqrt{-z} t|^{\beta} 
 			+\log t +1)  \,, 
 	\end{split}			
 	\end{align}
 	for some constant $C_1>0$ and any $\beta\in (0,1)$.
 	Let us remark, that for any fixed $t\in (0, \infty)$ this stays 
 	bounded uniformly in $|z|\in (0, 1)$.

 	Turning our attention for a moment to $F_2$, we remark that  	 
 	for a fixed difference $\theta-\theta_0\neq \pm \pi$, 
 	we have the exponential decay for large $s$ of the fraction
 	\begin{align*}
 		\bigg|
 			\frac{\ee^{-\alpha s + \ii \alpha (\theta - \theta_0)}}
 	 			{1+\ee^{-s + \ii (\theta - \theta_0)}}
 	 	\bigg|^2
 	 	=
 	 	\frac{ \ee^{-2\alpha s} }
 	 		{1+2\ee^{-s}\cos(\theta-\theta_0)+ \ee^{-2s}}
 	 		\leq C \ee^{-2|s| \min\{\alpha, 1-\alpha\}} 
 	\end{align*}
 	with some $C>0$.
 	We also notice that
 	$|\frac{1}{2}\log(R(s)^2)| \leq 
 		|s| + |\log (r+r_0)|$
 	and that 
 	$R(s)^{\beta} \leq (r+r_0)^{\beta} \ee^{\beta |s|}$.
 	Hence  with an arbitrary choice 
 	$0<\beta< \min \{\alpha, 1-\alpha\} $ 
 	we conclude by \eqref{eq:F_1_bound}
 	that the integral in $F_2$
 	is convergent for every fixed pair $r, r_0$. 
	
	Due to these estimates, it is only left to show the finiteness  
		\begin{align*}
			\int_{\Rb^2\times \Rb^2} |V(x)| \,
				\big|1+\log(a)+a^{\beta} \big|^2 \, |V(x_0)| \, 
				\dd x \dd x_0 <\infty  
			\quad
			\text{for}
			\quad
			a\in\{r+r_0, |x-x_0|\} \,.	 	
		\end{align*} 
		Since  
		\begin{align}\label{eq:log_ineq}
	 		|\log a|
	 		\leq C_\beta
	 			\begin{cases}
	 				(1+a^{-\beta}) & \text{ if } a \in(0, 1] \,, \\
	 				(1+a^{\beta}) & \text{ if }  a\in [1, \infty) \,,
	 			\end{cases} 
		\end{align}
		where $C_{\beta}$ is some constant dependent only on $\beta$,
		it is enough to bound the three integrals 
		\begin{align} \label{eq:four_terms}
			\int_{\Rb^2\times \Rb^2} |V(x)| 
			\left.
				\begin{cases}
					1	\\
					a^{-2\beta}	\\
					a^{2\beta} \\
				\end{cases}
			\right\} \,
			|V(x_0)| \dd x \dd x_0 
			\quad
			\text{for}
			\quad
			a\in\{r+r_0, |x-x_0|\} \,.	 	
		\end{align}
		Finiteness of the first integral with the constant middle term
		is a direct consequence of Assumption~\ref{as:potential}.
		To treat the positive power, notice that since 
		$\beta\leq \max\{\alpha, 1-\alpha\} = \nu$ we have
		\begin{align*}
		 	|x-x_0|^{2\beta}
		 		\leq (r+r_0)^{2\beta}
		 		\leq 2^{\nu} (1+|x|^{2\nu}) (1+|x_0|^{2\nu})
		\end{align*}				
		and the integrability follows by our Assumption~\ref{as:potential}.
		
		For the negative power we first estimate
		$(r+r_0)^{-\beta} \leq |x-x_0|^{-\beta} \leq 1+ |x-x_0|^{-\mu}$ for any $\mu \geq \beta$.
		Then the integrability follows under our assumptions
		by the Hardy--Littlewood--Sobolev inequality 
		\cite[Thm.~4.3]{LL01} choosing $\beta \leq \mu = \frac{\delta}{1+\delta}$.
		Here, $\delta>0$ is as in~\eqref{eq:V_assumptions}.

	To finish the proof we point out that 
	$G_z$ is continuous at $\theta - \theta_0 = \pm \pi$, 
	see Remark~\ref{Rem.three}. 
\medskip \\
\textbf{Ad 2.} 
By the analyticity of $K_0(\sqrt{-z}t)$ on $\Re\sqrt{-z}>0$ with $t>0$, 
by the behaviour for large arguments \cite[\S\,9.7.2]{AS64} 
and by Lemma~\ref{le:K_0_rewriting}, we can bound
\begin{align*}
	|K_0(\sqrt{-z}t)| \leq C (|\sqrt{-z}t)|^{-\beta} + 1)
	\qquad \text{for all} \qquad
	t\in (0, \infty), \
	\Re\sqrt{-z}>0 \,,
\end{align*}
with some constant $C>0$ and arbitrary $\beta>0$.
For all $t\in (0, \infty)$ this is uniformly bounded in $|z|>1$.
The statement is then a consequence 
of the finiteness of the top two integrals in \eqref{eq:four_terms}, 
which was shown in the first part of the proof.
\end{proof}
\begin{Lemma}\label{le:L_Q_HS}
	Suppose Assumption~\ref{as:potential}.  
	The operators $L_z$ and $Q_z$  
 are Hilbert--Schmidt
	for all $z\in \Com \setminus [0, +\infty)$.
\end{Lemma}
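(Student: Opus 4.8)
The plan is to estimate the Hilbert–Schmidt norms of $L_z$ and $Q_z$ separately, using the explicit kernels in~\eqref{eq:Green_fun_split} together with the small-argument Bessel asymptotics and the integrability conditions of Assumption~\ref{as:potential}. For $L_z$, note that $\|L_z\|_\mathrm{HS}^2 = \int_{\Rb^2\times\Rb^2} |V(x)|\,|\mathcal L_z(x,x_0)|^2\,|V(x_0)|\dd x\dd x_0$, and from~\eqref{eq:L_z} the kernel is a sum of two terms whose moduli are bounded (up to a $|z|$-dependent constant) by $|rr_0|^{\alpha-1}$ and $|rr_0|^{-\alpha}$. Hence the integral factorises as a product of two one-variable integrals of the form $\int_{\Rb^2} |V(x)|\,|x|^{\pm 2(\alpha-1)}\dd x$ and $\int_{\Rb^2}|V(x)|\,|x|^{\mp 2\alpha}\dd x$; since $\nu=\max\{\alpha,1-\alpha\}$ dominates both $1-\alpha$ and $\alpha$, these are finite by the first condition in~\eqref{eq:V_assumptions}. (The $|z|$-dependence of the constant is harmless: for fixed $z\in\Com\setminus[0,+\infty)$ it is just a finite multiplier, which is all that is claimed.)

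For $Q_z$, the kernel is $A(x)\,G_z^{reg,\pm}(x,x_0)\,B(x_0)$ with $G_z^{reg,\pm}$ given by~\eqref{eq:Green_fun_split_d}. The regularised Green function consists of two pieces: the conjugated Friedrichs Green function $\ee^{-\ii\alpha\theta}G_z(x,x_0)\ee^{\ii\alpha\theta_0}$, and the Bessel combination $4K_{\nu^\pm}(\sqrt{-z}r)K_{\nu^\pm}(\sqrt{-z}r_0)-\Gamma(\nu^\pm)^2(-zrr_0/4)^{-\nu^\pm}$. For the first piece, the multiplicative phases have modulus one, so its contribution to $\|Q_z\|_\mathrm{HS}$ is controlled exactly by the bound on $\|A(H-z)^{-1}B\|_\mathrm{HS}$ already established in Lemma~\ref{le:Q_HS1}. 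For the Bessel combination, expanding each $K_{\nu^\pm}$ via~\eqref{eq:Bessel_asy_small} shows that the leading singular term $(\sqrt{-z})^{-2\nu^\pm}(\cdots)$ cancels against the subtracted $\Gamma(\nu^\pm)^2(-zrr_0/4)^{-\nu^\pm}$, leaving a remainder that is bounded by $C\big((r_0/r)^{\nu^\pm}+(r/r_0)^{\nu^\pm}\big) + |z|\text{-dependent terms involving }(rr_0)^{\nu^\pm}$ and $(rr_0)^{\nu^\pm-\text{(larger powers)}}$, as in~\eqref{eq:KK_1}--\eqref{eq:KK_2}. Each such term, weighted by $|V(x)||V(x_0)|$, integrates to something finite: the ratio terms $(r_0/r)^{\nu^\pm}$ split into $\int|V|\,|x|^{-\nu^\pm}\cdot\int|V|\,|x|^{\nu^\pm}$, finite again since $\nu^\pm\le\nu$; the positive-power terms reduce to moments of order $2\nu^\pm\le 2\nu$, controlled by~\eqref{eq:V_assumptions}. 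One must also handle the uniformity in $|z|>1$ separately, but there $K_{\nu^\pm}(\sqrt{-z}r)$ decays exponentially for large $r$ and is bounded by $C(|\sqrt{-z}r|^{-\nu^\pm}+1)$ for small $r$, so the same moment bounds apply, exactly as in part~\textbf{Ad 2} of the proof of Lemma~\ref{le:Q_HS1}.

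Finally, one should remark that $Q_z = R_{z,\epsilon}/\epsilon - L_z$ in the sense of kernels, so once $L_z$ and the full Birman--Schwinger kernel $A(H_\alpha-z)^{-1}B$ are shown Hilbert--Schmidt, $Q_z$ is automatically Hilbert--Schmidt as well; but since the decomposition~\eqref{eq:Green_fun_split_d} is tailored precisely so that the explicit singular part is extracted, I would carry out the direct estimate on $G_z^{reg,\pm}$ rather than rely on a subtraction argument that would require knowing $A(H_\alpha-z)^{-1}B$ is Hilbert--Schmidt first. The main obstacle is bookkeeping the various powers of $r$, $r_0$ and $\sqrt{-z}$ appearing in the Bessel expansions and verifying that each surviving power is strictly within the range permitted by $\nu$; this is routine given~\eqref{eq:Bessel_asy_small} and Assumption~\ref{as:potential}, but it is where the choice $\nu=\max\{\alpha,1-\alpha\}$ is used essentially, and where one must be careful that the cancellation of the leading singularity is exact and not merely of lower order.
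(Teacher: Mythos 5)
Your proposal is correct and follows essentially the same route as the paper's (much terser) proof: the $L_z$ claim by factorising the Hilbert--Schmidt integral into moments of $|V|$ controlled by Assumption~\ref{as:potential}, and the $Q_z$ claim by invoking Lemma~\ref{le:Q_HS1} for the conjugated $G_z$ piece and the analyticity plus small/large-argument behaviour of $K_{\nu^\pm}$ for the remaining Bessel combination. Your extra care about the exact cancellation of the leading $(-zrr_0/4)^{-\nu^\pm}$ singularity is not strictly needed for fixed $z$ (each term is separately integrable there), but it is exactly what the paper uses later for the uniform bound in Lemma~\ref{Le:Q_is_HS}, so it is a sensible addition rather than a detour.
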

\begin{proof}
Recall the formulae for~$L_z$ and~$Q_z$ given in~\eqref{eq:Green_fun_split}. 
The claim for~$L_z$ is obvious. 
	Taking Lemma~\ref{le:Q_HS1} into account, 
	the statement for~$Q_z$ follows directly from definitions 
	\eqref{eq:Green_fun_split} by the analyticity of the Bessel 
	functions on the complex half-plane 
	with positive real part and their decay 
	properties~(\cf \cite[Eqs.~(9.6.2), (9.6.10)]{AS64})
	\begin{align}\label{eq:K_nu_decay}
	K_{\mu}(w) = \mathcal{O}(\ee^{-w}/\sqrt{w})
	\qquad
	\text{as}
	\qquad
	 |w|\rightarrow \infty
\end{align}	
for $\Re{w}>0$ and $\mu\in\Rb$.
\end{proof}

%
\begin{Proposition}\label{Prop:compact_res_difference}
	Suppose Assumption~\ref{as:potential}.
	For negative~$z$ with all sufficiently large~$|z|$, 
	the difference  
	\begin{align*}
		(H_\alpha + \epsilon V -z)^{-1} -  (H_\alpha-z)^{-1}
	\end{align*}
	is a compact operator.
	Moreover, 
	$A(H_\alpha - z)^{-1/2}$ and $B(H_\alpha - z)^{-1/2}$ 
	are also compact.
\end{Proposition}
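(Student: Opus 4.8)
The plan is to reduce the compactness of the resolvent difference to the already-established Hilbert--Schmidt properties of the Birman--Schwinger operators. First I would use the second resolvent (Konno--Kuroda / Birman--Schwinger) identity. Writing $V = BA$ with $A,B$ the multiplication operators from~\eqref{eq:AB}, one has formally
\begin{equation*}
  (H_\alpha + \epsilon V - z)^{-1} - (H_\alpha - z)^{-1}
  = - \epsilon \, (H_\alpha - z)^{-1} B \, (I + R_{z,\epsilon})^{-1} A \, (H_\alpha - z)^{-1} \,,
\end{equation*}
valid for negative~$z$ with $|z|$ large enough that $\|R_{z,\epsilon}\| < 1$, so that $(I + R_{z,\epsilon})^{-1}$ exists as a bounded operator by a Neumann series. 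Here $R_{z,\epsilon} = \epsilon A (H_\alpha - z)^{-1} B$ is exactly the operator in~\eqref{BS}, whose Hilbert--Schmidt norm is controlled by Lemma~\ref{le:Q_HS1} (applied to each diagonal block after the unitary conjugation) together with Lemma~\ref{le:L_Q_HS}. The point is that the middle factor is bounded, so the compactness of the whole product follows once the two outer factors are shown to be compact.

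Next I would establish the second assertion, namely that $A (H_\alpha - z)^{-1/2}$ and $B(H_\alpha - z)^{-1/2}$ are compact; the first assertion then follows immediately since
\begin{equation*}
  (H_\alpha - z)^{-1} B = \big( (H_\alpha - z)^{-1/2} \big) \big( (H_\alpha - z)^{-1/2} B \big)
\end{equation*}
is bounded times compact, and similarly for the $A$ factor, making the displayed resolvent difference a product of a compact operator, a bounded operator, and another compact operator. To show $A(H_\alpha - z)^{-1/2}$ is compact I would argue that it is Hilbert--Schmidt: since $|A|^2 = |V|$ and $(H_\alpha - z)^{-1} = (H_\alpha - z)^{-1/2}\big((H_\alpha - z)^{-1/2}\big)^*$ when $z$ is real negative, one has
\begin{equation*}
  \| A(H_\alpha - z)^{-1/2} \|_{\mathrm{HS}}^2 = \mathrm{Tr}\, A(H_\alpha - z)^{-1} A^* = \int_{\Rb^2} |V(x)| \, G_z^{\mathrm{diag}}(x,x) \dd x \,,
\end{equation*}
where $G_z^{\mathrm{diag}}$ denotes the diagonal of the (matrix) Green kernel from Proposition~\ref{prop:Green_fun_Pauli}. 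For $z$ negative with $|z|$ large, the on-diagonal Green function is bounded (using $K_0(w) \le C(|w|^{-\beta}+1)$ and the decay~\eqref{eq:K_nu_decay} of the Bessel terms, exactly as in part~2 of the proof of Lemma~\ref{le:Q_HS1}, together with the cancellation of the logarithm via Lemma~\ref{Lem.residual}), in fact $G_z^{\mathrm{diag}}(x,x) \le C(|x|^{-2\beta} + |x|^{2\beta} + 1)$ with $\beta < \nu$, and then $|V| \in L^1(\Rb^2)$ with the weighted bounds of Assumption~\ref{as:potential} gives finiteness. The same computation with $A$ replaced by $B$ works since $|B(x)| = |A(x)|$.

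The main obstacle I expect is the on-diagonal control of the Green function: away from the diagonal the kernel $G_z$ is harmless, but at coinciding points the modified Bessel function $K_0$ and the Bessel terms $K_{\nu^\pm}$ are singular, and one must check that the singular contributions are integrable against $|V|$. This is precisely the content of the estimates already carried out in the proof of Lemma~\ref{le:Q_HS1} (the bounds~\eqref{eq:F_1_bound}, \eqref{eq:log_ineq}, \eqref{eq:four_terms} and the Hardy--Littlewood--Sobolev step), so the work has essentially been done; here it is even easier because $|z|$ is large and negative, so no delicate $|z| \to 0$ analysis is needed and the exponential decay of the Bessel functions renders the large-distance behaviour trivially integrable. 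A minor technical point is to justify the resolvent identity rigorously under Assumption~\ref{as:potential}, which is handled by the Birman--Schwinger framework invoked in Remark~\ref{Rem.BS}, and to note that everything respects the block-diagonal decomposition of $H_\alpha$ from~\eqref{direct}, reducing all estimates to the two scalar operators $H_\mathrm{max}^\pm$.
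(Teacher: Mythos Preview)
Your overall architecture (resolvent identity, Neumann series for $(I+R_{z,\epsilon})^{-1}$, compactness of the outer factors) matches the paper's approach, but the argument you give for the compactness of $A(H_\alpha-z)^{-1/2}$ has a genuine gap: this operator is \emph{not} Hilbert--Schmidt in two dimensions, so the trace computation you propose diverges.

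Concretely, you write
\[
  \|A(H_\alpha-z)^{-1/2}\|_{\mathrm{HS}}^2
  =\mathrm{Tr}\,A(H_\alpha-z)^{-1}A^{*}
  =\int_{\Rb^2}|V(x)|\,G_z^{\mathrm{diag}}(x,x)\dd x
\]
and then assert that $G_z^{\mathrm{diag}}(x,x)$ is bounded for large negative~$z$. But the leading piece of the Green function in~\eqref{eq:plain_Green_fun_a} is $\hat{C}(\theta-\theta_0)\,K_0(\sqrt{-z}\,|x-x_0|)$, and $K_0(w)\sim -\log w$ as $w\to 0$; hence $G_z(x,x)=+\infty$ for every fixed $z$ and every $x\neq 0$. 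The cancellation provided by Lemma~\ref{Lem.residual} removes the $\log(\sqrt{-z})$ singularity as $|z|\to 0$, not the spatial diagonal singularity as $|x-x_0|\to 0$; you have conflated the two. Equivalently, $A(H_\alpha-z)^{-1}A^{*}$ is Hilbert--Schmidt (an off-diagonal statement) but not trace class (the on-diagonal statement you would need). This is the familiar fact that in dimension two $|V|^{1/2}(-\Delta+1)^{-1/2}\in\mathfrak{S}_p$ only for $p>2$.

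The fix is exactly what the paper does: with $P:=A(H_\alpha-z)^{-1/2}$ one has $PP^{*}=A(H_\alpha-z)^{-1}A^{*}$, whose kernel is $A(x)G_z(x,x_0)A^{*}(x_0)$, and \emph{this} is Hilbert--Schmidt by the same estimates as for $R_{z,\epsilon}$ (Lemma~\ref{le:L_Q_HS}), since the Hilbert--Schmidt norm integrates the kernel over $\Rb^2\times\Rb^2$ and never sees the diagonal. From $PP^{*}\in\mathfrak{S}_2$ one gets $P\in\mathfrak{S}_4$, in particular $P$ compact; the same holds for $B(H_\alpha-z)^{-1/2}$ because $|B|=|A|$. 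With this correction your factorisation $(H_\alpha-z)^{-1}B=(H_\alpha-z)^{-1/2}\cdot\big((H_\alpha-z)^{-1/2}B\big)$ is bounded times compact, and the rest of your argument goes through.
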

\begin{proof}
By Lemma~\ref{le:L_Q_HS} and Proposition~\ref{prop:Green_fun_Pauli},
the Birman--Schwinger operator~$R_{z, \epsilon}$ 
introduced in~\eqref{BS} and decomposed in~\eqref{BS.decomposed}
is a Hilbert--Schmidt operator for all $z\in(-\infty,0)$.
Moreover, its integral kernel $\mathcal{R}_{z, \epsilon}$ 
	tends pointwise to zero as $z\rightarrow -\infty$.
	The second part of the proof of Lemma~\ref{le:Q_HS1} 
	and~\eqref{eq:K_nu_decay}
	justify using the dominated convergence on the integral kernel
	$\mathcal{R}_{z, \epsilon}$
	to deduce 
	$\|R_{z,\epsilon}\|_\mathrm{HS} 
	= \|A (H_\alpha-z)^{-1}A\|_\mathrm{HS} 
	\rightarrow 0$ as $z\rightarrow -\infty$.
	Thus, along the lines of \cite[Ex.~7 of Sec.~XIII.4]{RS4-78}, 
	we can argue that the difference of resolvents
	\begin{align*}
		(H_\alpha + \epsilon V -z)^{-1} -  (H_\alpha-z)^{-1}
		 = -\sum _{n=0}^{\infty} \epsilon^{n+1}
		 (H_\alpha-z)^{-1} B (-A (H_\alpha-z)^{-1} B)^{n} A(H_\alpha-z)^{-1}
	\end{align*}
	is a compact operator for some negative~$z$ with large~$|z|$.
	For this conclusion we use the above observed fact that 
	$P^{\ast}P$ with
	$P\coloneqq A (H_\alpha-z)^{-1/2}$
	tends to zero in the Hilbert--Schmidt norm 
	as $z\rightarrow -\infty$.
	It then follows that 
	$A(H_\alpha-z)^{-1/2}$ and $B(H_\alpha-z)^{-1/2}$ 
	are also compact operators.
\end{proof}

%
\begin{Remark}\label{Rem.BS}
It follows from Proposition~\ref{Prop:compact_res_difference}
that $A(H_\alpha - z)^{-1/2}$ and $B(H_\alpha - z)^{-1/2}$ 
are bounded operators for all negative~$z$ with sufficiently large~$|z|$.
It justifies the usage of the Birman--Schwinger principle
in the spirit of~\cite{HK22}.
Moreover, the perturbation~$V$ is relatively form bounded 
with respect to~$H_\alpha$. By making~$\epsilon$ small,
the relative bound can be made arbitrarily small.
This justifies the sum $H_\alpha + \epsilon V$,
which should be understood in the sense of forms.
\end{Remark}

Combining Proposition~\ref{Prop:compact_res_difference}
with \cite[Thm.~XIII.14]{RS4-78}, 
we obtain the stability of the essential spectrum.
\begin{Corollary} \label{cor:stability_ess_spec}
Suppose Assumption~\ref{as:potential}. Then
	\begin{align*}
	 	\sigma_{\mathrm{ess}}(H_\alpha + \epsilon V)
	 		= \sigma_{\mathrm{ess}}(H_\alpha)
	 		= [0, +\infty) \,.
	\end{align*}
\end{Corollary}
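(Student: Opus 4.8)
The plan is to deduce the statement from the abstract Weyl-type stability of the essential spectrum under perturbations of the resolvent by a compact operator, the required compactness having already been secured in Proposition~\ref{Prop:compact_res_difference}. Concretely, I would first invoke Proposition~\ref{Prop:compact_res_difference} to fix a negative real number~$z_0$ with $|z_0|$ large enough that the difference $(H_\alpha + \epsilon V - z_0)^{-1} - (H_\alpha - z_0)^{-1}$ is a compact operator on $\sii(\Real^2,\Com^2)$. Here one uses Remark~\ref{Rem.BS}: for~$\epsilon$ sufficiently small, $V$ is relatively form bounded with respect to~$H_\alpha$ with relative bound less than one, so the form sum $H_\alpha + \epsilon V$ is a well-defined closed (m-sectorial, and self-adjoint when~$V$ is) operator, and $z_0$ belongs to both resolvent sets, so both resolvents in the difference are meaningful. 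By the first resolvent identity, compactness of the resolvent difference at one point of the common resolvent set propagates to every such point.

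Next I would apply \cite[Thm.~XIII.14]{RS4-78}: since the resolvent difference is compact, the two operators have the same essential spectrum, that is, $\sigma_{\mathrm{ess}}(H_\alpha + \epsilon V) = \sigma_{\mathrm{ess}}(H_\alpha)$. Finally, combining this with Proposition~\ref{Prop.spectrum}, which identifies $\sigma_{\mathrm{ess}}(H_\alpha) = [0,+\infty)$, yields the asserted chain of equalities, completing the proof.

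The only point requiring a little care — rather than any genuinely new work — is that~$V$ (and hence $H_\alpha + \epsilon V$) need not be self-adjoint under Assumption~\ref{as:potential}, so one must use the version of the stability theorem valid for closed operators and rely on the relative form boundedness of Remark~\ref{Rem.BS} to guarantee that $H_\alpha + \epsilon V$ is a legitimate closed operator with nonempty resolvent set intersecting $(-\infty,0)$; with the notion of essential spectrum fixed consistently with the Birman--Schwinger framework of~\cite{HK22} used throughout the paper, this notion is precisely the one that is stable under relatively compact perturbations. Since all the analytic content — the Hilbert--Schmidt bounds of Lemmas~\ref{le:Q_HS1} and~\ref{le:L_Q_HS}, the decay of the Birman--Schwinger kernel, and the Neumann-series argument — has already been carried out in Proposition~\ref{Prop:compact_res_difference}, the remainder of the argument is a short bookkeeping invocation of the cited abstract theorems.
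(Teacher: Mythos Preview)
Your proposal is correct and follows essentially the same approach as the paper: the paper also derives the corollary directly by combining Proposition~\ref{Prop:compact_res_difference} with \cite[Thm.~XIII.14]{RS4-78}, and the identification $\sigma_{\mathrm{ess}}(H_\alpha)=[0,+\infty)$ is Proposition~\ref{Prop.spectrum}. Your additional remarks on the non-self-adjoint setting and the role of Remark~\ref{Rem.BS} are appropriate elaborations of points the paper leaves implicit.
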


Finally, we establish a uniform version of Lemma~\ref{le:L_Q_HS}.
\begin{Lemma} \label{Le:Q_is_HS}
	Suppose Assumption~\ref{as:potential}.
	There exists a positive constant~$C$ such that, 
	for all $z\in \Com \setminus [0,+\infty)$,
	$\|Q_z\|_\mathrm{HS} \leq C$.
	At the same time, given any positive~$\eps$, there exists a positive constant~$C_\eps$,
	such that, for all $z \in \Com \setminus [0, +\infty)$ with $|z| \geq \eps$,
	$\|L_z\|_\mathrm{HS} \leq C_\eps$.
\end{Lemma}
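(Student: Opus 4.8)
The plan is to keep the two-regime scheme (small $|z|$ versus large $|z|$) of the proof of Lemma~\ref{le:Q_HS1}, but to organise the Bessel part around the \emph{exactly regularised} functions $\phi_\nu(w):=K_\nu(w)-\tfrac{\Gamma(\nu)}{2}(w/2)^{-\nu}$, so that a single estimate covers all $z\in\Com\setminus[0,+\infty)$ at once. I would first dispose of $L_z$, which is elementary. Bounding the matrix product pointwise gives $\|\mathcal{L}_z(x,x_0)\|_\mathrm{HS}^2\le|V(x)|\,|V(x_0)|\,(|m_z^+|^2+|m_z^-|^2)$ with $m_z^+:=C_\alpha\Gamma(1-\alpha)^2(-z|rr_0|/4)^{\alpha-1}\ee^{-\ii(\theta-\theta_0)}$ and $m_z^-:=C_\alpha\Gamma(\alpha)^2(-z|rr_0|/4)^{-\alpha}$; since $s\mapsto s^{-\gamma}$ is decreasing for $\gamma>0$, for $|z|\ge\eps$ one has $|m_z^\pm|\le C_\eps(rr_0)^{-\nu^\pm}$, whence $\|L_z\|_\mathrm{HS}^2\le C_\eps\big[\big(\int|V(x)|r^{-2(1-\alpha)}\dd x\big)^2+\big(\int|V(x)|r^{-2\alpha}\dd x\big)^2\big]$. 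Both integrals are finite because $\nu^\pm\le\nu$ yields the pointwise bounds $r^{-2\nu^\pm}\le1+r^{-2\nu}$, after which Assumption~\ref{as:potential} applies. (The restriction $|z|\ge\eps$ is essential here: $L_z$ carries the genuine singularities $(-z)^{\alpha-1}$, $(-z)^{-\alpha}$, whereas the regular part $Q_z$ will be bounded down to $z=0$.)

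For $Q_z$, I would split the regular Green function~\eqref{eq:Green_fun_split_d}, for each sign and with $\nu:=\nu^\pm$, as $G_z^{reg,\pm}(x,x_0)=\ee^{-\ii\alpha\theta}G_z(x,x_0)\ee^{\ii\alpha\theta_0}+C_\alpha\,D_\nu(z;r,r_0)\,\ee^{\ii(\mp\nu-\alpha)(\theta-\theta_0)}$, where $D_\nu(z;r,r_0):=4K_\nu(\sqrt{-z}\,r)K_\nu(\sqrt{-z}\,r_0)-\Gamma(\nu)^2(-zrr_0/4)^{-\nu}$. As all the phases are unimodular and $\|\mathcal{Q}_z(x,x_0)\|_\mathrm{HS}^2\le|V(x)|\,|V(x_0)|\,(|G_z^{reg,+}|^2+|G_z^{reg,-}|^2)$, this gives $\|Q_z\|_\mathrm{HS}^2\le 4\|A(H-z)^{-1}B\|_\mathrm{HS}^2+2C_\alpha^2\sum_{\pm}\int_{\Rb^2\times\Rb^2}|V(x)|\,|V(x_0)|\,|D_{\nu^\pm}(z;r,r_0)|^2\,\dd x\,\dd x_0$. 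The first term is $\le C$ uniformly in $z$ by Lemma~\ref{le:Q_HS1}, so everything reduces to a $z$-uniform bound on the remaining two integrals.

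The heart of the matter is a $z$-uniform pointwise estimate on $D_\nu$. From~\eqref{eq:Bessel_asy_small}, $\phi_\nu(w)=\mathcal{O}(|w|^\nu)$ as $w\to0$; from the large-argument behaviour of~$K_\nu$ \cite[\S\,9.7.2]{AS64}, $\phi_\nu(w)\to0$ as $|w|\to\infty$ in $\Re w\ge0$; and $\phi_\nu$ is continuous on $\{\Re w\ge0\}\setminus\{0\}$. Hence there is $C>0$ with $|\phi_\nu(w)|\le C\min\{|w|^\nu,1\}$ on $\Re w>0$. Setting $w_1:=\sqrt{-z}\,r$, $w_2:=\sqrt{-z}\,r_0$ and $t:=|\sqrt{-z}|$, the identity $(w_1/2)^{-\nu}(w_2/2)^{-\nu}=(-zrr_0/4)^{-\nu}$ shows, upon inserting $K_\nu(w_j)=\tfrac{\Gamma(\nu)}{2}(w_j/2)^{-\nu}+\phi_\nu(w_j)$, that the singular term cancels: $D_\nu(z;r,r_0)=2\Gamma(\nu)\big[(w_1/2)^{-\nu}\phi_\nu(w_2)+(w_2/2)^{-\nu}\phi_\nu(w_1)\big]+4\phi_\nu(w_1)\phi_\nu(w_2)$. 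The last term is $\le C$. For the first bracket: if $tr_0\le1$ then $|(w_1/2)^{-\nu}\phi_\nu(w_2)|\le C(tr)^{-\nu}(tr_0)^\nu=C(r_0/r)^\nu$, while if $tr_0>1$ then $t>1/r_0$ and $|(w_1/2)^{-\nu}\phi_\nu(w_2)|\le C(tr)^{-\nu}\le C(r/r_0)^{-\nu}=C(r_0/r)^\nu$; by symmetry the bracket is $\le C\big((r/r_0)^\nu+(r_0/r)^\nu\big)$. Thus $|D_\nu(z;r,r_0)|\le C\big((r/r_0)^\nu+(r_0/r)^\nu+1\big)$ uniformly in $z$, and expanding the square (with $(r/r_0)^\nu(r_0/r)^\nu=1$) bounds $\int|V(x)||V(x_0)||D_\nu|^2$ by a finite sum of products $\big(\int|V(x)|r^{\pm2\nu}\dd x\big)\big(\int|V(x_0)|r_0^{\mp2\nu}\dd x_0\big)$ and $\big(\int|V(x)|r^{\pm\nu}\dd x\big)\big(\int|V(x_0)|r_0^{\mp\nu}\dd x_0\big)$, each finite by Assumption~\ref{as:potential} (for the $\pm\nu$ powers use $|x|^{\pm\nu}\le1+|x|^{\pm2\nu}$ together with $|V|\in\si(\Rb^2)$). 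This yields $\|Q_z\|_\mathrm{HS}\le C$ uniformly, completing the proof.

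The single genuinely delicate point is this $z$-uniform bound on $D_\nu$: a triangle inequality is useless because both $4K_\nu(\sqrt{-z}r)K_\nu(\sqrt{-z}r_0)$ and $\Gamma(\nu)^2(-zrr_0/4)^{-\nu}$ diverge like $|z|^{-\nu}$ as $z\to0$, and the small-parameter expansions~\eqref{eq:KK_1}--\eqref{eq:KK_2} hold only where $|\sqrt{-z}|r$ and $|\sqrt{-z}|r_0$ are \emph{simultaneously} small, which, ranging over all $r,r_0$, does not exhaust $\Com\setminus[0,+\infty)$. Extracting the leading Bessel singularity exactly through $\phi_\nu$, and then trading the divergent factor $(tr)^{-\nu}$ against either the vanishing of $\phi_\nu$ at the origin or, in the complementary range $tr_0>1$, against the lower bound $t>1/r_0$, is what makes the estimate uniform; the rest is the bookkeeping already developed in the proof of Lemma~\ref{le:Q_HS1}.
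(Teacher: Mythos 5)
Your proof is correct and follows essentially the same route as the paper's: isolate the exact leading singularity of $K_\nu$, observe that it cancels identically against $\Gamma(\nu)^2(-zrr_0/4)^{-\nu}$, and bound the remaining cross terms by $(r/r_0)^\nu+(r_0/r)^\nu+1$ via a case split on whether $|\sqrt{-z}|r$, $|\sqrt{-z}|r_0$ exceed $1$, then integrate using Assumption~\ref{as:potential}. The only (cosmetic) difference is that your $\min\{|w|^\nu,1\}$ bound on the regularised part delivers the $z$-uniform estimate in one stroke, whereas the paper treats $|z|\geq\eps$ via the large-argument decay and $|z|\to0$ via the small-argument expansion separately.
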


\begin{proof}
The claim for~$L_z$ is obvious from~\eqref{eq:L_z}. 
The uniform boundedness is also clear for the part of~$Q_z$ 
coming from the first line of~\eqref{eq:Green_fun_split_d}
due to Lemma~\ref{le:Q_HS1}.
At the same time, it follows from the structure of 
the second line of~\eqref{eq:Green_fun_split_d}
and the asymptotic behaviour~\eqref{eq:K_nu_decay} that,
given any positive~$\eps$, there exists a positive constant~$C_\eps$,
	such that, for all $z \in \Com \setminus [0, +\infty)$ with $|z| \geq \eps$,
	$\|Q_z\|_\mathrm{HS} \leq C_\eps$.
It remains to analyse the asymptotic behaviour 
of the second line of~\eqref{eq:Green_fun_split_d}
as $|z|\rightarrow 0$.

Let $\mu\in \{\alpha, 1-\alpha\}$. We establish a convenient notation
(\cf~\cite[\S\,9.6.2, \S\,9.6.10]{AS64})
\begin{align}\label{eq:K_nu_def}
	K_{\mu}(w) = w^{-\mu} f_{-\mu}(w) - w^{\mu} f_{\mu}(w)
	\qquad 	
	\text{with}
	\qquad
	f_{\mu}(w) = \frac{-\pi 2^{-\mu}}{2\sin(\mu \pi)}
	\sum_{k=0}^{\infty} \frac{(w/2)^{2k}}{k! \Gamma(k+\mu+1)} \,,
\end{align}
$w\in \Cb\setminus (-\infty, 0)$.
On account of Lemma~\ref{le:Q_HS1},
it is enough to bound
$K_{\mu}(\sqrt{-z}|x|)K_{\mu}(\sqrt{-z}|x_0|) - f^2_{-\mu}(0) |-zxx_0|^{-\mu}$
by a constant multiple of 
$(|x|^{\mu}+|x|^{-\mu})(|x_0|^{\mu}+|x_0|^{-\mu})$.
To that end we denote by $\xi=\sqrt{-z}|x|$ and by $\zeta = \sqrt{-z}|x_0|$ and
write the exact identity
\begin{align*}
	K_{\mu}(\xi)K_{\mu}(\zeta) - f^2_{-\mu}(0) (\xi\zeta)^{-\mu}
	= (K_{\mu}(\xi)-f_{-\mu}(0) \xi^{-\mu} ) K_{\mu}(\zeta)
	 +\xi^{-\mu} f_{-\mu}(0) (K_{\mu}(\zeta) - \zeta^{-\mu} f_{-\mu}(0)) \,.
\end{align*}  

If both $|\zeta|, |\xi| \geq 1$ then the left-hand side is bounded by a constant
by analyticity of $K_{\mu}$ and the decay \eqref{eq:K_nu_decay}.
If one of the arguments is small, assume
without loss of generality $|\xi| < 1$, we
notice that
$f_{\mu}(\xi)-f_{\mu}(0)= \mathcal{O}(\xi^2)$ as $|\xi|\rightarrow 0$ and use the bound
\begin{align*}	
	|K_{\mu}(\xi) - \xi^{-\mu} f_{-\mu}(0)|\leq C |\xi|^{\mu} 
\end{align*}
with some positive constant~$C$. 
If $\zeta \geq 1$ we have
$|K_{\mu}(\zeta)|\leq C_1 \leq C_1 \, |\zeta|^{\mu}$ with $C_1>0$, while for 
$|\zeta|<1$ by \eqref{eq:K_nu_decay} it holds
\begin{align*}
	|K_{\mu}(\zeta) - \zeta^{-\mu} f_{-\mu}(0)|\leq C |\zeta|^{\mu} 
	\quad \text{ and } \quad
	|K_{\mu}(\zeta)|\leq C_2 |\zeta|^{-\mu} \,,
\end{align*}
where $C, C_2>0$ are some constants.
Symmetrically we can find bounds in case $|\zeta|<1$. For any $\zeta, \xi$ with positive real part we can thus estimate
\begin{align*}
 	|K_{\mu}(\xi)K_{\mu}(\zeta) - f^2_{-\mu}(0) (\xi \zeta)^{-\mu}|
 		\leq 
 		C_3 \, (|\xi/\zeta|^{\mu} +|\zeta/\xi|^{\mu} +|\xi|^{\mu} 
 		+1)\,,
\end{align*}
for some constant $C_3>0$. In particular this stays bounded for any fixed $x, y$ as $|z|\rightarrow 0$.
Since $\mu\leq \max\{\alpha, 1-\alpha\} = \nu$ implies
$\int_{\Rb^2} |V(x)| (|x|^{2\mu}+ |x|^{-2\mu}) \leq 
2\int_{\Rb^2} |V(x)|(|x|^{2\nu}+ |x|^{-2\nu}) $
we conclude, taking Lemma~\ref{le:Q_HS1} into account, that 
under our assumptions on the potential
the operator $Q_z$ is Hilbert--Schmidt as $|z| \rightarrow 0$.
\end{proof}
%

\section{The weakly coupled eigenvalues}\label{Sec.weak}
%
In this section, we establish Theorem~\ref{Thm.main}
as a consequence of its stronger variant.

First of all, we claim that provided that there exist 
eigenvalues of the perturbed operator $H_\alpha+\epsilon V$
for all small~$\epsilon$, 
they correspond to the singularities 
in the unperturbed Green function and therefore 
necessarily tend to zero as the positive parameter~$\epsilon$ vanishes.
The fact that zero is the only possible accumulation point 
is not obvious, because we allow $H_\alpha+\epsilon V$
to be non-self-adjoint.

\begin{Lemma} \label{le:decay_of_eigenvalues}
Suppose Assumption~\ref{as:potential}. 
Let $z_\epsilon \in \sigma_\mathrm{disc}(H_\alpha+\epsilon V)$
for all sufficiently small positive~$\epsilon$. 
	Then 
	$|z_\epsilon|\rightarrow 0$ as $\epsilon \rightarrow 0$.
\end{Lemma}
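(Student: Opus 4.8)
The plan is to argue by contradiction via the Birman--Schwinger principle together with the uniform Hilbert--Schmidt bounds of Lemma~\ref{Le:Q_is_HS}. Suppose the conclusion fails: then there is a sequence $\epsilon_n \to 0^+$ and eigenvalues $z_n := z_{\epsilon_n} \in \sigma_\mathrm{disc}(H_\alpha + \epsilon_n V)$ with $|z_n| \geq \eta$ for some fixed $\eta > 0$. By Corollary~\ref{cor:stability_ess_spec}, $\sigma_\mathrm{ess}(H_\alpha+\epsilon_n V) = [0,+\infty)$, so each $z_n$ lies in $\Com \setminus [0,+\infty)$, and the Birman--Schwinger principle of Remark~\ref{Rem.BS} gives that $-1$ is an eigenvalue of $R_{z_n,\epsilon_n} = \epsilon_n A (H_\alpha - z_n)^{-1} B$; in particular $\|R_{z_n,\epsilon_n}\| \geq 1$, hence $\|R_{z_n,\epsilon_n}\|_\mathrm{HS} \geq 1$ for all $n$.

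On the other hand, by~\eqref{BS.decomposed} and~\eqref{eq:Green_fun_split}, $R_{z_n,\epsilon_n} = \epsilon_n(L_{z_n} + Q_{z_n})$. The next step is to control the two pieces. For $Q_{z_n}$, Lemma~\ref{Le:Q_is_HS} gives $\|Q_{z_n}\|_\mathrm{HS} \leq C$ uniformly in $z_n \in \Com\setminus[0,+\infty)$, so $\epsilon_n \|Q_{z_n}\|_\mathrm{HS} \to 0$. The remaining issue is $L_{z_n}$, whose kernel~\eqref{eq:L_z} carries the genuine singular factors $(-z\,|rr_0|/4)^{\alpha-1}$ and $(-z\,|rr_0|/4)^{-\alpha}$; these blow up only as $|z| \to 0$. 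Since $|z_n| \geq \eta$, the second assertion of Lemma~\ref{Le:Q_is_HS} (or directly the explicit form~\eqref{eq:L_z} together with Assumption~\ref{as:potential}, which controls the $|x|^{\pm 2\nu}$ moments of $|V|$) yields $\|L_{z_n}\|_\mathrm{HS} \leq C_\eta$ uniformly over $|z_n| \geq \eta$. Therefore $\|R_{z_n,\epsilon_n}\|_\mathrm{HS} \leq \epsilon_n(C + C_\eta) \to 0$, contradicting $\|R_{z_n,\epsilon_n}\|_\mathrm{HS} \geq 1$. Hence no such sequence exists and $|z_\epsilon| \to 0$.

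The one place requiring a little care — and the main obstacle — is the legitimacy of applying the Birman--Schwinger principle uniformly along the sequence, i.e.\ ensuring that for each $n$ the relevant operators $A(H_\alpha - z_n)^{-1/2}$, $B(H_\alpha - z_n)^{-1/2}$ are bounded so that the principle of~\cite{HK22} applies and the factorisation $R_{z,\epsilon} = \epsilon A (H_\alpha - z)^{-1} B$ is meaningful at a discrete eigenvalue rather than only for $z$ with large $|z|$. This is handled by noting that Proposition~\ref{Prop:compact_res_difference} and Remark~\ref{Rem.BS} establish relative form-boundedness of $V$ with respect to $H_\alpha$ with arbitrarily small bound (once $\epsilon$ is small), so $A(H_\alpha-z)^{-1/2}$ and $B(H_\alpha-z)^{-1/2}$ are bounded for all $z \in \Com\setminus[0,+\infty)$, not merely for $|z|$ large; consequently $R_{z,\epsilon}$ is a well-defined Hilbert--Schmidt operator throughout $\Com\setminus[0,+\infty)$ and the Birman--Schwinger equivalence holds at every candidate eigenvalue $z_\epsilon$. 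With that in hand the contradiction argument above goes through.
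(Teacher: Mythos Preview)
Your proof is correct and follows essentially the same route as the paper: Birman--Schwinger gives $\|R_{z_\epsilon,\epsilon}\|_\mathrm{HS}\geq 1$, while the decomposition $R_{z,\epsilon}=\epsilon(L_z+Q_z)$ together with the uniform bounds of Lemma~\ref{Le:Q_is_HS} forces $\|R_{z_\epsilon,\epsilon}\|_\mathrm{HS}\to 0$ along any subsequence bounded away from zero. Your contradiction hypothesis ($|z_n|\geq\eta$) is in fact slightly cleaner than the paper's, which assumes the subsequence converges to a specific positive point~$k$; your version avoids that unnecessary extraction and directly invokes the $C_\eta$ bound.
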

\begin{proof}
By the Birman--Schwinger principle, 
there exists a normalised $\psi_\epsilon \in \sii(\Real^2,\Com^2)$
such that $R_{z,\epsilon}\psi_\epsilon=-\psi_\epsilon$ 
for all sufficiently small positive~$\epsilon$. 
Then
$$
  1 = |\langle \psi_\epsilon, R_{z, \epsilon} \psi_\epsilon \rangle|
  \leq \|R_{z, \epsilon}\|  
  \leq \|R_{z, \epsilon}\|_\mathrm{HS}  
  = \epsilon \, \|L_z + Q_z\|_\mathrm{HS}  
  \leq \epsilon \, (\|L_z\|_\mathrm{HS} + \|Q_z\|_\mathrm{HS}) 
$$
By contradiction, assume that there is a sequence 
$\{\epsilon_j\}_{j \in \Nat}$ converging to zero 
and a sequence of eigenvalues 
$\{z_{\epsilon_j}\}_{j \in \Nat}$ converging to 
a positive point~$k$ of the essential spectrum $[0,+\infty)$. 
Then the inequality above 
together with Lemma~\ref{Le:Q_is_HS} implies
$
  1 \leq \epsilon_j \, (C_{k/2} + C)
  \,,
$
where $C_{k/2}$ and $C$ are the constants from Lemma~\ref{Le:Q_is_HS},
independent of~$j$.
This is obviously a contradiction for all sufficiently large~$j$.
\end{proof}

Our next step is to reformulate the Birman--Schwinger 
principle in the usual way using the decomposition~\eqref{BS.decomposed}
of the Birman--Schwinger operator~$R_{z,\epsilon}$
into the singular part~$\epsilon L_z$ 
and the regular part~$\epsilon Q_z$.  
The existence of eigenvalue~$-1$ for~$R_{z,\epsilon}$ 
is equivalent to the lack of invertibility of
\begin{align*}
 	(1 + \epsilon (Q_z +L_z) )
 		= (1+\epsilon Q_z)(1+\epsilon (\epsilon Q_z + 1)^{-1}L_z) \,.
\end{align*}
Here the operator
 $1+\epsilon Q_z$ is invertible for all 
 sufficiently small~$\epsilon$ by Lemma~\ref{Le:Q_is_HS}.
That means that, provided that~$\epsilon$ is sufficiently small,
$-1$~is an eigenvalue of~$R_{z,\epsilon}$ if, and only if,
$-1$~is an eigenvalue of the rank-one operator
$\epsilon(\epsilon Q_z + 1)^{-1}L_z$.
 
To find the form of an eigenvalue $\lambda \neq 0$ of the
operator $\epsilon(\epsilon Q_z + 1)^{-1}L_z$,
let us denote by
$\psi$ the corresponding normalised eigenvector. Then 
by definition of $L_z$ (recall \eqref{eq:L_z})
we have (using the complex formalism $w := x^1+\ii x^2$
and $w_0 := x_0^1+\ii x_0^2$) 
\begin{align} \label{eq:eval_-1}
	\lambda \psi(w) = \epsilon (\epsilon Q_z + 1)^{-1} A(w) \overline{D(w)} \int_{\Cb} Y_z D(w_0) B(w_0) \psi(w_0) \dd w_0 \,.
\end{align}
Here we have introduced the decomposition of the integral kernel 
\begin{align*}
	\mathcal{L}_z(w, w_0) = A(w)\overline{D(w)}Y_z D(w_0)B(w_0)
\end{align*} 
using
\begin{equation}\label{eq:Y}
 	Y_z \coloneqq
 	\begin{pmatrix}
 			(-z)^{\alpha-1} & 0 \\ 
 			0 & (-z)^{-\alpha}
 	\end{pmatrix}		
	\,,  
 	\quad
 	D(w) \coloneqq
 		\sqrt{C_{\alpha}}
 		\begin{pmatrix}
 			\Gamma(1-\alpha)(|w|/2)^{\alpha-1} \ee^{\ii \, \ph (w)} & 0 \\ 
 			0 & \Gamma(\alpha)(|w|/2)^{-\alpha}
 		\end{pmatrix}			
 			\,.
\end{equation}
We rewrite \eqref{eq:eval_-1} as 
\begin{align} \label{eq:psi}
	\lambda \psi (w)= \epsilon (\epsilon Q_z + 1)^{-1} A(w) \overline{D(w)} b_z   
	&& 	\text{with}
	&&	b_z:= Y_z \int_{\Cb} D(w_0) B(w_0) \psi(w_0) \dd w_0\in \Cb^2 \,.
\end{align}
Inserting~$\psi$ back into a $\lambda$-multiple of~\eqref{eq:eval_-1}, 
we get the equation
\begin{align}\label{simplification}
 	\lambda \epsilon (\epsilon Q_z + 1)^{-1} A(w) \overline{D(w)} b_z 
 		= 
 	\epsilon (\epsilon Q_z + 1)^{-1} A(w) \overline{D(w)}
 	\epsilon Y_z \Wb(\epsilon) b_z 
\end{align}
with the matrix
$$
  \Wb (\epsilon)
 		\coloneqq \int_{\Cb} D(w_0) B(w_0) (\epsilon Q_z + 1)^{-1} A(w_0) \overline{D(w_0)} \dd w_0 \,.
$$
Applying to both sides of~\eqref{simplification}
the invertible operator $\epsilon Q_z + 1$ and dividing by~$\epsilon$,
we see that any non-zero eigenvalue~$\lambda$ of 
$\epsilon(\epsilon Q_z + 1)^{-1}L_z$ satisfies
\begin{align}\label{simplification1}
    A(w) \overline{D(w)}
 	\epsilon Y_z \Wb(\epsilon) b_z 
 	= \lambda  A(w) \overline{D(w)} b_z
\end{align}
which is a generalised eigenvalue problem in~$\Com^2$.
The following proposition summarises the above analysis
and additionally argues that~\eqref{simplification1} is equivalent 
to the usual eigenvalue problem by ``dividing by'' 
the matrix-valued function $w \mapsto A(w) \overline{D(w)}$.

\begin{Proposition}
Suppose Assumption~\ref{as:potential}. 
For all sufficiently small~$\epsilon$, 
$z \in \Com \setminus [0,+\infty)$ is an eigenvalue of
$H_\alpha + \epsilon V$  
if, and only if, $-1$~is an eigenvalue of the matrix 
$\epsilon Y_z \Wb(\epsilon)$.
\end{Proposition}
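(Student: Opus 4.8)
The plan is to combine the Birman--Schwinger principle with the chain of reductions already carried out in the text. First I would recall that, by the Birman--Schwinger principle justified in Remark~\ref{Rem.BS} (valid under Assumption~\ref{as:potential} and for sufficiently small~$\epsilon$), the value $z \in \Com \setminus [0,+\infty)$ is an eigenvalue of $H_\alpha + \epsilon V$ if and only if $-1$ is an eigenvalue of $R_{z,\epsilon} = \epsilon(L_z + Q_z)$. Next, using the factorisation
\begin{align*}
  1 + \epsilon(Q_z + L_z) = (1 + \epsilon Q_z)\bigl(1 + \epsilon(\epsilon Q_z + 1)^{-1} L_z\bigr)
\end{align*}
together with the invertibility of $1 + \epsilon Q_z$ for small~$\epsilon$ (Lemma~\ref{Le:Q_is_HS}), I would conclude that $-1 \in \sigma(R_{z,\epsilon})$ if and only if $-1$ is an eigenvalue of the rank-one operator $\epsilon(\epsilon Q_z + 1)^{-1} L_z$.

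The second half of the argument is to show that $-1$ being an eigenvalue of $\epsilon(\epsilon Q_z + 1)^{-1} L_z$ is equivalent to $-1$ being an eigenvalue of the $2 \times 2$ matrix $\epsilon Y_z \Wb(\epsilon)$. For the forward direction, given a normalised eigenvector~$\psi$ for eigenvalue $\lambda = -1$, the computation preceding the proposition (equations \eqref{eq:eval_-1}--\eqref{simplification1}) shows that the vector $b_z \in \Com^2$ defined in~\eqref{eq:psi} satisfies $A(w)\overline{D(w)}\,\epsilon Y_z \Wb(\epsilon) b_z = \lambda\, A(w)\overline{D(w)}\, b_z$ for a.e.~$w$. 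Here I must argue $b_z \neq 0$: if $b_z = 0$ then \eqref{eq:psi} forces $\lambda \psi = 0$, contradicting $\lambda = -1$ and $\|\psi\| = 1$. To pass from the identity $A(w)\overline{D(w)} u = \mathbf{0}$ (with $u := \epsilon Y_z \Wb(\epsilon) b_z - \lambda b_z$) to $u = \mathbf{0}$, I would observe that $D(w)$ is invertible for a.e.~$w \neq 0$, so $A(w) D(w)^{*} u = \mathbf{0}$ forces $u$ to lie in $\ker A(w)$ for a.e.~$w$; but $A(w) = \sqrt[4]{V(w)^{*}V(w)}$, and on the (positive-measure) set where $V$ is nonsingular $A(w)$ is injective, while even where $V$ is singular one can take the two coordinate directions and use that $V$ is non-trivial --- more cleanly, since $u$ is a fixed vector independent of~$w$, it suffices to have one value of~$w$ where $A(w)$ is invertible, which holds on a positive-measure set because $|V| \in L^{1}(\Real^2)$ is non-trivial. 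Hence $u = \mathbf{0}$, i.e.\ $\epsilon Y_z \Wb(\epsilon) b_z = \lambda b_z = -b_z$, so $-1 \in \sigma(\epsilon Y_z \Wb(\epsilon))$. For the converse, if $\epsilon Y_z \Wb(\epsilon) b = -b$ for some $b \neq 0 \in \Com^2$, I would set $\psi(w) := -\epsilon(\epsilon Q_z + 1)^{-1} A(w)\overline{D(w)}\, b$ (nonzero by the same injectivity argument, after noting $A\overline{D}$ is not annihilated) and verify directly from the definition of $\Wb(\epsilon)$ and $b_z$ that $\psi$ is an eigenvector of $\epsilon(\epsilon Q_z+1)^{-1}L_z$ with eigenvalue $-1$; one checks that $Y_z \int_{\Com} D(w_0) B(w_0)\psi(w_0)\,\dd w_0 = -\epsilon Y_z \Wb(\epsilon) b = b$, so the eigen-equation closes up consistently.

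The main obstacle I anticipate is the ``dividing by $A(w)\overline{D(w)}$'' step: making rigorous that a pointwise identity $A(w)\overline{D(w)}\, u = \mathbf{0}$ (holding for a.e.\ $w$, with $u$ a fixed vector) implies $u = \mathbf{0}$. This requires care about where $A(w)$ fails to be invertible (the null set of $V$, or the locus where $V(w)$ is a singular matrix), and a clean way around it is precisely the remark that $u$ does not depend on~$w$, so one only needs a single point --- in fact a positive-measure set --- of invertibility, which is guaranteed because $V$ is non-trivial and $A(w)D(w)^{*}$ inherits invertibility from $V(w)$ on that set. Everything else is bookkeeping already done in the lead-up to the statement, so the proof should be short: invoke Birman--Schwinger, invoke the factorisation and Lemma~\ref{Le:Q_is_HS}, invoke the displayed computations \eqref{eq:eval_-1}--\eqref{simplification1}, and dispatch the two implications via the non-vanishing of $b_z$ and the injectivity of $A\overline{D}$ on a set of positive measure.
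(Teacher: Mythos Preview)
Your overall skeleton matches the paper's: Birman--Schwinger, the factorisation through $(1+\epsilon Q_z)$, and the reduction to the finite-rank piece. The gap is precisely where you anticipated it, and your proposed resolution does not work.

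You claim that $A(w)\overline{D(w)}u=0$ for a.e.~$w$ forces $u=0$ because $A(w)$ is invertible on a set of positive measure, arguing that ``$|V|\in L^1(\Real^2)$ is non-trivial'' guarantees this. But $|V(x)|$ is the \emph{operator norm} of the $2\times 2$ matrix $V(x)$; non-triviality of $|V|$ only says that $V(x)$ has nonzero norm somewhere, not that it is ever invertible. For instance $V=\diag(v,0)$ with a non-trivial scalar $v$ satisfies Assumption~\ref{as:potential}, yet $A(w)=\diag(|v(w)|^{1/2},0)$ is singular for every~$w$. In that case $A(w)\overline{D(w)}e_2=0$ identically while $e_2\neq 0$, so your cancellation step fails.

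The paper sidesteps the issue entirely, and the fix is cheap: from the very definition
\[
\Wb(\epsilon)=\int_{\Cb} D(w_0)B(w_0)(\epsilon Q_z+1)^{-1}A(w_0)\overline{D(w_0)}\,\dd w_0,
\]
the hypothesis $A(w)\overline{D(w)}c=0$ a.e.\ kills the integrand, so $\Wb(\epsilon)c=0$. This weaker conclusion is all that is needed. In the matrix $\Rightarrow$ operator direction, $\psi=0$ would give $A\overline{D}b_z=0$ a.e., hence $\Wb(\epsilon)b_z=0$, contradicting $\epsilon Y_z\Wb(\epsilon)b_z=-b_z\neq 0$. In the operator $\Rightarrow$ matrix direction one does not even need \eqref{simplification1}: substituting the expression for $\psi$ from \eqref{eq:psi} (with $\lambda=-1$) back into the integral defining $b_z$ gives $b_z=-\epsilon Y_z\Wb(\epsilon)b_z$ directly. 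So replace your injectivity argument by the observation $A\overline{D}c=0\Rightarrow \Wb(\epsilon)c=0$, and the proof goes through.
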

\begin{proof}
If a non-zero vector $b_z \in \Com^2$ 
solves the matrix eigenvalue problem $\epsilon Y_z \Wb(\epsilon)b_z=-b_z$,
then it is easy to check that the function~$\psi$
defined by the first formula of~\eqref{eq:psi} with $\lambda=-1$
solves  $\epsilon(\epsilon Q_z + 1)^{-1}L_z \psi = -\psi$.
Assuming $\psi=0$ implies $A(w) \overline{D(w)} b_z = 0$
for almost every $w \in \Com$.
But then $\Wb(\epsilon)b_z=0$, 
because of the structure of the matrix $\Wb(\epsilon)$,
which is impossible.

Conversely, assume $\epsilon(\epsilon Q_z + 1)^{-1}L_z \psi = -\psi$
with a non-trivial function~$\psi$.
Then~\eqref{eq:psi} holds with $\lambda=-1$
and thus defined vector~$b_z$ is necessarily non-zero. 
Applying the matrix $\epsilon Y_z \Wb(\epsilon)$ 
to~$b_z$ as defined by the integral formula of~\eqref{eq:psi}, 
it is easy to see that 
$\epsilon Y_z \Wb(\epsilon) b_z = -(\epsilon Y_z \Wb(\epsilon))^2 b_z$.
Consequently, either~$b_z$ solves  
$\epsilon Y_z \Wb(\epsilon) b_z = - b_z$
or $\epsilon Y_z \Wb(\epsilon) b_z = 0$.
Because of~\eqref{simplification1} with $\lambda=-1$,
the latter implies $A(w) \overline{D(w)} b_z = 0$
and subsequently~\eqref{eq:psi} yields $\psi=0$,
a contradiction.  
\end{proof}

By virtue of the proposition, the eigenvalue problem 
for the differential operator~$H_\alpha$ 
is reduced to analysing the matrix  eigenvalue problem 
\begin{align} \label{eq:reformulation_to_alg_problem}
	-f = \epsilon Y_z \Wb(\epsilon) f \,,
\end{align}
where $f=(f_1,f_2) \in \Com^2$.
Using the definition of~$Y_z$, this is equivalent to the coupled equations
\begin{align*}
	-f_1 &=
		\epsilon (-z)^{\alpha-1} (\Wb_{11}(\epsilon)f_1 + \Wb_{12}(\epsilon)f_2) 
		\,,
		\\
	-f_2 &=
		\epsilon (-z)^{-\alpha} (\Wb_{22}(\epsilon)f_2 + \Wb_{21}(\epsilon)f_1) 
		\,.
\end{align*}
In the case that either $\Wb_{12}(\epsilon)$ or $\Wb_{21}(\epsilon)$ 
is non-zero, this pair of equations has a solution if, and only if, 
there is a solution $z$ of the problem
\begin{align} \label{eq:main_eq_for_z}
	0= \epsilon \, [z^{-\alpha} \Wb_{22}(\epsilon) + (-z)^{\alpha -1} \Wb_{11}(\epsilon)] + \epsilon^2(-z)^{-1} \det \Wb(\epsilon) +1 \,.
\end{align}
Consider the diagonal case 
when both $\Wb_{12}(\epsilon)= \Wb_{21}(\epsilon) =0$. 
Then we obtain the decoupled system 
\begin{align*}
	-f_1 &=
		\epsilon (-z)^{\alpha-1} \Wb_{11}(\epsilon)f_1 \,,	 \\
	-f_2 &=
		\epsilon (-z)^{-\alpha} \Wb_{22}(\epsilon)f_2 \,.
\end{align*}
We conclude that there are two solutions $(f_1, 0)$ and $(0, f_2)$ 
with respective values of~$z_1$ and~$z_2$ solving this system.
It follows that in also this case we can again reformulate 
the problem of existence of a solution to \eqref{eq:reformulation_to_alg_problem} in terms of existence of a root of~\eqref{eq:main_eq_for_z}.

Following the ideas of \cite{CS18}, 
we now separate the matrix $\Wb(\epsilon)$ in two pieces 
$\Wb(\epsilon) = U + U_1(\epsilon)$ using
\begin{align} \label{eq:U}
\begin{split}
	U
	&\coloneqq 
		\int_{\Cb} D(w) B(w) A(w)\overline{D(w)} \dd w
		=\int_{\Cb} D(w) V(w)\overline{D(w)} \dd w \,,
	\\
	U_1(\epsilon)
	&\coloneqq 
		\int_{\Cb} D(w) B(w) 
		\left( [(\epsilon Q_z + 1)^{-1}-1] A\overline{D} \right)(w)
		\, \dd w 
		\,.
\end{split}
\end{align}
\begin{Lemma}\label{Lem.integrand}
Suppose Assumption~\ref{as:potential}.
Then $\|U_1(\epsilon)\| = \mathcal{O}(\epsilon)$ as $\epsilon \rightarrow 0$.
\end{Lemma}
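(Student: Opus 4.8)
The plan is to estimate $\|U_1(\epsilon)\|$ directly from its integral representation in~\eqref{eq:U}, exploiting the fact that the ``correction'' factor $(\epsilon Q_z+1)^{-1}-1$ is small in operator norm for small~$\epsilon$. Writing $(\epsilon Q_z+1)^{-1}-1 = -\epsilon Q_z(\epsilon Q_z+1)^{-1}$, we have the identity
\begin{align*}
  U_1(\epsilon) = -\epsilon \int_{\Cb} D(w) B(w)
  \bigl( Q_z(\epsilon Q_z+1)^{-1} A\overline{D}\bigr)(w)\,\dd w \,.
\end{align*}
The $\epsilon$ prefactor is exactly what we want; it remains to show the integral is bounded uniformly in~$\epsilon$ (for~$\epsilon$ small and~$z$ in the relevant range near~$0$).

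The key observation is that this integral is, up to the fixed multiplication operators, a trace-class-type pairing: $U_1(\epsilon)/(-\epsilon)$ has entries of the form $\langle \overline{D_j} A, Q_z (\epsilon Q_z+1)^{-1} A \overline{D_k}\rangle$, so it suffices to control $\|A\overline{D}\|$ in an appropriate $L^2$ sense together with the Hilbert--Schmidt (hence operator) norm bounds on $Q_z$ and $(\epsilon Q_z+1)^{-1}$. For this I would first check that the matrix-valued function $w\mapsto A(w)\overline{D(w)}$ has columns lying in $\sii(\Cb,\Com^2)$: since $|A(w)| = |V(w)|^{1/2}$ and the entries of $D(w)$ are bounded by a constant multiple of $|w|^{\alpha-1}+|w|^{-\alpha}+|w|^{-(1-\alpha)}+|w|^\alpha$, the square-integrability of each column against $\dd w$ is controlled by $\int_{\Cb}|V(w)|(|w|^{2\nu}+|w|^{-2\nu})\,\dd w$, which is finite by Assumption~\ref{as:potential} (recall $\nu=\max\{\alpha,1-\alpha\}$). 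Then each entry of $U_1(\epsilon)$ is bounded in absolute value by $\epsilon\,\|\,\overline{D_j}A\,\|_{\sii}\,\|Q_z\|\,\|(\epsilon Q_z+1)^{-1}\|\,\|A\overline{D_k}\|_{\sii}$, and by Lemma~\ref{Le:Q_is_HS} we have $\|Q_z\|\le\|Q_z\|_{\mathrm{HS}}\le C$ uniformly in $z\in\Com\setminus[0,+\infty)$, while $\|(\epsilon Q_z+1)^{-1}\|\le 2$ for $\epsilon$ small enough (again by the uniform bound on $\|Q_z\|_{\mathrm{HS}}$, via the Neumann series). Collecting these, $\|U_1(\epsilon)\|\le 4\epsilon\,C\,\max_{j,k}\|\overline{D_j}A\|_{\sii}\|A\overline{D_k}\|_{\sii} = \mathcal{O}(\epsilon)$, as claimed.

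The main obstacle I anticipate is the bookkeeping needed to make the pairing $\langle \overline{D_j}A, Q_z(\epsilon Q_z+1)^{-1}A\overline{D_k}\rangle$ legitimate, i.e. confirming that $A\overline{D}$ genuinely defines an $\sii$ vector rather than merely a formal one, and that the composition $Q_z(\epsilon Q_z+1)^{-1}$ acts on it — but this is essentially already delivered by the Hilbert--Schmidt estimates on $Q_z$ assembled in Lemmas~\ref{le:Q_HS1}, \ref{le:L_Q_HS} and~\ref{Le:Q_is_HS}, together with the integrability bookkeeping carried out in the proof of Lemma~\ref{le:Q_HS1} (where the very same weighted integrals of $|V|$ appear). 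One subtlety worth isolating: the bound on $\|(\epsilon Q_z+1)^{-1}\|$ must be uniform in $z$, not just pointwise, which is why Lemma~\ref{Le:Q_is_HS} (uniform Hilbert--Schmidt bound on $Q_z$) rather than merely Lemma~\ref{le:L_Q_HS} is the right tool. Once these pieces are in place the estimate is immediate, so I would keep the write-up short: state the algebraic identity for $U_1(\epsilon)$, invoke the $\sii$-membership of $A\overline{D}$ with the weighted-$L^1$ justification, invoke the uniform bounds on $Q_z$ and $(\epsilon Q_z+1)^{-1}$, and conclude.
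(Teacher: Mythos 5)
Your proposal is correct and follows essentially the same route as the paper: extract the factor $\epsilon$ from $(\epsilon Q_z+1)^{-1}-1=-\epsilon Q_z(\epsilon Q_z+1)^{-1}$, control the remaining pairing via the uniform Hilbert--Schmidt bound on $Q_z$ (Lemma~\ref{Le:Q_is_HS}) together with the finiteness of $\int_{\Cb}|V(w)|\,|D(w)|^2\,\dd w$, which is exactly the weighted $L^1$ condition of Assumption~\ref{as:potential}. Your explicit $L^2$-pairing formulation of the entries of $U_1(\epsilon)$ is, if anything, a slightly cleaner rendering of the estimate the paper writes as a pointwise bound on the integrand.
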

\begin{proof}
Denoting by $\mathcal{U}_1(w)$ the integrand of $U_1(\epsilon)$,
the Cauchy--Schwarz inequality on $\Cb^2$ implies
\begin{align*}
 	\left |\int_{\Cb} \mathcal{U}_1(w) \dd w \right |
 		= 
 			\sup \left\{ 
 					\left|
 						\left \langle 
 							\psi, \int_{\Cb} \mathcal{U}_1(w) \dd w \, \phi 					\right \rangle
 					\right |
 					\, : \,
 					\phi, \psi \in \Cb^2 \,,
 					 \|\phi\|= \|\psi\| = 1
 				\right \}
 		\leq 
 			\int_{\Cb} |\mathcal{U}_1(w)| \dd w \,.
\end{align*}
Then the smallness of the norm of $U_1(\epsilon)$ follows from
the upper bound
\begin{align*}
	|{D(w)}B(w)[(1+\epsilon Q_z)^{-1} - 1] A(w) \overline{D(w)}|
	&\leq 
	|{D(w)} B(w)|\cdot \|(1+\epsilon Q_z)^{-1} - 1\|_\mathrm{HS}
	\cdot | A(w)\overline{D(w)}|\\
	&\leq \|(1+\epsilon Q_z)^{-1} - 1\|_\mathrm{HS}
	\cdot |D(w)|^2 |V(w)| 
	\,,
\end{align*}
yielding

\begin{align*}
	\|U_1(\epsilon)\| 
	&\leq 
		\frac{\epsilon \|Q_z\|_\mathrm{HS}}{1-\epsilon \|Q_z\|_\mathrm{HS} }
		\cdot C_{\alpha}
		\max\{\Gamma^2(\alpha), \Gamma^2(1-\alpha)\} 
		\int_{\Cb} (\max\{|w|^{\alpha-1}, |w|^{-\alpha}\})^2 v(w) \dd w 	\nonumber
	\\	
	&\leq	
		\frac{\epsilon \|Q_z\|_\mathrm{HS}}{1-\epsilon \|Q_z\|_\mathrm{HS} }
		C_{\alpha}
		\max\{\Gamma^2(\alpha), \Gamma^2(1-\alpha)\} 
		\int_{\Cb} (|w|^{-2\nu} + |w|^{2\nu}) v(w) \dd w 	\,.
\end{align*}
\end{proof}

For conciseness, let us write  
\begin{align*}
	a_{\epsilon} &:=(U+U_1(\epsilon))_{11}\,,
	& a_0 &:=U_{11} \,, 
	\\
	b_{\epsilon} &:=(U+U_1(\epsilon))_{22} \,,
	& b_0 &:=U_{22}\,,
    \\
	c_{\epsilon} &:=(U+U_1(\epsilon))_{11} (U+U_1(\epsilon))_{22}-(U+U_1(\epsilon))_{12}(U+U_1(\epsilon))_{21} \,,
	& c_0 &:=U_{11}U_{22}-U_{12}U_{21} \,.
\end{align*}
By Lemma~\ref{Lem.integrand},  
\begin{align}\label{eq:coefficients_aprox}
 	a_{\epsilon} = a_0 + \mathcal{O}(\epsilon)\,,
 	&&
 	b_{\epsilon} = b_0 + \mathcal{O}(\epsilon) \,,
 	&&
 	c_{\epsilon} = c_0 + \mathcal{O}(\epsilon) \,,
\end{align}
as $\epsilon \to 0$.
Then the equation~\eqref{eq:main_eq_for_z} reads
\begin{align}
\label{eq:e_quadratic}
	 \epsilon^2 c_{\epsilon} (-z)^{-1} 
	+ \epsilon ( a_{\epsilon} (-z)^{-1+\alpha} + b_{\epsilon} (-z)^{-\alpha})
	+1 = 0 \,.
\end{align}

Let us summarise our findings in the following proposition.
\begin{Proposition}\label{prop:first_equation}
Suppose Assumption~\ref{as:potential}. 
For all sufficiently small~$\epsilon$, 
$z \in \Com \setminus [0,+\infty)$ is an eigenvalue of
$H_\alpha + \epsilon V$  
if, and only if, $z$~is a root of~\eqref{eq:e_quadratic}.
\end{Proposition}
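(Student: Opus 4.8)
The plan is to consolidate the chain of equivalences already assembled in the paragraphs preceding the statement, so the proof amounts to bookkeeping and the only genuinely delicate point is a degenerate case. First I would invoke the previous Proposition: provided $\epsilon$ is small enough that $1 + \epsilon Q_z$ is invertible (Lemma~\ref{Le:Q_is_HS}), a point $z \in \Com\setminus[0,+\infty)$ is an eigenvalue of $H_\alpha + \epsilon V$ if, and only if, $-1$ is an eigenvalue of the $2\times 2$ matrix $\epsilon Y_z\Wb(\epsilon)$; equivalently, if, and only if, the homogeneous linear system~\eqref{eq:reformulation_to_alg_problem} admits a non-trivial solution $f = (f_1,f_2)\in\Com^2$.

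Next I would translate the condition ``$-1$ is an eigenvalue of $\epsilon Y_z\Wb(\epsilon)$'' into the vanishing of the characteristic determinant $\det\big(I + \epsilon Y_z\Wb(\epsilon)\big)$. Using the diagonal form of $Y_z$ from~\eqref{eq:Y}, a direct computation of this $2\times 2$ determinant produces the scalar equation~\eqref{eq:main_eq_for_z}. The case distinction recorded in the text --- according to whether some off-diagonal entry of $\Wb(\epsilon)$ is non-zero, or the matrix is diagonal so that~\eqref{eq:reformulation_to_alg_problem} decouples into the two eigenvalue equations for $(f_1,0)$ and $(0,f_2)$ --- corresponds merely to the two ways that this same determinant can factor; in the diagonal case $\det\Wb(\epsilon) = \Wb_{11}(\epsilon)\Wb_{22}(\epsilon)$, so~\eqref{eq:main_eq_for_z} becomes the product of the two decoupled scalar equations and nothing is gained or lost. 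Hence in all cases the solvability of~\eqref{eq:reformulation_to_alg_problem} is equivalent to the existence of a root of~\eqref{eq:main_eq_for_z}.

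Finally, I would substitute the abbreviations $a_\epsilon = \Wb_{11}(\epsilon)$, $b_\epsilon = \Wb_{22}(\epsilon)$, $c_\epsilon = \det\Wb(\epsilon)$ --- which coincide with the quantities defined just before the statement, since $\Wb(\epsilon) = U + U_1(\epsilon)$ --- into~\eqref{eq:main_eq_for_z} and observe that it is literally~\eqref{eq:e_quadratic}. I do not expect a serious obstacle: the only point meriting care is the degenerate diagonal case, where one should check that factoring $\det\Wb(\epsilon)$ neither introduces spurious roots of~\eqref{eq:e_quadratic} nor discards any solution of~\eqref{eq:reformulation_to_alg_problem}; apart from that, one only needs to carry along the smallness threshold on~$\epsilon$ inherited, via the preceding Proposition, from the invertibility of $1 + \epsilon Q_z$.
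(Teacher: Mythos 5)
Your proposal is correct and takes essentially the same route as the paper, whose own ``proof'' is precisely the discussion preceding the statement: reduce via the preceding Proposition to the matrix problem~\eqref{eq:reformulation_to_alg_problem}, observe that $-1$ is an eigenvalue of $\epsilon Y_z\Wb(\epsilon)$ exactly when $\det\big(I+\epsilon Y_z\Wb(\epsilon)\big)=0$, which is~\eqref{eq:main_eq_for_z}, and then substitute the abbreviations $a_\epsilon$, $b_\epsilon$, $c_\epsilon$ to arrive at~\eqref{eq:e_quadratic}. Your determinant formulation in fact streamlines the paper's case distinction between non-vanishing off-diagonal entries and the diagonal case, which your single identity $\det\Wb(\epsilon)=\Wb_{11}(\epsilon)\Wb_{22}(\epsilon)$ in the latter case renders unnecessary.
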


In this way, the eigenvalue problem for a differential operator
has been reduced to an implicit equation.
Since we have not been able to systematically analyse~\eqref{eq:e_quadratic}
in the general case (particular results can be derived, of course), 
let us restrict to the case of 
diagonal potentials $V = \diag(V_{11}, V_{22})$.

\begin{Theorem}\label{thm:asymptotics}
Suppose Assumption~\ref{as:potential} and assume that~$V$ is diagonal.
	\begin{enumerate}
	 	\item	
		 	If $V_{11} \neq 0$, assume $a_0 \neq 0$
		 	and $\ph\left(-\int_{\Cb} V_{11}(w)|w^2|^{\alpha -1} \dd w \right) \in (1-\alpha)(-\pi, \pi)$. 
		 	Then the operator $H_\alpha+\epsilon V$ 
		 	possesses for all sufficiently small $\epsilon> 0$ 
		 	a discrete eigenvalue~$z_+(\epsilon)$ with the asymptotics 
		\begin{align*}
		 	z_+(\epsilon) = -(-\epsilon a_{\epsilon})^{\frac{1}{1-\alpha}} 
		 		= -(-\epsilon a_0)^{\frac{1}{1-\alpha}} + \mathcal{O}(\epsilon^{\frac{2-\alpha}{1-\alpha}})
   \qquad\text{as} \qquad
		 		\epsilon \rightarrow 0 \,.
		\end{align*}
	
		\item 
			If $V_{22}\neq 0$, assume $b_0\neq 0$
			and $\ph\left(-\int_{\Cb} V_{22}(w)|w^2|^{\alpha} \dd w \right) \in \alpha(-\pi, \pi)$. Then the operator $H_\alpha+\epsilon V$ 
			possesses for all sufficiently small $\epsilon> 0$ 
		 	a discrete eigenvalue~$z_-(\epsilon)$ with the asymptotics 
			\begin{align*} 	
		 	z_-(\epsilon) = -(-\epsilon b_{\epsilon})^{\frac{1}{\alpha}}	
		 		= -(-\epsilon b_0)^{\frac{1}{\alpha}} + \mathcal{O}(\epsilon^{\frac{1+\alpha}{\alpha}})  
		 		\qquad \text{as} \qquad
		 		\epsilon \rightarrow 0 \,.
		\end{align*}
	\end{enumerate}
If both $V_{11}\neq 0$ and $V_{22}\neq 0$ satisfy the assumptions above,
then there are no other discrete eigenvalues $H_\alpha+\epsilon V$
for all sufficiently small~$\epsilon$.
If $V_{11}\neq 0$ (respectively, $V_{22}\neq 0$)
satisfies the assumptions from item~1 (respectively, item~2)
but $V_{22}=0$ (respectively, $V_{11}= 0$),
then $z_+(\epsilon)$ (respectively, $z_-(\epsilon)$)
is the unique discrete eigenvalue of $H_\alpha+\epsilon V$
for all sufficiently small~$\epsilon$.	
\end{Theorem}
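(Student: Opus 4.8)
The plan is to use the hypothesis $V=\diag(V_{11},V_{22})$ to \emph{factorise} the implicit equation~\eqref{eq:e_quadratic} into two independent scalar equations and then solve each of them by a fixed-point argument. First I would record the diagonal reduction: for diagonal~$V$ the polar factors $A,B$ from~\eqref{eq:AB} and hence the matrix-valued function~$D$ from~\eqref{eq:Y} are diagonal, the two functions $G_z^{reg,\pm}$ are the diagonal entries of a diagonal matrix, so the integral operator~$Q_z$ acts block-diagonally on $\sii(\Real^2)\oplus\sii(\Real^2)$; this block-diagonality propagates to $(1+\epsilon Q_z)^{-1}$, to $U$, $U_1(\epsilon)$ and therefore to $\Wb(\epsilon)$. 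Consequently $\Wb_{12}(\epsilon)=\Wb_{21}(\epsilon)=0$, $c_\epsilon=a_\epsilon b_\epsilon$, and~\eqref{eq:e_quadratic} becomes, for $z\in\Com\setminus[0,+\infty)$,
\[
  \bigl(1+\epsilon\,a_\epsilon\,(-z)^{\alpha-1}\bigr)\bigl(1+\epsilon\,b_\epsilon\,(-z)^{-\alpha}\bigr)=0 ,
\]
so by Proposition~\ref{prop:first_equation} $z$ is an eigenvalue of $H_\alpha+\epsilon V$ exactly when one of the two factors vanishes. I would also note that, by~\eqref{eq:U}, $a_0=U_{11}=\int_{\Cb}|D_{11}(w)|^2V_{11}(w)\dd w$ and $b_0=U_{22}=\int_{\Cb}|D_{22}(w)|^2V_{22}(w)\dd w$ (the phases in the entries of~$D$ cancelling against their conjugates), which are positive multiples of the integrals appearing in the phase hypotheses of items~1 and~2; in particular those hypotheses say precisely that $-a_0$ lies in the \emph{open} sector $\{\ph\in(1-\alpha)(-\pi,\pi)\}$ and $-b_0$ in $\{\ph\in\alpha(-\pi,\pi)\}$, and they force $a_0\ne0$, $b_0\ne0$ automatically.

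Next I would solve the first factor, whose vanishing is equivalent to $(-z)^{1-\alpha}=-\epsilon\,a_\epsilon(z)$. Using the principal branch, $z\mapsto u:=(-z)^{1-\alpha}$ is a biholomorphism of $\{\ph(-z)\in(-\pi,\pi)\}$ onto $\{\ph(u)\in(1-\alpha)(-\pi,\pi)\}$, and in the variable~$u$ the equation reads $u=\widetilde\Psi_\epsilon(u):=-\epsilon\,a_\epsilon(-u^{1/(1-\alpha)})$. By Lemma~\ref{Lem.integrand} one has $\|U_1(\epsilon)\|=\mathcal{O}(\epsilon)$ \emph{uniformly} in $z\in\Com\setminus[0,+\infty)$ (the constant being controlled by the uniform bound on $\|Q_z\|_{\mathrm{HS}}$ of Lemma~\ref{Le:Q_is_HS}), so $a_\epsilon(z)=a_0\bigl(1+\mathcal{O}(\epsilon)\bigr)$ uniformly; since $\ph(-a_0)$ is interior to $(1-\alpha)(-\pi,\pi)$ this makes $\widetilde\Psi_\epsilon$ map the closed disc $\overline{B}(-\epsilon a_0,C\epsilon^{2})$, which lies in the sector, into itself for a suitable $C$ and all small~$\epsilon$. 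For the contraction estimate I would differentiate in~$z$: $\partial_z U_1(\epsilon)=\mathcal{O}\bigl(\epsilon\,\|\partial_z Q_z\|_{\mathrm{HS}}\bigr)$, and the asymptotic expansions of $G_z^{reg,\pm}$ established in Section~\ref{Sec.Green} show that the non-smooth $z$-dependence of $Q_z$ enters only through the powers $(\sqrt{-z})^{2\min\{\alpha,1-\alpha\}}$, $(\sqrt{-z})^{2\max\{\alpha,1-\alpha\}}$ and $\mathcal{O}(z\log z)$-terms, whence $\|\partial_z Q_z\|_{\mathrm{HS}}=\mathcal{O}\bigl(|z|^{\min\{\alpha,1-\alpha\}-1}\bigr)$ on the relevant range; combined with the chain-rule factor $|u^{1/(1-\alpha)-1}|=|u|^{\alpha/(1-\alpha)}$ and the scalings $|u|\asymp\epsilon$, $|z|\asymp\epsilon^{1/(1-\alpha)}$, the Lipschitz constant of $\widetilde\Psi_\epsilon$ on the disc turns out to be $\mathcal{O}\bigl(\epsilon^{\min\{\alpha,1-\alpha\}/(1-\alpha)}\bigr)=o(1)$. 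Banach's fixed point theorem then yields a unique $u_+(\epsilon)=-\epsilon a_0+\mathcal{O}(\epsilon^{2})$; moreover any root of the first factor in $\Com\setminus[0,+\infty)$ satisfies $|{-z}|^{1-\alpha}=\epsilon|a_\epsilon(z)|\asymp\epsilon$, so it automatically lies in this disc and is therefore unique. This gives $z_+(\epsilon)=-\bigl(-\epsilon\,a_\epsilon(z_+(\epsilon))\bigr)^{1/(1-\alpha)}$, and expanding $a_\epsilon=a_0(1+\mathcal{O}(\epsilon))$ produces $z_+(\epsilon)=-(-\epsilon a_0)^{1/(1-\alpha)}+\mathcal{O}(\epsilon^{(2-\alpha)/(1-\alpha)})$; since $\ph(-a_0)$ is interior, $z_+(\epsilon)\notin[0,+\infty)$, so by Corollary~\ref{cor:stability_ess_spec} it is a genuine discrete eigenvalue.

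Finally I would treat the second factor $(-z)^{\alpha}=-\epsilon\,b_\epsilon(z)$ in exactly the same way with $\alpha$ replacing $1-\alpha$, obtaining the unique root $z_-(\epsilon)=-(-\epsilon\,b_\epsilon)^{1/\alpha}=-(-\epsilon b_0)^{1/\alpha}+\mathcal{O}(\epsilon^{(1+\alpha)/\alpha})$ under the hypotheses of item~2. If $V_{11}=0$ then $A_{11}=B_{11}=0$, hence $a_\epsilon\equiv0$ and the first factor never vanishes, so only $z_-(\epsilon)$ occurs; symmetrically $V_{22}=0$ leaves only $z_+(\epsilon)$. For the exhaustion statement I would invoke Proposition~\ref{prop:first_equation}: every eigenvalue of $H_\alpha+\epsilon V$ in $\Com\setminus[0,+\infty)$ is a root of one of the two factors, and each factor has at most one such root for small~$\epsilon$, so nothing else can appear (Lemma~\ref{le:decay_of_eigenvalues} additionally confirms that any such eigenvalue, if present, must be the small one we found). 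I expect the main obstacle to be the contraction estimate of the middle step, namely controlling the $z$-Lipschitz modulus of $Q_z$ — equivalently of $a_\epsilon$ and $b_\epsilon$ — near $z=0$, where $Q_z$ is only Hölder-regular in~$z$, and checking that the several exponents conspire to give a Lipschitz constant that is $o(1)$ uniformly in~$\epsilon$; the rest is the bookkeeping of the diagonal reduction and routine Taylor expansions.
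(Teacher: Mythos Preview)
Your approach is essentially that of the paper: exploit the diagonality of~$V$ to make $\Wb(\epsilon)$ diagonal, factorise~\eqref{eq:e_quadratic} as $(1+\epsilon a_\epsilon(-z)^{\alpha-1})(1+\epsilon b_\epsilon(-z)^{-\alpha})=0$, and solve each factor separately using $a_\epsilon=a_0+\mathcal{O}(\epsilon)$, $b_\epsilon=b_0+\mathcal{O}(\epsilon)$ from Lemma~\ref{Lem.integrand}. The paper simply records the implicit relations $(-z_+)^{1-\alpha}=-\epsilon a_\epsilon$, $(-z_-)^\alpha=-\epsilon b_\epsilon$ and Taylor-expands, whereas you supply a Banach fixed-point argument tracking the $z$-dependence of $a_\epsilon,b_\epsilon$ through $\partial_z Q_z$; this extra care is sound but heavier than necessary, since the \emph{uniform} bound $a_\epsilon(z)=a_0+\mathcal{O}(\epsilon)$ already confines any root (in the $u=(-z)^{1-\alpha}$ variable) to an $\mathcal{O}(\epsilon^2)$-disc about $-\epsilon a_0$, and then analyticity of $z\mapsto Q_z$ plus Rouch\'e's theorem yield existence and uniqueness without ever estimating $\partial_z Q_z$.
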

\begin{proof}
If $V$ is diagonal, then so is $\mathbb{W}$, 
and $c_{\epsilon} = a_{\epsilon} b_{\epsilon}$. 
This enables us to factorise the equation \eqref{eq:e_quadratic} as
	\begin{align*}
	 	0 
	 		= (\epsilon a_{\epsilon}(-z)^{-1+\alpha} +1)(\epsilon b_{\epsilon}(-z)^{-\alpha} +1) 
	\end{align*}
	and we immediately obtain two solutions~$z_\pm$ satisfying
		\begin{align*}
	 	(-z_+)^{1-\alpha} 
	 		= -\epsilon a_{\epsilon}
	 	 \quad \text{ and } \quad
	 	(-z_-)^{\alpha} = -\epsilon b_{\epsilon} \,.
	\end{align*} 
	Under our assumptions on the potential these equations 
	have solutions $z_\pm \in \Cb\setminus [0, +\infty)$ 
	for all $\epsilon$ small enough.
	The expansions for $\epsilon \rightarrow 0$ then follow from 
	the Taylor expansions of $z_\pm$ and 
	using~\eqref{eq:coefficients_aprox}.
\end{proof}

Note that the eigenvalues~$z_+(\epsilon)$ and~$z_-(\epsilon)$ 
are eigenvalues of $H_\alpha^+ + \epsilon V_{11}$ 
and $H_\alpha^- + \epsilon V_{22}$, respectively.
If both~$V_{11}$ and~$V_{22}$ are real-valued, non-trivial and non-positive,
we obtain Theorem~\ref{Thm.main} from the introduction.

\subsection*{Acknowledgements}
Thanks belong to Johannes Ageskov and Mat\v{e}j Tu\v{s}ek 
for helpful discussions on some technical details. 
M.F.\ would further like to acknowledge support for research on this paper from 
the European Union’s Horizon 2020 research and innovation programme under the Marie Sk{\l}odowska-Curie grant
agreement No 101034413
as well as support by funding from VILLUM FONDEN through the
QMATH Centre of Excellence grant nr.~10059. 
D.K.\ was supported by the EXPRO grant number 20-17749X 
of the Czech Science Foundation (GA\v{C}R).

%
\bibliographystyle{amsplain}
\bibliography{Paulibib05} 

\end{document}